\DeclarePairedDelimiter{\floor}{\lfloor}{\rfloor}
\newcommand*{\QEDB}{\hfill\ensuremath{\square}}
\newcommand{\ie}{\emph{i.e.}}
\newcommand\dsb[1]{\llbracket #1 \rrbracket}
\newtheorem{theorem}{Theorem}[section]
\newtheorem{lemma}[theorem]{Lemma}
\newtheorem{proposition}{Proposition}
\theoremstyle{definition}
\newtheorem{definition}[theorem]{Definition}
\newtheorem{example}{Example}
\title[Improved Quantum Codes from Metacirculant Graphs] 
      {Improved Quantum Codes from Metacirculant Graphs via Self-Dual Additive $\mathbb{F}_4$-Codes}
\author[Padmapani Seneviratne and Martianus Frederic Ezerman]{}
\subjclass{Primary: 94B25, 81P73; Secondary: 05C75.}
 \keywords{additive codes, graph codes, metacirculant graph, quantum codes, self-dual codes.}
 \email{Padmapani.Seneviratne@tamuc.edu}
 \email{fredezerman@ntu.edu.sg}
\thanks{Nanyang Technological University Grant Number 04INS000047C230GRT01 supports the research carried out by M.~F.~Ezerman.}
\thanks{$^*$ Corresponding author: Martianus Frederic Ezerman}
\begin{document}
\maketitle

\centerline{\scshape Padmapani Seneviratne}
\medskip
{\footnotesize
 \centerline{Department of Mathematics, Texas A\&M University-Commerce}
   \centerline{2600 South Neal Street, Commerce TX 75428, USA}
} 

\medskip

\centerline{\scshape Martianus Frederic Ezerman$^*$}
\medskip
{\footnotesize
 \centerline{School of Physical and Mathematical Sciences, Nanyang Technological University}
   \centerline{21 Nanyang Link, Singapore 637371}
}

\bigskip

 \centerline{(Communicated by the associate editor name)}

\begin{abstract}
We use symplectic self-dual additive codes over $\mathbb{F}_4$ obtained from metacirculant graphs to construct, for the first time, $\dsb{\ell, 0, d}$ qubit codes with parameters $(\ell,d) \in \{(78, 20), (90, 21), (91, 22), (93,21),(96,21)\}$. Secondary constructions applied to the qubit codes result in many qubit codes that perform better than the previous best-known.
\end{abstract}

\section{Introduction}\label{sec:intro}

We work on three closely connected objects, namely, a \emph{metacirculant graph} $G$, a \emph{symplectic self-dual additive code} $C$ over $\mathbb{F}_4$, and its corresponding \emph{quantum stabilizer code} $Q$. The route is straightforward. Let $I$ be the identity matrix of a suitable dimension and $\omega$ be a root of $x^2+x+1 \in \mathbb{F}_2[x]$. We search for a $G$ whose adjacency matrix $A(G)$ leads to $C$, which is generated by the row span of $A(G)+ \omega I$. This code $C$, in turn, yields $Q$ via the stabilizer method. 

For lengths $\ell \in \{27,36\}$ exhaustive searches are feasible, allowing us to construct some families of additive codes that contain some code $C$ with an improved minimum distance when compared with the best that circulant graphs can lead to. For lengths $\ell \in \{78, 90, 91, 93, 96\}$ we run randomized non-exhausted searches to come up with the improved additive codes. We exhibit numerous instances when the resulting qubit codes $Q$ have better parameters than the best-known comparable quantum codes. 

The general construction of quantum stabilizer codes, wherein classical codes are used to describe the quantum error operators, is well-established. Our main reference for the qubit case is the seminal work of Calderbank, Rains, Shor, and Sloane in \cite{Calderbank1998}. Two recent introductory expositions can be found in \cite{Grassl20} and \cite{Ezerman2021}.

To get to metacirculant graphs, we recall \emph{circulant graphs}, which have been more extensively studied. A recent survey on the subject can be found in \cite{Monakhova2012}. The following definition of circulant graphs is given in \cite{Bogdanowicz1996}. 

\begin{definition}
Let $\mathbb{Z}_{n}$ denote the ring of integers modulo $n$ and let
\[
\mathbb{Z}_{n}^{*} := \{x \in \mathbb{Z}_n : 0 < x \le n/2 \}.
\]
 The circulant graph $\Gamma_{n}(S)$ is the graph with the vertex set $\mathbb{Z}_{n}$ where any two vertices $x$ and $y$ are adjacent if and only if $|x-y|_{n} \in S$, with $S \subseteq \mathbb{Z}_{n}^{*}$ and 
\[
\begin{cases}
|a|_{n} := |a| & \mbox{if } 0 \le a \le n/2, \\
|a|_{n} := n-a & \mbox{if } n/2 < a < n.
\end{cases}
\]
\end{definition}

The adjacency matrix of a circulant graph is a circulant matrix. An $n \times n$ matrix $A$ is called a {\it circulant matrix} if it has the form
\begin{equation}~\label{eq:one}
	A = 
	\begin{pmatrix}
		a_1 & a_2 & \cdots & a_{n-1} & a_n \\
		a_n & a_1 & \cdots & a_{n-2} & a_{n-1} \\
		\vdots  & \vdots  & \ddots & \vdots & \vdots \\
		a_3 & a_4 & \cdots & a_1 &a_2 \\
		a_2 & a_3 & \cdots & a_n &a_1 
	\end{pmatrix}.
\end{equation}	

If the adjacency matrix $A:=A(G)$ of $G$ is circulant, then 
$a_1 = 0$ and $a_i = a_{n+2-i}$ for $i \in \{2,\ldots, \floor{ n/2}\}$. Circulant matrices are used as building blocks in the constructions of many different classes of codes. Examples include self-dual codes, cyclic codes, and quadratic residue codes.

Circulant graphs are {\it vertex transitive} \cite{Bogdanowicz1996}. They are the Cayley graphs of $\mathbb{Z}_n$. In 1982, Alspach and Parsons~\cite{Alspach1982} constructed a family of vertex transitive graphs. Each graph in the family has a transitive permutation group as a subgroup of its automorphism group. They named the family {\it metacirculant graphs} as it contains the class of circulant graphs. 

Li {\it et al.} in \cite[Definition 1.1]{LSW2013}, following D. Maru\v{s}i\v{c} in \cite{Marusic2003}, call a graph $\Gamma=(V,E)$ an $(m,n)$-metacirculant if $|V|= m n$ and $\Gamma$ has two automorphisms $\rho$ and $\sigma$ that satisfy some conditions. First, $\left\langle \rho \right\rangle$ is semiregular and has $m$ orbits on $V$. Second, $\sigma$ cyclically permutes the $m$ orbits of $\left\langle \rho \right\rangle$ and normalizes $\left\langle \rho \right\rangle$. Third, $\sigma^m$ fixes at least one vertex of $\Gamma$. In this work, we follow an equivalent combinatorial definition.

\begin{definition} (\cite{Alspach1982})\label{def:mc}
Let $m,n$ be two fixed positive integers and $\alpha\in \mathbb{Z}_{n}$ be a unit. Let $S_{0}, S_{1}, \ldots, S_{\floor{m/2}} \subseteq \mathbb{Z}_{n}$ satisfy the four properties
\begin{enumerate}
	\item $S_{0} = -S_{0}$.
	\item $0\notin S_{0}$.
	\item $\alpha^{m}S_k = S_k$ for $1\le k\le \lfloor m/2 \rfloor$.
	\item If $m$ is even then $\alpha^{m/2}S_{m/2} = -S_{m/2}$. 
\end{enumerate}
The meta-circulant graph $\Gamma:=\Gamma\left(m, n, \alpha, S_{0}, S_{1}, \ldots, S_{\floor{m/2}}\right)$ has the vertex set $V(\Gamma)=\mathbb{Z}_{m}\times \mathbb{Z}_{n}$. Let $V_0, V_1, \ldots, V_{m-1}$, where 
$V_{i} := \{(i, j) : 0\le j \le n-1 \}$ is a partition of $V(\Gamma)$. Let $1 \le k \le \floor{m/2}$. Vertices $(i, j)$ and $(i+k, h)$ are adjacent if and only if $(h-j) \in \alpha^{i} \, S_k$.
\end{definition}

Henceforth, we assume $m > 1$, since a metacirculant graph with $m=1$ is a circulant graph.

\begin{example}
The Petersen graph is $\Gamma(2,5,2,\{1,4\}, \{0\})$. The vertices are partitioned into 
\[
V_0 :=\{(0,0),(0,1),(0,2),(0,3),(0,4)\} \mbox{ and }
V_1 :=\{(1,0),(1,1),(1,2),(1,3),(1,4)\},
\]
with adjacent vertices $(0,j)$ and $(1,j)$, for $0 \leq j \leq 4$. Figure~\ref{fig:peter} relabels the vertices lexicographically, that is, the upper layer contains the vertices in $V_0$ as $1, \ldots, 5$ while the lower layer presents the vertices in $V_1$ as $6, \ldots, 10$.
	
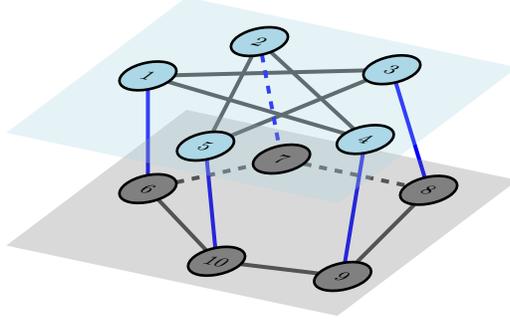
\begin{figure}[h]\label{fig:peter}
		\centering
		\begin{tikzpicture}[multilayer=3d]
			\SetLayerDistance{-1.5}
			\Plane[x=-.7,y=-1.5,width=4.5,height=3,color=gray,layer=2,NoBorder]
			\Plane[x=-.7,y=-1.5,width=4.5,height=3,NoBorder]
			\Vertex[IdAsLabel,layer=1]{1}
			\Vertex[x=0.7,y=1,IdAsLabel,layer=1]{2}
			\Vertex[x=2.5,y=1,IdAsLabel,layer=1]{3}
			\Vertex[x=3.2,y=-0.3,IdAsLabel,layer=1]{4}
			\Vertex[x=1.6,y=-1,IdAsLabel,layer=1]{5}
			\Vertex[IdAsLabel,color=gray,layer=2]{6}
			\Vertex[x=1,y=1,IdAsLabel,color=gray,layer=2]{7}
			\Vertex[x=3.0,y=1,IdAsLabel,color=gray,layer=2]{8}
			\Vertex[x=3.3,y=-0.8,IdAsLabel,color=gray,layer=2]{9}
			\Vertex[x=1.75,y=-1,IdAsLabel,color=gray,layer=2]{10}
			\Edge[style=dashed](6)(7)
			\Edge(6)(10)
			\Edge(1)(4)
			\Edge(1)(3)
			\Edge(2)(5)
			\Edge(2)(4)
			\Edge(3)(5)
			\Edge[style=dashed](7)(8)
			\Edge(8)(9)
			\Edge(9)(10)
			\Edge[color=blue](1)(6)
			\Edge[style=dashed,color=blue](2)(7)
			\Edge[color=blue](5)(10)
			\Edge[color=blue](4)(9)
			\Edge[color=blue](3)(8)
\end{tikzpicture}
\caption{The Petersen Graph as $\Gamma(2,5,2,\{1,4\}, \{0\})$.}
\end{figure}
\end{example}

The following fact is well-known.

\begin{theorem} {\rm (\cite{Alspach1982})} Let $\phi(\cdot)$ denote the Euler totient function. If $\gcd(m,n) =1$ and $\gcd(m,\phi(n))=1$, then the  metacirculant $\Gamma\left(m, n, \alpha, S_{0}, S_{1}, \ldots, S_{\floor{m/2}}\right)$ is isomorphic to a circulant graph.
\end{theorem}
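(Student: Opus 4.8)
I would prove the result by showing that the two coprimality hypotheses force the metacirculant to be a Cayley graph of the \emph{abelian} group $\mathbb{Z}_m\times\mathbb{Z}_n$, and then identifying that group with $\mathbb{Z}_{mn}$ via the Chinese Remainder Theorem.

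The crux is a short arithmetic observation about the unit $\alpha$. Its multiplicative order $r$ in $\mathbb{Z}_n^{\times}$ divides $\phi(n)=|\mathbb{Z}_n^{\times}|$, so $\gcd(m,\phi(n))=1$ forces $\gcd(m,r)=1$; hence $\alpha^m$ again has order $r$, and therefore $\langle\alpha^m\rangle=\langle\alpha\rangle$ inside $\mathbb{Z}_n^{\times}$. Writing $\alpha=(\alpha^m)^t$ for a suitable $t$ and iterating property~(3) of Definition~\ref{def:mc} (together with the analogous relation $\alpha^m S_0=S_0$, which is needed anyway for the within-part adjacency to be well defined modulo $m$), one gets $\alpha S_k=S_k$, and hence $\alpha^i S_k=S_k$ for all $i$ and all $k$. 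Equivalently, each $S_k$ is a union of $\langle\alpha\rangle$-orbits, so the factor $\alpha^i$ disappears from the adjacency rule: $(i,j)$ and $(i+k,h)$ are adjacent precisely when $h-j\in S_k$, with no remaining dependence on $i$.

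It follows that adjacency depends on an ordered pair of vertices only through its difference in $\mathbb{Z}_m\times\mathbb{Z}_n$, so $\Gamma=\mathrm{Cay}(\mathbb{Z}_m\times\mathbb{Z}_n,\,T)$ where
\[
T=\bigl(\{0\}\times S_0\bigr)\ \cup\ \bigcup_{k=1}^{\floor{m/2}}\Bigl(\bigl(\{k\}\times S_k\bigr)\cup\bigl(\{-k\}\times(-S_k)\bigr)\Bigr).
\]
One checks that $T$ is a legitimate connection set: $T=-T$ follows from $S_0=-S_0$ in~(1) and, when $m$ is even, from the observation that property~(4) combined with $\alpha$-invariance gives $S_{m/2}=-S_{m/2}$, which is exactly what makes the $k=m/2$ term symmetric since $m/2\equiv-m/2\pmod m$; and $0\notin T$ follows from $0\notin S_0$ in~(2) and from $k\ge 1$ in the remaining terms. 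Finally, $\gcd(m,n)=1$ yields a group isomorphism $\mathbb{Z}_m\times\mathbb{Z}_n\xrightarrow{\ \sim\ }\mathbb{Z}_{mn}$; transporting $T$ across it gives a symmetric, origin-free set $T'\subseteq\mathbb{Z}_{mn}$, whence $\Gamma\cong\mathrm{Cay}(\mathbb{Z}_{mn},T')=\Gamma_{mn}(S)$ with $S=\{\,|t|_{mn}:t\in T'\,\}$, a circulant graph.

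The only substantive point is the arithmetic passage from the $\alpha^m$-invariance guaranteed by property~(3) to full $\alpha$-invariance of the $S_k$; once the adjacency rule has collapsed, the Cayley description and the CRT identification are routine, with properties~(1), (2) and~(4) entering only in the check that $T$ is symmetric and avoids the origin.
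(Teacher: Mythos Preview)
The paper does not prove this theorem; it merely quotes it as a well-known fact from Alspach and Parsons~\cite{Alspach1982}, so there is no in-paper proof to compare against.

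Your argument is correct and is essentially the natural one (and, as far as I can tell, the one in the original reference). The crucial step---that $\gcd(m,\phi(n))=1$ forces $\langle\alpha^m\rangle=\langle\alpha\rangle$, upgrading the $\alpha^m$-invariance of the $S_k$ to full $\alpha$-invariance and thereby collapsing the $\alpha^i$-twist in the adjacency rule---is exactly right, and the Cayley/CRT wrap-up is routine. One small remark on your parenthetical about $\alpha^m S_0=S_0$: your justification (``needed for the within-part adjacency to be well defined modulo $m$'') is fine if one reads $i\in\mathbb{Z}_m$ abstractly and demands that $\alpha^i S_0$ be independent of the chosen representative; equivalently and perhaps more concretely, it is precisely what is needed for the map $\sigma\colon(i,j)\mapsto(i+1,\alpha j)$ to remain a graph automorphism when wrapping from layer $m-1$ back to layer $0$, which is what makes metacirculants vertex-transitive in the first place. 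The paper's Definition~\ref{def:mc} states property~(3) only for $1\le k\le\lfloor m/2\rfloor$, but the original Alspach--Parsons definition includes $k=0$, and you are right to invoke it.
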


The next lemma determines the valence.

\begin{lemma}\label{lem:valence}
Any $G:=\Gamma \left(m > 1, n, \alpha, S_{0}, S_{1}, \ldots, S_{\floor{m/2}} \right)$ is regular with valence
\begin{equation}\label{eq:valence}
\begin{cases} 
	|S_0| + |S_1| & \mbox{if } m=2,\\
	|S_0| + |S_{\floor{ m/2}}| + 
	2 \left(\sum_{r=1}^{\floor{m/2}-1} |S_r|\right) & \mbox{if } m \geq 4 \mbox{ is even},\\
	|S_0| + 2 \left(\sum_{r=1}^{\floor{m/2}} |S_r|\right) & \mbox{if } m \geq 3 \mbox{ is odd}.
\end{cases}
\end{equation}
\end{lemma}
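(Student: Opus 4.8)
The plan is to compute, for a fixed vertex, the number of neighbors contributed by each ``type'' of adjacency allowed by Definition~\ref{def:mc}, and then sum over types. Fix an arbitrary vertex $v = (i,j) \in V(\Gamma)$. By vertex transitivity (the metacirculant admits the semiregular automorphism $\rho$ acting on the second coordinate together with the cyclic shift $\sigma$), the degree of $v$ is independent of $i$ and $j$, so it suffices to count neighbors of, say, $(0,0)$; I will nonetheless keep $i$ general to make the role of $\alpha^i$ transparent. The adjacencies from $v=(i,j)$ fall into two kinds: ``horizontal'' edges inside the block $V_i$ governed by $S_0$, and ``skew'' edges between $V_i$ and $V_{i\pm k}$ for $1 \le k \le \floor{m/2}$ governed by $S_k$.

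First I would handle the $S_0$ contribution. Within $V_i$, vertex $(i,j)$ is adjacent to $(i,h)$ precisely when $h - j \in \alpha^i S_0$; since $\alpha$ is a unit, $|\alpha^i S_0| = |S_0|$, and because $0 \notin S_0$ and $S_0 = -S_0$ (properties (1)--(2)) these are genuine, well-defined neighbors inside the block, giving exactly $|S_0|$ of them. Next I would count, for each $k$ with $1 \le k \le \floor{m/2}$, the neighbors of $(i,j)$ lying in $V_{i+k}$ and (when $i+k \ne i-k$ as indices mod $m$) in $V_{i-k}$. The block $V_{i+k}$ contributes $\{(i+k,h) : h - j \in \alpha^i S_k\}$, which has size $|S_k|$. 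The block $V_{i-k}$: rewriting the adjacency rule with the smaller index first, $(i-k, h)$ is adjacent to $(i-k+k, j) = (i,j)$ iff $j - h \in \alpha^{i-k} S_k$, i.e. $h - j \in -\alpha^{i-k} S_k$, again a set of size $|S_k|$. So each $k$ in the range $1 \le k \le \floor{m/2}$ for which $k$ and $m-k$ are distinct residues mod $m$ contributes $2|S_k|$, while any $k$ with $k \equiv m-k \pmod m$ contributes only $|S_k|$.

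The combinatorial heart is therefore to identify, as a function of the parity of $m$, exactly which indices $k \in \{1, \dots, \floor{m/2}\}$ satisfy $k \equiv m-k \pmod m$ — equivalently $2k \equiv 0 \pmod m$. When $m$ is odd, $2k \equiv 0$ forces $k \equiv 0$, impossible in the range $1 \le k \le \floor{m/2}$; hence every $S_k$ with $1 \le k \le \floor{m/2} = (m-1)/2$ is doubled, yielding $|S_0| + 2\sum_{r=1}^{\floor{m/2}} |S_r|$. When $m$ is even, $2k \equiv 0 \pmod m$ in the range $1 \le k \le m/2$ happens exactly for $k = m/2$, so $S_{m/2}$ is counted once and $S_1,\dots,S_{m/2-1}$ are doubled, giving $|S_0| + |S_{\floor{m/2}}| + 2\sum_{r=1}^{\floor{m/2}-1} |S_r|$. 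The stated $m=2$ case is just this even formula with the empty sum removed: the block $V_{i+1} = V_{i-1}$ coincides, so we get $|S_0| + |S_1|$ directly, consistent with property (4), $\alpha^{m/2} S_{m/2} = -S_{m/2}$, which guarantees the edge relation into that single self-paired block is symmetric and hence well-defined.

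I expect the main obstacle to be a bookkeeping subtlety rather than a deep difficulty: one must be careful that the two descriptions of an edge between $V_i$ and $V_{i-k}$ — obtained by reading the adjacency rule ``from'' $(i,j)$ versus ``from'' $(i-k,h)$ — do not double count when $i+k$ and $i-k$ name the same block, and that for $m$ even the set $S_{m/2}$ is not inadvertently counted twice. Property (4) is exactly what makes the $k=m/2$ relation consistent, and properties (1)--(2) are what make the $S_0$ edges loop-free and symmetric; invoking them at the right moments is the only place where care is needed. Once the index analysis above is pinned down, the three cases of \eqref{eq:valence} follow by direct summation, and regularity is immediate since the count never depended on $j$ and depended on $i$ only through the unit $\alpha^i$, which does not change cardinalities.
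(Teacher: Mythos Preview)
Your proposal is correct and follows essentially the same route as the paper: both arguments use vertex transitivity to reduce to a single vertex, then count neighbors block by block, distinguishing the self-paired block $V_{i+m/2}=V_{i-m/2}$ (present only when $m$ is even) from the genuinely paired blocks $V_{i+k}\neq V_{i-k}$. The paper works directly at $(0,0)$ while you keep $i$ general and track $\alpha^i$; you are also a bit more explicit than the paper in invoking property~(4) to explain why the forward and backward descriptions of the $k=m/2$ adjacencies coincide, but these are presentational differences rather than a different method.
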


\begin{proof}
Since the automorphism group of $G$ acts transitively on its vertices, $G$ is a regular graph. Hence, it suffices to determine the degree of vertex $(0, 0) \in V_{0}$. The vertex is adjacent to a vertex $(0,h) \in V_0$ if and only if $h\in S_0$. This implies that $(0, 0)$ is adjacent to $|S_0|$ vertices in $V_0$. 
	
Next, $(0,0)$ is adjacent to a vertex $(k,h) \in V_k$ if and only if $h \in S_k$. Hence, $(0,0)$ is adjacent to $|S_k|$ vertices in $V_k$. The set $S_k$ determines the edges between layers $V_0$ and $V_{k}$ whose indices differ by $k$ for $1 \leq k \leq \floor{m/2}$. When $m$ is odd, the index $0$ of $V_0$ differs by $k$ from both the indices $k$ and $m-k$ of $V_k$ and $V_{m-k}$ for all $1 \leq k \leq \floor{m/2}$. When $m$ is even, $V_{m/2} = -V_{m/2}$, which implies that the index $0$ of $V_0$ differs by $k$ from both the indices $k$ and $m-k$ of $V_k$ and $V_{m-k}$ for all $1 \leq k \leq \floor{m/2}-1$. Thus, the degree of vertex $(0,0)$ is as given in Equation~\ref{eq:valence}.
\end{proof}

\begin{theorem}\label{thm:mult}
Let $S := \big\{S_0, S_1, \ldots, S_{\floor{m/2}}\big\}$. Let the graph $G:=\Gamma(m,n,\alpha, S)$ be given. Let $\{V_i : i \in \{0, 1, \ldots, m-1\}\}$, with $V_{i} := \{(i, j) : 0 \le j < n \}$,
be the partition of the vertex set into $m$ layers, each containing $n$ vertices. Then $G$ is a multi-partite metacirculant graph with $m$ partitions if and only if $S_0$ is the empty set. 
\end{theorem}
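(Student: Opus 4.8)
The plan is to unwind both directions directly from Definition~\ref{def:mc}, using the within-layer adjacency rule, namely the $k=0$ instance of the definition as already invoked in the proof of Lemma~\ref{lem:valence}: two vertices $(i,j)$ and $(i,h)$ lying in the same layer $V_i$ are adjacent if and only if $h-j \in \alpha^{i} S_0$. The phrase ``multi-partite metacirculant graph with $m$ partitions'' refers to the layer partition $\{V_0, V_1, \ldots, V_{m-1}\}$ of $V(\Gamma)$ being a partition into $m$ independent sets, so the statement reduces to the assertion that every layer $V_i$ is an independent set of $G$ if and only if $S_0 = \emptyset$.

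For the ``if'' direction, assume $S_0 = \emptyset$. Since $\alpha$ is a unit of $\mathbb{Z}_n$, we have $\alpha^{i} S_0 = \emptyset$ for every $i \in \{0,1,\ldots,m-1\}$, so by the $k=0$ adjacency rule no two distinct vertices of a common layer $V_i$ are adjacent. Hence each layer is independent, and $\{V_0, \ldots, V_{m-1}\}$ exhibits $G$ as multi-partite with $m$ partitions. For the ``only if'' direction, assume $G$ is multi-partite with the $m$ layers as its parts; in particular $V_0$ is independent. Suppose, toward a contradiction, that $S_0 \neq \emptyset$ and pick $s \in S_0$. Property~(2) of Definition~\ref{def:mc} gives $s \neq 0$, so $(0,0)$ and $(0,s)$ are distinct vertices of $V_0$; applying the adjacency rule with $i = 0$, $k = 0$ (which is a symmetric relation because $S_0 = -S_0$ by property~(1)) shows that $(0,0)$ and $(0,s)$ are adjacent, contradicting the independence of $V_0$. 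Therefore $S_0 = \emptyset$.

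The argument is a direct translation of the definitions, so I do not anticipate a substantive obstacle. The only points that require care are stating the within-layer adjacency convention explicitly, since Definition~\ref{def:mc} writes the adjacency rule only for $1 \le k \le \floor{m/2}$, and invoking properties~(1) and~(2) to guarantee both that this relation is symmetric and that any element of $S_0$ produces a genuine pair of distinct, mutually adjacent vertices inside a single layer.
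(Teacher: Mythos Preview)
Your proof is correct and follows essentially the same approach as the paper: both unwind the $k=0$ instance of the adjacency rule to show that intra-layer edges exist precisely when $S_0$ is nonempty. If anything, your version is slightly more careful than the paper's, which only spells out the ``if'' direction explicitly; you treat both directions separately and invoke properties~(1) and~(2) of Definition~\ref{def:mc} to ensure the within-layer relation is symmetric and that a witness $s \in S_0$ produces a genuine edge.
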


\begin{proof}
 By Definition~\ref{def:mc}, two vertices $(i, j)$ and $(i+k, h)$ are adjacent in $\Gamma(m,n,\alpha,S)$ if and only if $h-j \in \alpha^{i} \, S_k$. Both vertices are in the same layer $V_i$ whenever $k=0$. Hence, the two vertices $(i, j)$ and $(i, h)$ in $V_i$ are adjacent if and only if $h-j \in \alpha^{i} \, S_0$. The set $S_0$ being empty implies $h-j \notin \alpha^{i} \, S_0$. Thus, there is no edge between any pair of vertices within the same layer $V_i$.
\end{proof}

\section{Self-dual additive codes from metacirculant graphs}\label{sec:SDC}

A code over $\mathbb{F}_4 :=\{0,1,\omega,\overline{\omega}= \omega^2=1+\omega\}$ is said to be {\it additive} if it is $\mathbb{F}_2$-linear, \ie, the code is closed under addition but closure under multiplication by the elements in $\mathbb{F}_4 \setminus \mathbb{F}_2$ is not required. An $\mathbb{F}_4$-linear code is additive. An element $\mathbf{c}$ of $C$ is called a codeword of $C$. The {\it weight} of $\mathbf{c}$ is the number of nonzero entries that it has. The {\it minimum distance} of $C$ is the least non-zero weight of all codewords in $C$. If $C$ is an additive code of length $\ell$ over $\mathbb{F}_4$, of size $2^k$ and minimum distance $d$, then we denote $C$ by $(\ell, 2^k, d)_4$.

The {\it trace Hermitian inner product} of $\mathbf{x}=(x_1,\ldots,x_n)$ and $\mathbf{y}=(y_1,\ldots,y_n)$ in $\mathbb{F}_4^n$ is given by
\begin{equation}
	\mathbf{x} * \mathbf{y} = \sum_{j=1}^n \left( x_j y_j^2 + x_j^2 y_j\right).
\end{equation}
Given an additive code $C$, its {\it symplectic dual} $C^*$ is 
\[
C^* = \{ \mathbf{x} \in \mathbb{F}_4^n : \mathbf{x} * \mathbf{c} =0 \mbox{ for all } \mathbf{c} \in C\}
\]
and $C$ is said to be {\it (symplectic) self-dual} if $C = C^*$.

An additive $\mathbb{F}_4$ self-dual code is called {\it Type II} if all of its codewords have even weights. A code which is not Type II is called {\it Type I}. It is well-known that Type II codes must have even lengths.

Self-dual codes, under various inner products and possible alphabet sets, have been extensively studied due to their rich algebraic, combinatorial, and geometric structures. A major reference on this topic is the book \cite{NRS06} authored by Nebe, Rains, and Sloane. Of particular relevance to our family of additive self-dual codes here, labeled as family $4^{H+}$ in \cite{NRS06}, is the treatment in Section 6 of Chapter 7 and in Chapter 11. 

Let $A(\Gamma)$ be the adjacency matrix of the graph $\Gamma$ and $I$ be the identity matrix. The following nice result was first shown by Schlingemann in~\cite{Schlingemann2002} and subsequently discussed in \cite[Section~3]{Danielsen2006}. Every graph represents a self-dual additive code over $\mathbb{F}_4$ and every self-dual additive code over $\mathbb{F}_4$ can be represented by a graph. In particular, the additive $\mathbb{F}_4$-code $C:=C(\Gamma)$ generated by the row span of the matrix $A(\Gamma) + \omega I$ is symplectic self-dual. 

Danielsen and Parker gave a complete classification of \emph{all} self-dual additive codes over $\mathbb{F}_4$ for $n \leq 12$ in~\cite{Danielsen2006}. Follow-up works, covering $n \leq 50$, were contributed by Gulliver and Kim in~\cite{GK2004}, by Varbanov in~\cite{Varbanov2008}, by Grassl and Harada in~\cite{Grassl2017}, and by Saito in~\cite{Saito2019}. Their collective efforts focused on codes derived from graphs whose adjacency matrices are either {\it circulant} or {\it bordered circulant}.

The following result classifies Type I and Type II additive $\mathbb{F}_4$ self-dual codes generated by metacirculant graphs. 

\begin{theorem}\label{thm:even}
Let $C$ be an additive $\mathbb{F}_4$ self-dual code generated by 
\[
\Gamma\left(m,n,\alpha, S_{0}, S_{1}, \ldots, S_{\floor{m/2}}\right)\mbox{, with }
2 \mid (mn).
\]
Let
\begin{equation}\label{eq:even}
\Delta_S :=
\begin{cases} 
	|S_0|  & \mbox{if } m \mbox{ is odd},\\
	|S_0| + |S_{\floor{m/2}}|  & \mbox{if } m \mbox{ is even}.
\end{cases}
\end{equation}
Then $C$ is Type II if and only if $\Delta_S$ is odd.
\end{theorem}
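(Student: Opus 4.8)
The plan is to reduce the Type~I/Type~II dichotomy for $C$ to a parity computation for the valence of $\Gamma$, the bridge being that ``weight modulo $2$'' is an $\mathbb{F}_2$-linear invariant on a self-dual additive $\mathbb{F}_4$-code.

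\emph{Step 1: weight modulo $2$ is additive on $C$.} For $\mathbf{a},\mathbf{b}\in\mathbb{F}_4^{\ell}$ let $e$ be the number of coordinates $j$ with $a_j=b_j\neq 0$ and $u$ the number of coordinates $j$ with $a_j\neq b_j$ and $a_j b_j\neq 0$. A coordinatewise check, using that $1+\omega+\overline{\omega}=0$ so that the sum of two distinct nonzero elements of $\mathbb{F}_4$ is the third (nonzero) one, gives
\[
\operatorname{wt}(\mathbf{a}+\mathbf{b}) \;=\; \operatorname{wt}(\mathbf{a})+\operatorname{wt}(\mathbf{b})-2e-u .
\]
A second short case check shows that $a_j b_j^2 + a_j^2 b_j$ equals $1$ exactly on the $u$ coordinates above and is $0$ on all others, so $\mathbf{a}*\mathbf{b}$ vanishes if and only if $u$ is even. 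Since $C=C^*$, every pair of codewords satisfies $\mathbf{a}*\mathbf{b}=0$, hence $u$ is even and $\operatorname{wt}(\mathbf{a}+\mathbf{b})\equiv\operatorname{wt}(\mathbf{a})+\operatorname{wt}(\mathbf{b})\pmod 2$. Thus $\mathbf{c}\mapsto\operatorname{wt}(\mathbf{c})\bmod 2$ is a homomorphism $C\to\mathbb{F}_2$, its kernel — the even-weight subcode — has index $1$ or $2$ in $C$, and $C$ is Type~II exactly when that kernel is all of $C$, equivalently when every row of a generator matrix of $C$ has even weight.

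\emph{Step 2: the generators have weight $1+r$.} The rows of $A(\Gamma)+\omega I$ generate $C$. As $0\notin S_0$, the graph has no loops, so $A(\Gamma)$ has zero diagonal; the row indexed by a vertex $v$ then has $\omega$ in position $v$ and $1$ in each of the $\deg(v)$ neighbour positions, and these $1+\deg(v)$ positions are distinct. Hence each generator has weight $1+\deg(v)$, and since $\Gamma$ is regular with valence $r$ as in Lemma~\ref{lem:valence} (Equation~\ref{eq:valence}), each generator has weight $1+r$. Combined with Step~1, $C$ is Type~II if and only if $r$ is odd.

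\emph{Step 3: parity of $r$.} In all three cases of Equation~\ref{eq:valence} the terms carrying a factor $2$ are irrelevant modulo $2$, leaving $r\equiv|S_0|\pmod 2$ when $m$ is odd and $r\equiv|S_0|+|S_{\floor{m/2}}|\pmod 2$ when $m$ is even (the case $m=2$ included, since then $\floor{m/2}=1$). That is, $r\equiv\Delta_S\pmod 2$, and therefore $C$ is Type~II if and only if $\Delta_S$ is odd. I anticipate no real obstacle: the only delicate points are the two coordinatewise case checks of Step~1 — especially recognising that the correction term $u$ is precisely (the reduction of) the trace Hermitian pairing, which is what lets self-duality do the work — together with the harmless remark that the hypothesis $2\mid(mn)$ serves only to ensure that $C$ has even length $\ell=mn$, a prerequisite for any Type~II code.
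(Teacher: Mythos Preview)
Your proof is correct and follows the same route as the paper: reduce Type~II to odd valence of $\Gamma$, then read off the parity of the valence from Lemma~\ref{lem:valence}. The only difference is that the paper quotes the equivalence ``$C(\Gamma)$ is Type~II $\Leftrightarrow$ every vertex of $\Gamma$ has odd degree'' from Danielsen--Parker~\cite{Danielsen2006}, whereas your Steps~1--2 supply a self-contained proof of that fact via the additivity of weight modulo~$2$ on a symplectic self-dual code.
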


\begin{proof}
Let $mn$ be even. An additive self-dual $\mathbb{F}_4$ code $C = C(\Gamma)$ is Type II if and only if all vertices of $\Gamma$ have odd degree~\cite{Danielsen2006}. Lemma~\ref{lem:valence} gives the valence of $\Gamma\left(m,n,\alpha,S_{0}, S_{1}, \ldots, S_{\floor{m/2}}\right)$. By Equation~\ref{eq:valence}, the valence is odd if and only if $\Delta_S$ is odd.
\end{proof}

The three parameters of a qubit code $Q \subseteq \mathbb{C}^{2^{\ell}}$ are its \textit{length} $\ell$, \textit{dimension} $K$ over $\mathbb{C}$, and \textit{minimum distance} $d=d(Q)$. The notation 
\[
((\ell,K,d)) \mbox{ or } \dsb{\ell,k,d} \mbox{ with } k = \log_2 K
\] 
signifies that $Q$ encodes $K$ logical qubits as $\ell$ physical qubits, with $d$ being the smallest number of simultaneous quantum error operators that can send a valid codeword into another. 

A symplectic self-dual additive code $C$ over $\mathbb{F}_4$ of length $\ell$ and minimum distance $d$ gives an $\dsb{\ell,0,d}_2$ qubit code $Q$. Since $k=0$, that is the code $Q$ consists of a single quantum state, one needs to carefully interpret the meaning of minimum distance. As explained in \cite[Sect. III]{Calderbank1998}, an $\dsb{\ell,0,d}$ code $Q$ has the property that, when
subjected to a decoherence of $\floor{(d-1)/2}$ coordinates, it is possible
to determine exactly which coordinates were decohered. This code can be used, for example, to test if certain storage locations for qubits are decohering faster than
they should.

Such a code $Q$ can be of interest in their own right. A famous example is the unique $\dsb{2,0,2}$ code that corresponds to the maximally entangled quantum state known as the EPR pair in the famed paper \cite{EPR} of Einstein, Podolsky, and Rosen. More commonly, a zero-dimensional code is used as a seed in some secondary constructions of quantum codes to produce qubit codes with $k > 0$.

\begin{theorem}\cite[Theorem 6]{Calderbank1998}
Assume that a qubit $((\ell, K, d>1 ))_2$ code $Q$ exists. Then the following qubit codes exist. An $((\ell, K', d))_2$ code for all $1 < K' \leq K$ by {\it subcode construction}. A $((\lambda, K, d))_2$ code for all $K > 1$ and $\lambda \geq \ell$ by {\it lengthening}. An $((\ell-1, K, d-1))_2$ code by {\it puncturing}.
\end{theorem}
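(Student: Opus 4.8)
\emph{Proof proposal.} The plan is to work entirely in the additive $\mathbb{F}_4$ picture, using the standard stabilizer--code dictionary (see \cite{Calderbank1998}): a qubit $((\ell,K,d))_2$ code with $K=2^k$ corresponds to an additive code $C\subseteq\mathbb{F}_4^{\ell}$ with $C\subseteq C^*$ and $|C|=2^{\ell-k}$, whose distance is the minimum weight occurring in $C^*\setminus C$ when $k\ge 1$ (and in $C=C^*$ when $k=0$). For each of the three operations I would exhibit the claimed code as such a $C'$, verify that $C'$ is self-orthogonal with the asserted length and dimension, and then bound the minimum weight of $(C')^*\setminus C'$.

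\emph{Subcode and lengthening (bookkeeping).} For the subcode, which is vacuous unless $k\ge 1$, I would use that $C^*/C$ carries a non-degenerate alternating $\mathbb{F}_2$-bilinear form of dimension $2k$ --- well defined since $C\perp C^*$, non-degenerate since $(C^*)^*=C$ --- so it has isotropic subspaces of every dimension $j\le k$. Taking $j=k-k'$ and pulling it back to $C'$ gives $C\subseteq C'\subseteq C^*$ with $C'$ self-orthogonal, $|C'|=2^{\ell-k'}$, and $(C')^*\setminus C'\subseteq C^*\setminus C$, so the distance does not drop. (More quickly still, any subspace $Q'\subseteq Q$ inherits the Knill--Laflamme error-detection identities, whence $d(Q')\ge d(Q)$.) For lengthening it suffices to append one coordinate and iterate: take $\widetilde{C}:=\{(\mathbf{c},e):\mathbf{c}\in C,\ e\in\mathbb{F}_2\}\subseteq\mathbb{F}_4^{\ell+1}$, the stabilizer obtained by tensoring $Q$ with a fixed ancilla qubit. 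Since $e^2=e$ for $e\in\mathbb{F}_2$ and $\mathbf{x}*\mathbf{x}=0$ in characteristic $2$, a one-line check gives $\widetilde{C}\subseteq\widetilde{C}^{*}$; plainly $|\widetilde{C}|=2^{(\ell+1)-k}$; and $\widetilde{C}^{*}=\{(\mathbf{y},f):\mathbf{y}\in C^*,\ f\in\mathbb{F}_2\}$, so $\widetilde{C}^{*}\setminus\widetilde{C}=(C^*\setminus C)\times\mathbb{F}_2$ has minimum weight exactly $d$, attained with $f=0$. The hypothesis $K>1$ (equivalently $k\ge 1$) is exactly what keeps $C^*\setminus C$ nonempty.

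\emph{Puncturing (the real content, and the main obstacle).} Deleting, after relabelling, the last coordinate via $\pi\colon\mathbb{F}_4^{\ell}\to\mathbb{F}_4^{\ell-1}$, put $C_0:=\{\mathbf{c}\in C:c_\ell=0\}$ and $\Lambda:=\{c_\ell:\mathbf{c}\in C\}$, an $\mathbb{F}_2$-subspace of $\mathbb{F}_4$. If $\Lambda=\{0\}$ then $(0,\ldots,0,1)\in C^*\setminus C$ has weight $1$, contradicting $d>1$; hence $\dim_{\mathbb{F}_2}\Lambda\in\{1,2\}$. When $\dim_{\mathbb{F}_2}\Lambda=1$ I would set $C':=\pi(C_0)$, and when $\dim_{\mathbb{F}_2}\Lambda=2$ I would set $C':=\langle\pi(C_0),\pi(\mathbf{c}_*)\rangle$ for any $\mathbf{c}_*\in C$ with $(\mathbf{c}_*)_\ell\neq 0$; insisting that the punctured stabilizer have dimension $2^{(\ell-1)-k}$ is exactly what makes this split unavoidable. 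In both cases $\pi$ annihilates the deleted coordinate's contribution to the form, so $C'$ is self-orthogonal, and a short count confirms $|C'|=2^{(\ell-1)-k}$. For the distance bound $d(C')\ge d-1$ I would argue by contradiction: a vector $\mathbf{z}\in(C')^*\setminus C'$ of weight at most $d-2$ lifts to some $\hat{\mathbf{z}}\in C^*$ of weight at most $d-1<d$ --- when $\dim_{\mathbb{F}_2}\Lambda=1$ via the identity $(C')^*=\pi(C^*)$, and when $\dim_{\mathbb{F}_2}\Lambda=2$ because every such low-weight vector of $(C')^*\subseteq\pi(C^*)$ already lies in $\pi(C)$ --- so $\hat{\mathbf{z}}\in C$; then a coset computation in $\pi(C)/\pi(C_0)$, invoking $d>1$ once more to push the possible weight-one vector $(0,\ldots,0,(\mathbf{c}_*)_\ell)\in C^*$ into $C$, forces $\mathbf{z}\in C'$, a contradiction. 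The whole difficulty of the theorem sits in this last step: in each branch one must check both that a short vector of $(C')^*$ lifts to $C^*$ and that its image descends back into $C'$, and it is precisely the hypothesis $d>1$ that removes the stray weight-one vectors of $C^*\setminus C$ which would otherwise break the lift-and-descend argument.
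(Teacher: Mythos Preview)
The paper does not prove this theorem at all: it is quoted verbatim as \cite[Theorem~6]{Calderbank1998} and used as a black box for the secondary constructions in Section~\ref{sec:betterqubit}. There is therefore no ``paper's own proof'' to compare against.

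Your approach via the additive $\mathbb{F}_4$ picture is the standard one and is essentially how the cited reference argues. The subcode and lengthening parts are clean and correct; in particular, your computation that $\widetilde{C}^{*}=C^{*}\times\mathbb{F}_2$ (because $f+f^2=0$ forces $f\in\mathbb{F}_2$) is exactly the point. The puncturing argument has the right architecture---split on $\dim_{\mathbb{F}_2}\Lambda$, use $d>1$ to rule out $\Lambda=\{0\}$, and lift short vectors of $(C')^{*}\setminus C'$ back to $C^{*}\setminus C$---but the final paragraph is compressed to the point of being hard to verify. In the $\dim\Lambda=2$ branch you should state precisely what $(C')^{*}$ is (namely $\pi(C_0^{*})$, where $C_0^{*}$ is computed in $\mathbb{F}_4^{\ell}$) and make explicit why a preimage $\hat{\mathbf z}\in C$ of $\mathbf z$ forces $\mathbf z\in C'$ rather than merely $\mathbf z\in\pi(C)$; the phrase ``a coset computation in $\pi(C)/\pi(C_0)$'' hides a genuine case check that deserves one or two displayed lines. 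None of this is a fatal gap, but as written the puncturing proof would not pass a careful referee.
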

There is also a quantum analogue of the {\it shortening} construction on classical code, although the former is less straightforward to perform. Interested reader can consult \cite[Section 4.3]{Grassl20} for the procedure.

\begin{example}
The $[12,6,6]_4$ {\tt dodecacode} $\mathcal{D}$ yields the unique $\dsb{12, 0, 6}$ qubit code. It can be generated by the metacirculant graph $G_{12}:=\Gamma(2, 6, 5, \{ 3 \},\{ 0, 3, 4, 5 \})$. Figure~\ref{fig:12} shows $G_{12}$ with the vertices relabeled for convenience. The {\tt dodecacode} is Type II, with weight distribution
		
\begin{center}
\begin{tabular}{c r |c r | c r | c r | c r}
\hline
${\rm wt}$ & $\# \mathbf{c}$ & 
${\rm wt}$ & $\# \mathbf{c}$ &
${\rm wt}$ & $\# \mathbf{c}$ & 
${\rm wt}$ & $\# \mathbf{c}$ &
${\rm wt}$ & $\# \mathbf{c}$\\ 
\hline
$0$ & $1$ & $6$ & $396$ & $8$ & $1,485$ 
& $10$ & $1,980$ & $12$ & $234$ \\
\hline
\end{tabular}.
\end{center}
	
\begin{figure}[h!]
\centering
\begin{tikzpicture}[multilayer=3d]
			\SetLayerDistance{-2.5}
			\Plane[x=-.5,y=-.5,width=6,height=3,color=gray,layer=2,NoBorder]
			\Plane[x=-.5,y=-.5,width=6,height=3,NoBorder]
			\Vertex[x=0.1,y=0.2,IdAsLabel,layer=1]{1}
			\Vertex[x=2.4,y=0.2,IdAsLabel,layer=1]{2}
			\Vertex[x=5.0,y=0.2,IdAsLabel,layer=1]{3}
			\Vertex[x=0.1,y=2.0,IdAsLabel,layer=1]{6}
			\Vertex[x=2.4,y=2.0,IdAsLabel,layer=1]{5}
			\Vertex[x=5.0,y=2.0,IdAsLabel,layer=1]{4}
			\Vertex[x=0.1,y=0.2,color=red,IdAsLabel,layer=2]{10}
			\Vertex[x=2.4,y=0.0,color=red,IdAsLabel,layer=2]{11}
			\Vertex[x=5.0,y=0.2,color=red,IdAsLabel,layer=2]{12}
			\Vertex[x=-0.1,y=2.0,color=red,IdAsLabel,layer=2]{9}
			\Vertex[x=2.4,y=1.8,color=red,IdAsLabel,layer=2]{8}
			\Vertex[x=5.0,y=2.0,color=red,IdAsLabel,layer=2]{7}
			
			\Edge(1)(2)
			\Edge(1)(6)
			\Edge(2)(5) 
			\Edge(3)(4)
			\Edge(5)(6)
			\Edge(11)(12)
			\Edge[style=dashed](8)(11)
			\Edge[style=dashed](7)(8)
			\Edge(7)(12)
			\Edge[style=dashed](9)(10)
			\Edge[color=blue](1)(10)
			\Edge[style=dashed,bend=-40,color=blue](1)(7)
			\Edge[style=dashed,bend=5,color=blue](1)(8)
			
			\Edge[style=dashed,bend=-20,color=blue](2)(9)
			\Edge[style=dashed,color=blue](2)(7)
			\Edge[style=dashed,color=blue](2)(8)
			
			\Edge[style=dashed,color=blue](3)(9)
			\Edge[style=dashed,color=blue](3)(10)
			\Edge[style=dashed,color=blue](3)(8)
			\Edge[color=blue](3)(12)
			
			\Edge[style=dashed,color=blue](4)(9)
			\Edge[style=dashed,bend=-5,color=blue](4)(10)
			\Edge[color=blue](4)(7)
			\Edge[bend=30,color=blue](4)(11)
			
			\Edge[style=dashed,bend=-30,color=blue](5)(10)
			\Edge[style=dashed,color=blue](5)(11)
			\Edge[style=dashed,color=blue](5)(12)
			
			\Edge[style=dashed,bend=-5,color=blue](6)(9)
			\Edge[style=dashed,color=blue](6)(11)
			\Edge[bend=8,style=dashed,color=blue](6)(12)
			
\end{tikzpicture}
\caption{$G_{12}:=\Gamma(2, 6, 5, \{ 3 \},\{ 0, 3, 4, 5 \})$ of the {\tt dodecacode} $\mathcal{D}$. \quad \QEDB}
\label{fig:12}
\end{figure}
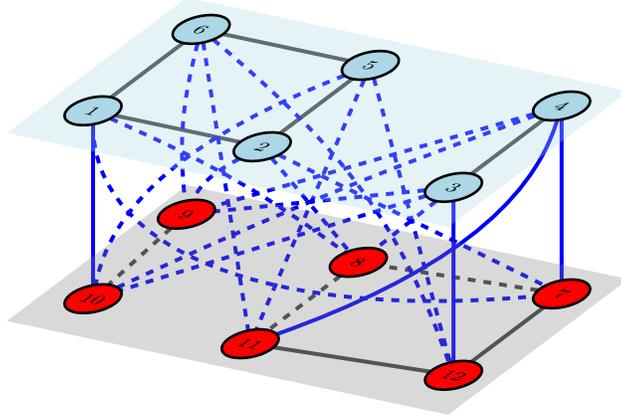
\end{example}

Many known optimal or currently best-performing qubit codes of length $\ell$ with $K=1$ in the literature, \ie, $\dsb{\ell,0,d}_2$ with best-known or optimal $d$, are constructed based on circulant graphs \cite{Danielsen2006,Grassl2017,Saito2019}. We will soon show that strict improvements can be gained when one starts with metacirculant graphs. We use {\tt MAGMA}~\cite{BOSMA1997} for all computations.


\section{Higher best-known minimum distances of additive symplectic self-dual codes}\label{sec:betterSD}

This section presents additive symplectic self-dual codes of length $\ell \in \{27,36\}$ with stricly higher minimum distances than any that can be derived from circulant graphs. The strict improvements, however, do not extend to the quantum setup. There are $\dsb{\ell,0,d}$ qubit codes, constructed from other approaches as recorded in the corresponding entries in \cite{Grassl:codetables}, whose minimum distances are equal to the ones that we derive here.

For length $\ell=27$, Varbanov in~\cite{Varbanov2008} concluded after an exhaustive search that the circulant graph construction only yields additive self-dual codes $\left(27,2^{27},\widehat{d}\right)_4$ of highest minimum distance $d = 8 \geq \widehat{d}$. Using the metacirculant graph construction, we obtain additive self-dual codes with parameters $\left(27,2^{27},9\right)_4$. They yield $\dsb{27,0,9}$ qubit codes, which meet the best-known parameters in \cite{Grassl:codetables}. 

\begin{figure}[h!]
	\centering
	\begin{tikzpicture}[multilayer=3d]
		\SetLayerDistance{-4.5}
		\Plane[x=-.4,y=-.4,width=10,height=5,color=gray,layer=2,NoBorder]
		\Plane[x=-.4,y=-.4,width=10,height=5,color=green,layer=3,NoBorder]
		\Plane[x=-.4,y=-.4,width=10,height=5,NoBorder]
		\Vertex[x=1.5,y=0.3,IdAsLabel,layer=1]{8}
		\Vertex[x=4.7,y=0.3,IdAsLabel,layer=1]{7}
		\Vertex[x=7.5,y=0.3,IdAsLabel,layer=1]{6}
		\Vertex[x=0.3,y=2.2,IdAsLabel,layer=1]{1}
		\Vertex[x=4.5,y=2.2,IdAsLabel,layer=1]{9}
		\Vertex[x=9.0,y=2.3,IdAsLabel,layer=1]{5}
		\Vertex[x=1.3,y=4.0,IdAsLabel,layer=1]{2}
		\Vertex[x=4.5,y=4.0,IdAsLabel,layer=1]{3}
		\Vertex[x=8.0,y=4.0,IdAsLabel,layer=1]{4}
		\Vertex[x=0.5,y=0.3,IdAsLabel,layer=2]{17}
		\Vertex[x=3.2,y=0.3,IdAsLabel,layer=2]{11}
		\Vertex[x=5.9,y=0.3,IdAsLabel,layer=2]{12}
		\Vertex[x=9.0,y=0.3,IdAsLabel,layer=2]{16}
		\Vertex[x=4.5,y=2.2,IdAsLabel,layer=2]{10}
		\Vertex[x=0.5,y=4.0,IdAsLabel,layer=2]{15}
		\Vertex[x=3.2,y=4.0,IdAsLabel,layer=2]{18}
		\Vertex[x=5.9,y=4.0,IdAsLabel,layer=2]{13}
		\Vertex[x=9.0,y=4.0,IdAsLabel,layer=2]{14}
		\Vertex[x=1.5,y=0.3,IdAsLabel,layer=3]{25}
		\Vertex[x=4.5,y=0.3,IdAsLabel,layer=3]{24}
		\Vertex[x=7.8,y=0.5,IdAsLabel,layer=3]{22}
		\Vertex[x=0.3,y=2.2,IdAsLabel,layer=3]{19}
		\Vertex[x=4.5,y=2.2,IdAsLabel,layer=3]{20}
		\Vertex[x=9.0,y=2.3,IdAsLabel,layer=3]{21}
		\Vertex[x=1.3,y=4.2,IdAsLabel,layer=3]{26}
		\Vertex[x=4.5,y=4.0,IdAsLabel,layer=3]{23}
		\Vertex[x=8.0,y=4.0,IdAsLabel,layer=3]{27}
		\Edge(1)(2)\Edge(1)(4)\Edge(1)(7)\Edge(1)(9)
		\Edge(2)(3)\Edge(2)(5)\Edge(2)(8)
		\Edge(3)(4)\Edge(3)(6)\Edge(3)(9)
		\Edge(4)(5)\Edge(4)(7)
		\Edge(5)(6)\Edge(5)(8)
		\Edge(6)(7)\Edge(6)(9)
		\Edge(7)(8)\Edge(8)(9)
		\Edge[style=dashed](10)(12)\Edge[style=dashed,bend=-45](10)(13)
		\Edge[style=dashed,bend=5](10)(16)\Edge[style=dashed](10)(17)
		\Edge[style=dashed,bend=25](11)(13)\Edge[style=dashed](11)(14)
		\Edge(11)(17)\Edge[style=dashed](11)(18)
		\Edge[style=dashed](12)(14)\Edge[style=dashed](12)(15)
		\Edge[style=dashed,bend=-18](12)(18)
		\Edge[style=dashed,bend=20](13)(15)\Edge[style=dashed](13)(16)
		\Edge(14)(16)\Edge[style=dashed,bend=-15](14)(17)
		\Edge[style=dashed](15)(17)\Edge[style=dashed](15)(18)
		\Edge[style=dashed](16)(18)
		\Edge[style=dashed](19)(22)\Edge[style=dashed](19)(23)
		\Edge[style=dashed](19)(24)\Edge[style=dashed](19)(25)
		\Edge[bend=-40,style=dashed](20)(23)
		\Edge[bend=-30,style=dashed](20)(24)
		\Edge[style=dashed](20)(25)\Edge[style=dashed](20)(26)
		\Edge[style=dashed](21)(24)\Edge[style=dashed](21)(25)
		\Edge[style=dashed](21)(26)\Edge[style=dashed](21)(27)
		\Edge[bend=15](22)(25)\Edge[bend=-15,style=dashed](22)(26)
		\Edge[style=dashed](22)(27)
		\Edge[style=dashed](23)(26)\Edge[style=dashed](23)(27)
		\Edge[style=dashed](24)(27)
		\Edge[style=dashed,color=red](1)(10)
		\Edge[color=red,style=dashed](1)(15)
		\Edge[bend=-5,style=dashed,color=red](1)(18)
		
		\Edge[color=red,bend=-12,style=dashed](2)(10)
		\Edge[bend=-15,color=red](2)(11)
		\Edge[color=red](2)(16)
		
		\Edge[bend=10,color=red](3)(11)
		\Edge[bend=18,color=red](3)(12)
		\Edge[bend=-40,color=red](3)(17)
		
		\Edge[color=red,bend=45](4)(12)
		\Edge[bend=30,style=dashed,color=red](4)(13)
		\Edge[bend=15,style=dashed,color=red](4)(18)
		
		\Edge[style=dashed,color=red](5)(10)
		\Edge[style=dashed,color=red](5)(13)
		\Edge[color=red](5)(14)
		
		\Edge[bend=-40,color=red](6)(11)
		\Edge[color=red](6)(14)
		\Edge[style=dashed,color=red](6)(15)
		
		\Edge[bend=-5,color=red](7)(12)
		\Edge[style=dashed,color=red](7)(15)
		\Edge[bend=10,color=red](7)(16)
		
		\Edge[style=dashed,color=red](8)(13)
		\Edge[bend=-15,color=red](8)(16)
		\Edge[color=red](8)(17)
		
		\Edge[style=dashed,color=red](9)(14)
		\Edge[style=dashed,bend=-25,color=red](9)(17)
		\Edge[style=dashed,color=red](9)(18)
		\Edge[style=dashed,color=blue](10)(19)
		\Edge[style=dashed,color=blue](10)(21)
		\Edge[style=dashed,bend=20,color=blue](10)(27)
		
		\Edge[color=blue](11)(19)
		\Edge[bend=-15,style=dashed,color=blue](11)(20)
		\Edge[bend=15,color=blue](11)(22)
		
		\Edge[style=dashed,color=blue](12)(20)
		\Edge[bend=5,color=blue](12)(21)
		\Edge[style=dashed,color=blue](12)(23)
		
		\Edge[style=dashed,color=blue](13)(21)
		\Edge[style=dashed,bend=25,color=blue](13)(22)
		\Edge[style=dashed,bend=-10,color=blue](13)(24)
		
		\Edge[bend=15,color=blue](14)(22)
		\Edge[style=dashed,bend=35,color=blue](14)(23)
		\Edge[bend=-5,color=blue](14)(25)
		
		\Edge[style=dashed,bend=-10,color=blue](15)(23)
		\Edge[style=dashed,bend=-2,color=blue](15)(24)
		\Edge[style=dashed,color=blue](15)(26)
		
		\Edge[bend=30,color=blue](16)(24)
		\Edge[color=blue](16)(25)
		\Edge[color=blue](16)(27)
		
		\Edge[color=blue](17)(19)
		\Edge[bend=-5,color=blue](17)(25)
		\Edge[style=dashed,bend=-15,color=blue](17)(26)
		
		\Edge[style=dashed,bend=10,color=blue](18)(20)
		\Edge[style=dashed,bend=-10,color=blue](18)(26)
		\Edge[style=dashed,bend=20,color=blue](18)(27)
		
		\Edge[bend=5,style=dashed,color=black](1)(19)
		\Edge[bend=2,style=dashed,color=black](1)(23)
		\Edge[bend=0,style=dashed,color=black](1)(26)
		
		\Edge[bend=-8,style=dashed,color=black](2)(20)
		\Edge[bend=-20,color=black](2)(24)
		\Edge[style=dashed,color=black](2)(27)
		
		\Edge[bend=5,style=dashed,color=black](3)(19)
		\Edge[bend=4,color=black](3)(21)
		\Edge[bend=6,color=black](3)(25)
		
		\Edge[bend=-7,style=dashed,color=black](4)(20)
		\Edge[bend=15,color=black](4)(22)
		\Edge[bend=47,style=dashed,color=black](4)(26)
		
		\Edge[bend=-7,color=black](5)(21)
		\Edge[bend=2,style=dashed,color=black](5)(23)
		\Edge[bend=5,color=black](5)(27)
		
		\Edge[bend=-45,style=dashed,color=black](6)(19)
		\Edge[bend=6,color=black](6)(22)
		\Edge[bend=10,color=black](6)(24)
		
		\Edge[bend=-2,style=dashed,color=black](7)(20)
		\Edge[bend=25,style=dashed,color=black](7)(23)
		\Edge[bend=8,color=black](7)(25)
		
		\Edge[bend=20,color=black](8)(21)
		\Edge[bend=-12,color=black](8)(24)
		\Edge[bend=-15,style=dashed,color=black](8)(26)
		
		\Edge[bend=8,color=black](9)(22)
		\Edge[bend=8,color=black](9)(25)
		\Edge[bend=15,color=black](9)(27)
	\end{tikzpicture}
\caption{$G_{27,2}:=\Gamma(3, 9, 4, [\{ 2, 7 \}, \{ 0, 1, 2, 3, 4, 5, 8 \}])$ from the family $\mathcal{F}_{27,2}$. Edges between $\mathcal{L}_1$ and $\mathcal{L}_2$ are in red, between $\mathcal{L}_1$ and $\mathcal{L}_3$ are in black, between $\mathcal{L}_2$ and $\mathcal{L}_3$ are in blue.}
\label{fig:272}
\end{figure}
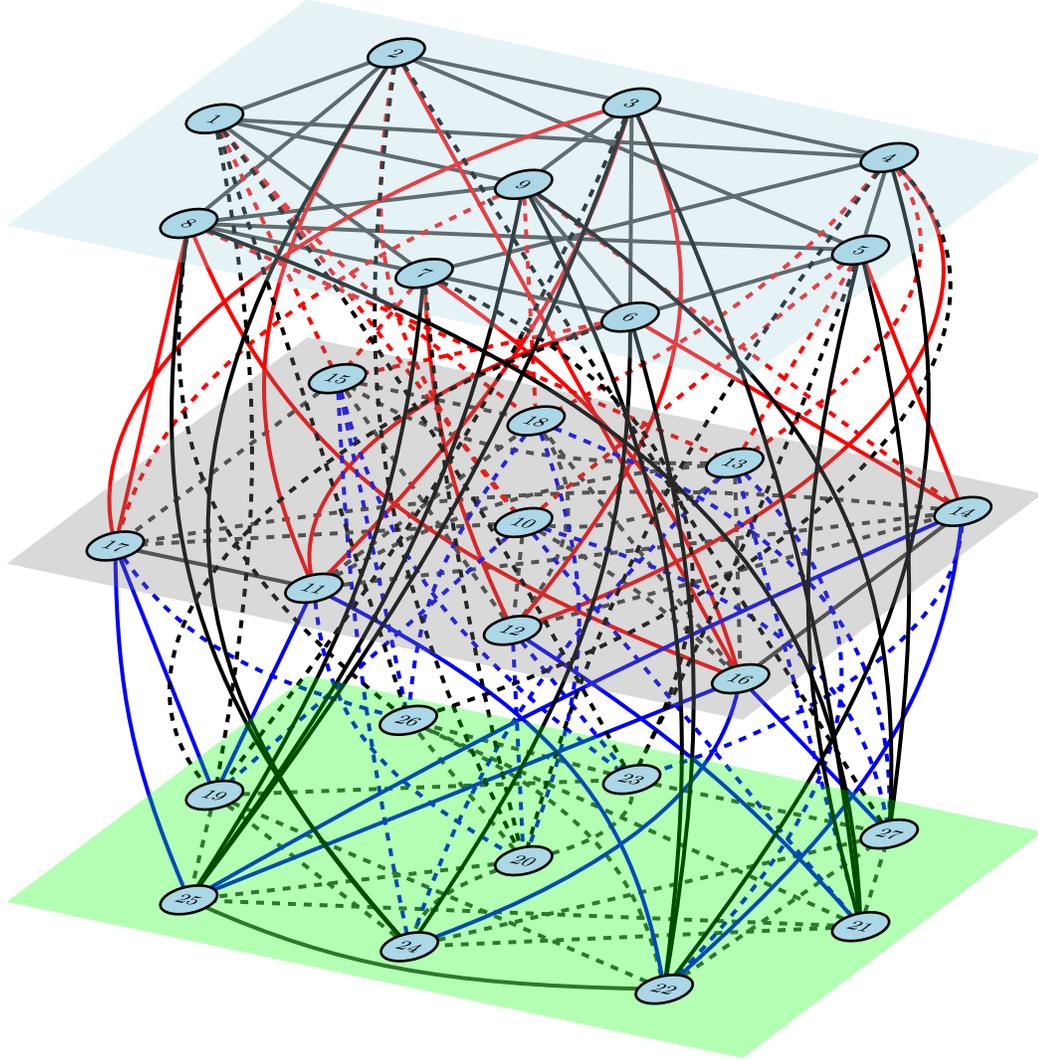

\begin{proposition}\label{prop27}
There are two inequivalent families of $\left(27,2^{27},9\right)_4$ additive self-dual codes, each consisting of $108$ equivalent codes. They yield $\dsb{27,0,9}$ qubit stabilizer codes.
\end{proposition}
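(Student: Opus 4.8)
The plan is to settle both assertions by a short exhaustive computation, since for $\ell = 27$ the set of admissible metacirculant parameters is tiny. First I would list the factorizations $27 = mn$ with $m > 1$; as $27 = 3^{3}$, the only options are $(m,n) \in \{(3,9),(9,3),(27,1)\}$. The case $(27,1)$ is degenerate: with $n = 1$ one has $\alpha = 1$ and every $S_k \subseteq \{0\}$, so by Definition~\ref{def:mc} the resulting graph is precisely the circulant graph $\Gamma_{27}(\{k : S_k = \{0\}\})$; by Varbanov's exhaustive search~\cite{Varbanov2008} no such graph yields an additive self-dual code of minimum distance exceeding $8$, so this case contributes nothing and may be discarded at once. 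For the two remaining pairs I would loop over the units $\alpha \in \mathbb{Z}_n^{\times}$ and over all tuples $(S_0, S_1, \ldots, S_{\floor{m/2}})$ obeying properties (1)--(4) of Definition~\ref{def:mc}, namely $S_0 = -S_0$, $0 \notin S_0$, and $\alpha^{m} S_k = S_k$ for $1 \le k \le \floor{m/2}$ (property (4) is vacuous, both surviving $m$ being odd). For $(3,9)$ this means $6$ choices of $\alpha$, at most $2^{4}$ symmetric sets $S_0 \subseteq \mathbb{Z}_9 \setminus \{0\}$, and at most $2^{9}$ sets $S_1$; for $(9,3)$ it means $2$ choices of $\alpha$, $2$ of $S_0$, and at most $2^{3}$ for each of $S_1, \ldots, S_4$. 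Both loops are trivially within reach of {\tt MAGMA}.

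For each admissible tuple I would assemble $A(\Gamma)$ from the adjacency rule of Definition~\ref{def:mc}, form $C = C(\Gamma)$ as the $\mathbb{F}_2$-span of the rows of $A(\Gamma) + \omega I$ --- symplectic self-dual by the correspondence recalled in Section~\ref{sec:SDC} --- and compute its minimum weight. Keeping only the codes of minimum weight $9$, and verifying that none of larger minimum weight occurs (which certifies that $9$ is optimal for the construction at this length), leaves a finite list of codes. I would then reduce each code to a canonical form under the equivalence of additive $\mathbb{F}_4$ codes (equivalently, the isomorphism-and-local-complementation equivalence of the associated graphs, cf.~\cite{Danielsen2006}) and group the list accordingly. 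The expected outcome is exactly two equivalence classes, each containing $108$ codes --- hence $216$ codes with $d = 9$ in all --- with the graph $G_{27,2}$ of Figure~\ref{fig:272} representing one class.

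The quantum conclusion is then immediate from the stabilizer construction of~\cite{Calderbank1998}: a symplectic self-dual additive $\mathbb{F}_4$ code of length $27$ and minimum distance $9$ has $\mathbb{F}_2$-dimension $27$, so the associated stabilizer code has $K = 2^{27-27} = 1$, that is, it is an $\dsb{27,0,9}$ qubit code, and all $216$ codes produce such codes. I do not expect any conceptual difficulty here. The only points needing care are organizational: ensuring the enumeration of $(\alpha, S_0, \ldots, S_{\floor{m/2}})$ is exhaustive and free of omissions so that the search is genuinely complete, and trusting {\tt MAGMA}'s canonical-form routine rather than ad hoc pairwise tests to establish that the two families are truly inequivalent and not merely not-obviously-equivalent. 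Everything else is a routine, and fast, calculation.
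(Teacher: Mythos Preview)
Your approach is essentially the paper's: an exhaustive {\tt MAGMA} search over admissible metacirculant parameters, retaining the $d=9$ codes and sorting them into equivalence classes. The paper in fact restricts its search to $(m,n)=(3,9)$ only---the $216$ graphs and the count of $108$ per family refer specifically to $\Gamma(3,9,\alpha,S_0,S_1)$ parametrizations (indeed only $\alpha\in\{4,7\}$ survive), and the two families are first separated by graph isomorphism and then confirmed to give inequivalent codes via their distinct weight distributions (Table~\ref{table:27}). Your inclusion of $(9,3)$ is harmless for the qualitative conclusion but could perturb the exact count of $108$ if that factorization also produces $d=9$ graphs; if you want to reproduce the proposition verbatim you should either restrict to $(3,9)$ or check that $(9,3)$ contributes nothing new.
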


\begin{proof} 
An exhaustive search found $216$ metacirculant graphs 
$\{\Gamma(3, 9, \alpha, S_0, S_1)\}$ 
in two non-isomorphic families, corresponding to two inequivalent codes. Tables~\ref{table:list271} and \ref{table:list272} give the complete list of $\alpha, S_0, S_1$.

The families $\mathcal{F}_{27,1}$ and $\mathcal{F}_{27,2}$ are represented by the graphs 
\begin{align}
	G_{27,1} &:=\Gamma(3, 9, 4, \{ 2, 7 \}, \{ 0, 1, 2, 3, 4, 5, 8 \}) \mbox{ and } \notag\\
	G_{27,2} &:=\Gamma(3, 9, 7, \{ 1, 3, 6, 8 \},\{ 0, 5, 8 \}).
\end{align}
Figure~\ref{fig:272} visualizes $G_{27,2}$ according to the vertex relabeling done in Table~\ref{graph:27} to highlight its metacirculant structure. One can do similarly for the more complicated looking $G_{27,1}$. The even weights in both families of codes have the same number of codewords. The weight distributions are in Table~\ref{table:27}.   
\end{proof}

\begin{table}
\caption{List of edges of $G_{27,1}$ and in $G_{27,2}$. 
Vertices $1$ to $9$ are in $\mathcal{L}_1$. Vertices $10$ to $18$ are in $\mathcal{L}_2$. Vertices $19$ to $27$ are in $\mathcal{L}_3$. Edges $\{(i,j)\}$ are listed as $(i,\{j \in J\})$ with $i \in \mathcal{L}_r$ and $J \subset \mathcal{L}_s$}
\centering
\small
\begin{tabular}{ccl}
		\toprule
		\multicolumn{3}{c}{The $216$ Edges of $G_{27,1}$} \\
		\midrule
		$\#$ & $(r,s)$ & \multicolumn{1}{c}{$(i,\{j \in J\})$}\\
		\midrule
		$9$ & $(1,1)$ & $(1, \{3,8\}), \,(2, \{4,9\}), \,(3, \{5\}), \,(4, \{6\}), \,(5, \{7\}), \,
		(6, \{8\}), \,(7, \{9\}) $ \\
		
		& $(2,2)$ & 
		$(10, \{11,18\}), \,(11, \{12\}), \,(12, \{13\}), \,(13, \{14\}), \,
		(14, \{15\})$ \\
		&& $(15, \{16\}), \,(16, \{17\}), \,(17, \{18\})$ \\
		& $(3,3)$ & $(19, \{23,24\}), \,(20, \{24,25\}), \,(21, \{25,26\}), \,
		(22, \{26,27\}), \, (23, \{27\})$ \\
		
		$63$ & $(1,2)$ & $(1, \{10,11,12,13,14,15,18\}), \,
		(2, \{10,11,12,13,14,15,16\}),$ \\
		
		& & $(3, \{11,12,13,14,15,16,17\}), \,(4, \{12,13,14,15,16,17,18\}),$ \\
		
		&& $(5, \{10,13,14,15,16,17,18\}), \,(6, \{10,11,14,15,16,17,18\}),$ \\
		
		&& $(7, \{10,11,12,15,16,17,18\}), \,(8, \{10,11,12,13,16,17,18\}),$ \\ 
		
		&& $(9, \{10,11,12,13,14,17,18\})$ \\
		
		$63$ & $(1,3)$ & $(1, \{19,20,21,23,25,26,27\}), \,(2, \{19,20,21,22,24,26,27\}),$ \\
		
		&& $(3, \{19,20,21,22,23,25,27\}), \,(4, \{19,20,21,22,23,24,26\}),$ \\
		
		&& $(5, \{20,21,22,23,24,25,27\}), \,(6, \{19,21,22,23,24,25,26\}),$ \\
		
		&& $(7, \{20,22,23,24,25,26,27\}), \,(8, \{19,21,23,24,25,26,27\}),$ \\
		
		&& $(9, \{19,20,22,24,25,26,27\})$ \\
		
		$63$ & $(2,3)$ & 
		$(10, \{19,21,22,23,24,26,27\}), \,(11, \{19,20,22,23,24,25,27\}),$ \\
		
		&& $(12, \{19,20,21,23,24,25,26\}), \,(13, \{20,21,22,24,25,26,27\}),$\\
		
		&& $(14, \{19,21,22,23,25,26,27\}), \,(15, \{19,20,22,23,24,26,27\}),$ \\
		
		&& $(16, \{19,20,21,23,24,25,27\}), \,(17, \{19,20,21,22,24,25,26\}),$ \\
		
		&& $(18, \{20,21,22,23,25,26,27\})$ \\
		\toprule 
		\multicolumn{3}{c}{The $135$ Edges of $G_{27,2}$}\\
		\midrule
		$\#$ & $(r,s)$ & \multicolumn{1}{c}{$(i,\{j \in J\})$} \\
		\midrule
		$18$ & $(1,1)$ & $(1,\{2,4,7,9\}), \,(2,\{3,5,8\}), \,(3,\{4,6,9\}), \,(4,\{5,7\}),$\\
		& & $(5,\{6,8\}), \,(6,\{7,9\}), \,(7,\{8\}), \,(8,\{9\})$ \\
		
		$18$ & $(2,2)$ & $(10,\{12,13,16,17\}), \,(11,\{13,14,17,18\}), \,
		(12,\{14,15,18\})$\\
		
		&& $(13,\{15,16\}), \,(14,\{16,17\}), \,(15,\{17,18\}), \,(16,\{18\})$\\
		
		$18$ & $(3,3)$ & $(19,\{22,23,24,25\}), \,(20,\{23,24,25,26\}), \,(21,\{24,25,26,27\}),$\\
		
		&& $(22,\{25,26,27\}), \,(23,\{26,27\}), \,(24,\{27\})$ \\
		
		$27$ & $(1,2)$ & $(1,\{10,15,18\}), \,(2,\{10,11,16\}), \,(3,\{11,12,17\}),$ \\
		
		&& $(4,\{12,13,18\}), \,(5,\{10,13,14\}), \,(6,\{11,14,15\}),$\\
		
		&& $(7,\{12,15,16\}), \,(8,\{13,16,17\}), \,(9,\{14,17,18\})$\\ 
		
		$27$ & $(1,3)$ & $(1,\{19,23,26\}), \,(2,\{20,24,27\}), \,(3,\{19,21,25\}),$\\
		
		&& $(4,\{20,22,26\}), \,(5,\{21,23,27\}), \,(6,\{19,22,24\}),$\\
		&& $(7,\{20,23,25\}), \,(8,\{21,24,26\}), \,(9,\{22,25,27\})$\\
		
		$27$ & $(2,3)$ & $(10,\{19,21,27\}), \,(11,\{19,20,22\}), \,(12,\{20,21,23\}),$\\
		&& $(13,\{21,22,24\}), \,(14,\{22,23,25\}), \,(15,\{23,24,26\}),$\\
		&& $(16,\{24,25,27\}), \,(17,\{19,25,26\}), \,(18,\{20,26,27\})$\\
		\bottomrule 
	\end{tabular} 
	\label{graph:27}
\end{table}

\begin{table}
\caption{List of $\alpha$ and $S_0,S_1$ for $\Gamma(3,9,\alpha,S_0,S_1)$ corresponding to $\mathcal{F}_{27,1}$ in Proposition~\ref{prop27}.}
\centering
\small
\begin{tabular}{c c l}
\toprule
\multicolumn{3}{c}{Family $\mathcal{F}_{27,1}$} \\
\midrule
$\alpha$ & $S_0$ & \multicolumn{1}{c}{$\{S_1\}$} \\ 
\midrule
$4$ & $\{2,7\}$ & $\{\{ 0, 1, 2, 3, 4, 5, 8 \},~\{ 0, 1, 2, 5, 6, 7, 8 \},~
\{ 0, 1, 4, 5, 6, 7, 8 \}$,\\
&& $~\{ 0, 1, 2, 3, 4, 7, 8 \},~\{ 2, 3, 4, 5, 6, 7, 8 \},
~\{ 1, 2, 3, 4, 5, 6, 7 \}\}$. \\
		
& $\{4,5\}$ & $\{\{ 0, 1, 2, 3, 4, 5, 7 \},~\{ 0, 2, 4, 5, 6, 7, 8 \},~
\{ 1, 2, 3, 4, 5, 6, 8 \}$,\\
&&$~\{ 1, 3, 4, 5, 6, 7, 8 \},~ 
\{ 0, 1, 2, 4, 6, 7, 8 \},~\{ 0, 1, 2, 3, 5, 7, 8 \}\}$.\\
		
& $\{1,8\}$ & $\{\{ 0, 1, 2, 4, 5, 6, 8 \},~\{ 1, 2, 3, 5, 6, 7, 8 \},~
\{ 0, 1, 2, 4, 5, 6, 7 \}$,\\
&&$~\{ 1, 2, 3, 4, 6, 7, 8 \},~
\{ 0, 2, 3, 4, 5, 7, 8 \},~\{ 0, 1, 3, 4, 5, 7, 8 \}\}$.\\
		
& $\{ 1, 2, 4, 5, 7, 8 \}$ & 
		$\{\{ 0, 2, 4, 6, 7 \},~\{ 0, 3, 4, 5, 8 \},~\{ 0, 1, 3, 5, 7 \}$,\\
		&&$~\{ 0, 1, 2, 3, 4 \},~\{ 1, 3, 5, 6, 8 \},~\{ 1, 2, 3, 6, 7 \}\}$.\\
		
		&&$\{\{ 0, 1, 2, 3, 8 \},~\{ 3, 4, 5, 6, 7 \}~\{ 0, 2, 3, 5, 7 \}$,\\
		&&$~\{ 1, 3, 4, 6, 8 \},~\{ 0, 3, 4, 7, 8 \}, ~\{ 2, 3, 4, 5, 6 \}\}$. \\
		
		& & $\{\{ 0, 1, 6, 7, 8 \},~\{ 2, 3, 6, 7, 8 \},~\{ 0, 1, 2, 5, 6 \}$,\\
		&& $~\{ 0, 2, 4, 6, 8 \},~\{ 0, 1, 4, 5, 6 \},~\{ 0, 5, 6, 7, 8 \}\}$.\\
		
		& $\{ 1, 2, 7, 8 \}$ & $\{\{ 0, 1, 2, 3, 5, 8 \},~\{0,1, 4, 6, 7, 8 \},~\{ 1, 3, 4, 5, 6, 7 \}$,\\
		&&$~\{ 0, 2, 5, 6, 7, 8 \},~\{ 2, 3, 4, 5, 6, 8\},~\{ 0, 1, 2, 3, 4, 7 \}\}$.\\

		& $\{2, 4, 5, 7 \}$ &
		$\{\{ 0, 1, 2, 4, 6, 7 \},~\{ 0, 2, 3, 5, 7, 8 \},~
		\{ 1, 3, 4, 6, 7, 8 \}$,\\
		&& $~\{ 0, 2, 4, 5, 6, 8 \},~
		\{ 1, 2, 3, 5, 6, 8 \},~\{ 0, 1, 3, 4, 5, 7 \}\}$. \\
		
		& $\{ 1, 4, 5, 8 \}$ & 
		$\{\{ 0, 1, 4, 5, 6, 7 \},~\{ 2, 3, 5, 6, 7, 8 \},~
		\{ 0, 2, 3, 4, 5, 8 \}$,\\
		&&$~\{ 1, 2, 3, 4, 6, 7 \},~
		\{ 0, 1, 2, 5, 6, 8 \},~\{ 0, 1, 3, 4, 7, 8 \}\}$.\\
		
\midrule
		
$7$ & $\{2,7\}$ & $\{\{ 0, 1, 2, 3, 4, 5, 7 \},~
\{ 0, 2, 4, 5, 6, 7, 8 \},~\{ 1, 2, 3, 4, 5, 6, 8 \}$,\\
&&$~\{ 1, 3, 4, 5, 6, 7, 8 \},~\{ 0, 1, 2, 4, 6, 7, 8 \},~\{ 0, 1, 2, 3, 5, 7, 8 \}\}$. \\
		
& $\{4,5\}$ & $\{\{ 0, 1, 2, 4, 5, 6, 8 \},~
\{ 1, 2, 3, 5, 6, 7, 8 \},~
\{ 0, 1, 2, 4, 5, 6, 7 \}$,\\
&&$~\{ 1, 2, 3, 4, 6, 7, 8 \},~
		\{ 0, 2, 3, 4, 5, 7, 8 \},~
		\{ 0, 1, 3, 4, 5, 7, 8 \}\}$.\\
		
		& $\{1,8\}$ & $\{\{0, 1, 2, 3, 4, 5, 8\},~
		\{0, 1, 2, 5, 6, 7, 8\},~
		\{0, 1, 4, 5, 6, 7, 8\}$,\\
		&&$~\{0, 1, 2, 3, 4, 7, 8\},~
		\{2, 3, 4, 5, 6, 7, 8\},~
		\{1, 2, 3, 4, 5, 6, 7\}\}$.\\
		
		& $\{ 1, 2, 4, 5, 7, 8 \}$ & 
		$\{\{0, 2, 4, 6, 7\},~
		\{0, 3, 4, 5, 8\},~
		\{0, 1, 3, 5, 7\}$,\\
		&&$~\{0, 1, 2, 3, 4\},~
		\{1, 3, 5, 6, 8\},~
		\{1, 2, 3, 6, 7\}\}$. \\ 
		
		& & $\{\{0, 1, 2, 3, 8\},~
		\{3, 4, 5, 6, 7\},~
		\{0, 2, 3, 5, 7\}$,\\
		&&$~\{1, 3, 4, 6, 8\},~
		\{0, 3, 4, 7, 8\},~
		\{2, 3, 4, 5, 6\}\}$. \\
		
		& & $\{\{0, 1, 6, 7, 8\},~
		\{2, 3, 6, 7, 8\},~
		\{0, 1, 2, 5, 6\}$,\\
		&&$~\{0, 2, 4, 6, 8\},~
		\{0, 1, 4, 5, 6\},~
		\{0, 5, 6, 7, 8\}\}$.\\
		
		& $\{ 1, 2, 7, 8 \}$ & $\{\{0, 1, 2, 4, 6, 7\},~
		\{0, 2, 3, 5, 7, 8\},~
		\{1, 3, 4, 6, 7, 8\}$,\\
		&&$~\{0, 2, 4, 5, 6, 8\},~
		\{1, 2, 3, 5, 6, 8\},~
		\{0, 1, 3, 4, 5, 7\}\}$.\\
		
		& $\{2, 4, 5, 7 \}$ &
		$\{\{0, 1, 4, 5, 6, 7\},~
		\{2, 3, 5, 6, 7, 8\},~
		\{0, 2, 3, 4, 5, 8\}$,\\
		&&$~\{1, 2, 3, 4, 6, 7\},~
		\{0, 1, 2, 5, 6, 8\},~
		\{0, 1, 3, 4, 7, 8\}\}$. \\
		
		& $\{ 1, 4, 5, 8 \}$ & 
		$\{\{0, 1, 2, 3, 5, 8\},~
		\{0, 1, 4, 6, 7, 8\},~
		\{1, 3, 4, 5, 6, 7\}$,\\
		&&$~\{0, 2, 5, 6, 7, 8\},~
		\{2, 3, 4, 5, 6, 8\},~
		\{ 0, 1, 2, 3, 4, 7 \}\}$.\\
\bottomrule 
\end{tabular} 
\label{table:list271}
\end{table}

\begin{table}
\caption{List of $\alpha$ and $S_0,S_1$ for $\Gamma(3,9,\alpha,S_0,S_1)$ corresponding to $\mathcal{F}_{27,2}$ in Proposition~\ref{prop27}.}
\centering
\small
\begin{tabular}{c c l}
		\multicolumn{3}{c}{Family $\mathcal{F}_{27,2}$} \\
		\midrule
		$\alpha$ & $S_0$ & \multicolumn{1}{c}{$\{S_1\}$} \\ 
		\midrule
		$4$ & $\{3,6\}$ & $\{\{1, 3, 5, 7 \},\,
		\{1, 3, 5, 8 \},\,
		\{1, 2, 3, 4 \}$,\\
		&&$~~\{2, 3, 7, 8 \},\,
		\{0, 1, 7, 8 \},\,
		\{0, 1, 2, 8 \}\}$. \\
		
		& & $\{\{0, 2, 5, 7 \},\,
		\{1, 4, 6, 8 \},\,
		\{0, 1, 4, 5 \}$,\\
		&&$~~\{3, 4, 7, 8 \},\,
		\{4, 5, 6, 7 \},\,
		\{0, 4, 5, 8 \}\}$.\\
		
		& & $\{\{0, 2, 4, 7 \},\,
		\{1, 2, 6, 7 \},\,
		\{1, 2, 5, 6 \}$,\\
		&&$~~\{2, 4, 6, 8 \},\,
		\{5, 6, 7, 8 \},\,
		\{2, 3, 4, 5 \}\}$. \\
		
		& $\{1, 2, 3, 6, 7, 8\}$ & $\{\{3, 5 \},\,
		\{4, 6 \},\,
		\{0, 2 \},\,
		\{1, 3 \},\,
		\{0, 7 \},\,
		\{6, 8 \}\}$. \\
		
		& $\{1, 3, 4, 5, 6, 8\}$ & $\{\{6, 7 \},\,
		\{0, 8 \},\,
		\{3, 4 \},\,
		\{2, 3 \},\,
		\{5, 6 \},\,
		\{0, 1 \}\}$. \\
		
		& $\{2, 3, 4, 5, 6, 7\}$ & $\{\{0, 5 \},\,
		\{1, 6 \},\,
		\{2, 6 \},\,
		\{3, 8 \},\,
		\{3, 7 \},\,
		\{0, 4 \}\}$. \\
		
		& $\{3, 4, 5, 6\}$ & $\{\{1, 3, 4 \},\,
		\{0, 1, 7 \},\,
		\{5, 6, 8 \}$,\\
		&&$~~\{0, 2, 8 \},\,
		\{4, 6, 7 \},\,
		\{2, 3, 5 \}\}$. \\
		
		& $\{2, 3, 6, 7\}$ & $\{\{2, 3, 8 \},\,
		\{1, 6, 7 \},\,
		\{3, 4, 7 \}$,\\
		&&$~~\{2, 5, 6 \},\,
		\{0, 1, 4 \},\,
		\{0, 5, 8 \}\}$. \\
		
		& $\{1, 3, 6, 8\}$ & $\{\{2, 6, 8 \},\,
		\{0, 4, 7 \},\,
		\{1, 3, 7 \}$,\\
		&&$~~\{1, 4, 6 \},\,
		\{0, 2, 5 \},\,
		\{3, 5, 8 \}\}$. \\
		
		\midrule
		
		$7$ & $\{3,6\}$ & $\{\{1, 3, 5, 7 \},\,
		\{1, 3, 5, 8 \},\,
		\{1, 2, 3, 4 \}$,\\
		&&$~~\{2, 3, 7, 8 \},\,
		\{0, 1, 7, 8 \},\,
		\{0, 1, 2, 8 \}\}$. \\
		
		& & $\{\{0, 2, 5, 7 \},\,
		\{1, 4, 6, 8 \},\,
		\{0, 1, 4, 5 \}$,\\
		&&$~~\{3, 4, 7, 8 \},\,
		\{4, 5, 6, 7 \},\,
		\{0, 4, 5, 8 \}\}$. \\
		
		& & $\{\{0, 2, 4, 7 \},\,
		\{1, 2, 6, 7 \},\,
		\{1, 2, 5, 6 \}$,\\
		&&$~~\{2, 4, 6, 8 \},\,
		\{5, 6, 7, 8 \},\,
		\{2, 3, 4, 5 \}\}$. \\
		
		& $\{1, 2, 3, 6, 7, 8\}$ & $\{\{0, 5 \},\,
		\{1, 6 \},\,
		\{2, 6 \},\,
		\{3, 8 \},\,
		\{3, 7 \},\,
		\{0, 4 \}\}$ \\
		
		& $\{1, 3, 4, 5, 6, 8\}$ & $\{\{3, 5 \},\,
		\{4, 6 \},\,
		\{0, 2 \},\,
		\{1, 3 \},\,
		\{0, 7 \},\,
		\{6, 8 \}\}$ \\
		
		& $\{2, 3, 4, 5, 6, 7\}$ & $\{\{6, 7 \},\,
		\{0, 8 \},\,
		\{3, 4 \},\,
		\{2, 3 \},\,
		\{5, 6 \},\,
		\{0, 1 \}\}$ \\
		
		& $\{3, 4, 5, 6\}$ & $\{\{2, 6, 8 \},\,
		\{0, 4, 7 \},\,
		\{1, 3, 7 \}$,\\
		&&$~~\{1, 4, 6 \},\,
		\{0, 2, 5 \},\,
		\{3, 5, 8 \}\}$. \\
		
		& $\{2, 3, 6, 7\}$ & $\{\{1, 3, 4 \},\,
		\{0, 1, 7 \},\,
		\{5, 6, 8 \}$,\\
		&&$~~\{0, 2, 8 \},\,
		\{4, 6, 7 \},\,
		\{2, 3, 5 \}\}$. \\
		
		& $\{1, 3, 6, 8\}$ & $\{\{2, 3, 8 \},\,
		\{1, 6, 7 \},\,
		\{3, 4, 7 \}$,\\
		&&$~~\{2, 5, 6 \},\,
		\{0, 1, 4 \},\,
		\{0, 5, 8 \}\}$. \\
\bottomrule 
\end{tabular} 
\label{table:list272}
\end{table}


\begin{table}
\caption{The Weight distribution of $C_{27,1}$ and $C_{27,2}$}
\centering
\small
\begin{tabular}{c r |c r | c r}
\hline
\multicolumn{6}{c}{$C_{27,1}$}\\
\hline
		${\rm wt}$ & $\# \mathbf{c}$ & 
		${\rm wt}$ & $\# \mathbf{c}$ &
		${\rm wt}$ & $\# \mathbf{c}$\\ 
		\hline
		$0$ & $1$ & $15$ & $1,857,060$ & $22$ & $18,873,108$ \\
		$9$ & $591$ & $16$ & $4,188,213$ & $23$ & $12,305,844$ \\
		$10$ & $4,077$ & $17$ & $8,123,490$ & $24$ & $6,155,010$ \\
		$11$ & $17,901$ & $18$ & $13,519,830$ & $25$ & $2,217,159$ \\
		$12$ & $68,868$ & $19$ & $19,215,414$ & $26$ & $511,461$ \\
		$13$ & $237,276$ & $20$ & $23,076,036$ & $27$ & $56,581$ \\
		$14$ & $712,260$ & $21$ & $23,077,548$ & & \\
		\hline 
		\multicolumn{6}{c}{$C_{27,2}$}\\
		\hline
		${\rm wt}$ & $\# \mathbf{c}$ & 
		${\rm wt}$ & $\# \mathbf{c}$ &
		${\rm wt}$ & $\# \mathbf{c}$\\ 
		\hline
		$0$ & $1$ & $15$ & $1,846,476$ & $22$ & $18,873,108$ \\
		$9$ & $717$ & $16$ & $4,188,213$ & $23$ & $12,301,308$ \\
		$10$ & $4,077$ & $17$ & $8,139,366$ & $24$ & $6,155,010$ \\
		$11$ & $16,767$ & $18$ & $13,519,830$ & $25$ & $2,218,293$ \\
		$12$ & $68,868$ & $19$ & $19,199,538$ & $26$ & $511,461$ \\
		$13$ & $241,812$ & $20$ & $23,076,036$ & $27$ & $56,455$ \\
		$14$ & $712,260$ & $21$ & $23,088,132$ & & \\
		\hline 
\end{tabular} 
\label{table:27}
\end{table}

A self-dual additive $\left(36,2^{36},12\right)_4$ code from circulant graphs does not exist, since the best minimum distance is confirmed to be $11$ in \cite{Grassl2017,Saito2019}. The metacirculant graph construction increases the known minimum distance of length $36$ additive self-dual codes to $12$.

\begin{proposition}\label{prop36}
There are $72$ metacirculant graphs with $(m,n)=(2,18)$, producing two inequivalent additive symplectic self-dual $\left(36, 2^{36}, 12\right)_4$ Type ${\rm II}$ codes. We can separate the $72$ graphs into two families, each containing $36$ isomorphic graphs. They yield $\dsb{36,0,12}_2$ qubit codes. The respective families $\mathcal{F}_{36,1}$ and $\mathcal{F}_{36,2}$ are represented by the graphs 
\begin{align}
G_{36,1} &:=\Gamma(2, 18, 1, \{ 4, 6, 12, 14\}, \{ 3, 4, 6, 7, 9, 11, 12, 14, 15 \}), \notag \\
G_{36,2} &:=\Gamma(2, 18, 1, \{ 4, 6, 12, 14\}, \{ 1, 4, 7, 8, 9, 10, 11, 14, 17 \}).
\end{align}
\end{proposition}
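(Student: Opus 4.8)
\emph{Overall approach.} The statement is a finite classification, so the plan is a computer search whose correctness is underwritten, step by step, by the structural results of Sections~\ref{sec:intro} and \ref{sec:SDC}. First I would fix the search space. For $(m,n)=(2,18)$ the index $\floor{m/2}$ equals $1$, so by Definition~\ref{def:mc} such a graph is determined by a unit $\alpha$ of $\mathbb{Z}_{18}$, i.e.\ $\alpha\in\{1,5,7,11,13,17\}$, together with subsets $S_0,S_1\subseteq\mathbb{Z}_{18}$ obeying $S_0=-S_0$, $0\notin S_0$, $\alpha^2S_1=S_1$, and, because $m$ is even, $\alpha S_1=-S_1$. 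I would enumerate every admissible triple $(\alpha,S_0,S_1)$; the symmetry and parity conditions already shrink this list substantially, and one may cut it further using the scaling isomorphisms noted below.

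\emph{Producing and testing the codes.} For each admissible triple I would build the $36\times36$ adjacency matrix $A(\Gamma)$, form $M=A(\Gamma)+\omega I$ over $\mathbb{F}_4$, and let $C=C(\Gamma)$ be the additive code generated by the rows of $M$, exactly as in Section~\ref{sec:SDC}. By the Schlingemann correspondence recalled there, $C$ is automatically symplectic self-dual of length $36$ and size $2^{36}$, so the only quantity left to compute is $d(C)$. Running a minimum-distance computation (in {\tt MAGMA}, with Brouwer--Zimmermann-style bounding for additive codes) on each $C$, I expect exactly $72$ of the triples to yield $d(C)=12$, with every other admissible triple giving $d(C)\le11$. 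This is precisely what makes the $72$ codes new: the best circulant-graph length-$36$ additive self-dual codes have minimum distance $11$, as recalled just before the statement.

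\emph{Type, equivalence, and the isomorphism classes.} For each of the $72$ surviving graphs, Lemma~\ref{lem:valence} gives valence $|S_0|+|S_1|=13$, which is odd, so Theorem~\ref{thm:even} shows $C$ is Type~II; equivalently, the weight enumerator returned by the computation is supported on even weights only. I would then test code equivalence among the $72$ codes --- either directly with {\tt MAGMA}'s equivalence routine for additive codes, or by comparing canonical forms of the associated graphs under the local-complementation and relabeling operations of Danielsen--Parker --- and expect to obtain exactly two classes, represented by $C(G_{36,1})$ and $C(G_{36,2})$ with the stated defining data. A pairwise graph-isomorphism test then shows the $72$ graphs themselves split into two isomorphism classes of $36$; part of each size-$36$ orbit is explained by the fact that $(i,j)\mapsto(i,uj)$ is an isomorphism $\Gamma(2,18,\alpha,S_0,S_1)\cong\Gamma(2,18,\alpha,uS_0,uS_1)$ for every unit $u$ of $\mathbb{Z}_{18}$, but I would rely on direct testing for the precise count. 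Finally, feeding each $C$ through the stabilizer construction of \cite{Calderbank1998} yields a qubit code with parameters $\dsb{36,0,12}_2$.

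\emph{Main obstacle.} The real difficulty is computational rather than conceptual: certifying the minimum distance of an additive $\mathbb{F}_4$-code of length $36$ and size $2^{36}$ is already expensive, and doing it across the whole family is the bottleneck. Feasibility hinges on first pruning the candidate set with the unit-scaling symmetry and the parity constraint of Theorem~\ref{thm:even} (and, for efficiency, any layer-swap isomorphisms one can identify for $m=2$), and then on {\tt MAGMA}'s optimized routines both to confirm $d=12$ and to decide code equivalence.
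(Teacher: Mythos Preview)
Your proposal is correct and follows essentially the same approach as the paper: an exhaustive computer search over admissible $(\alpha,S_0,S_1)$, with {\tt MAGMA} certifying minimum distances, weight distributions, and equivalence, and with Theorem~\ref{thm:even} confirming Type~II. The paper's own proof simply records the output of that search in Tables~\ref{table:36}, \ref{table:list36}, and \ref{graph:36}; you have supplied considerably more methodological detail than the authors, but the substance is identical.
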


\begin{proof}
The weight distributions of the self-dual codes $C_{36,1}$ and $C_{36,2}$ derived, respectively, from $G_{36,1}$ and $G_{36,2}$ are in Table~\ref{table:36}. The list of $\{\alpha, S_0, S_1\}$ for the metacirculant graphs 
$\Gamma(2, 18,  \alpha, S_0, S_1)$ that splits into two families is given in Table~\ref{table:list36}. The two graphs $G_{36,1}$ and $G_{36,2}$ are presented in Table~\ref{graph:36}. They share common structures within each of the two layers and differ only on the edges connecting distinct layers.
\end{proof}

\begin{table}
\caption{The Weight distribution of $C_{36,1}$ and $C_{36,2}$}
\centering
\small
\begin{tabular}{c r |c r | c r}
		\hline
		\multicolumn{6}{c}{$C_{36,1}$}\\
		\hline 
		${\rm wt}$ & $\# \mathbf{c} \in C$ & 
		${\rm wt}$ & $\# \mathbf{c} \in C$ &
		${\rm wt}$ & $\# \mathbf{c} \in C$\\ 
		\hline
		$0$ & $1$ & $20$ & $746,262,396$ & $30$ & $11,670,433,632$ \\
		$12$ & $28,764$ & $22$ & $3,459,817,152$ & $32$ & $3,176,936,829$ \\
		$14$ & $425,952$ & $24$ & $10,296,739,656$ & $34$ & $305,789,472$ \\
		$16$ & $9,744,570$ & $26$ & $18,797,790,528$ & $36$ & $4,362,804$ \\
		$18$ & $100,283,040$ & $28$ & $20,150,861,940$ & & \\
		\hline 
		\multicolumn{6}{c}{$C_{36,2}$}\\
		\hline
		${\rm wt}$ & $\# \mathbf{c} \in C$ & 
		${\rm wt}$ & $\# \mathbf{c} \in C$ &
		${\rm wt}$ & $\# \mathbf{c} \in C$\\ 
		\hline
		$0$ & $1$ & $20$ & $742,341,996$ & $30$ & $11,672,176,032$ \\
		$12$ & $208,44$ & $22$ & $3,466,089,792$ & $32$ & $3,176,414,109$ \\
		$14$ & $520,992$ & $24$ & $10,289,421,576$ & $34$ & $305,884,512$ \\
		$16$ & $9,221,850$ & $26$ & $18,804,063,168$ & $36$ & $4,354,884$ \\
		$18$ & $102,025,440$ & $28$ & $20,146,941,540$ & & \\
		\hline 
\end{tabular} 
\label{table:36}
\end{table}

\begin{table}
	\caption{List of input parameters $\alpha$ and $S_0,S_1$ for the metacirculant graphs $\Gamma(2,18,\alpha,S_0,S_1)$ in Proposition~\ref{prop36}.}
	\renewcommand{\arraystretch}{1.05}
	\setlength{\tabcolsep}{3pt}
	\centering
	\footnotesize
	\begin{tabular}{c l | c l }
		\toprule
		\multicolumn{4}{c}{Family $\mathcal{F}_{36,1}$}  \\
		\hline
		$\alpha$ & \multicolumn{1}{c|}{$S_0,S_1$} & 
		$\alpha$ & \multicolumn{1}{c}{$S_0,S_1$} \\ 
		\midrule
		
		$1$ & $ \{ 4, 6, 12, 14 \},\{ 3, 4, 6, 7, 9, 11, 12, 14, 15 \}$ &
		$17$ & $\{ 2, 6, 12, 16 \}, \{ 1, 4, 7, 8, 9, 11, 12, 13, 16 \}$\\ 
		
		& $ \{2,6,12,16\},\{ 1, 2, 3, 6, 9, 12, 15, 16, 17 \}$ & 
		& $\{ 2, 6, 12, 16 \}, \{ 0, 1, 4, 7, 10, 13, 14, 15, 17 \}$\\		
		
		& $ \{ 6, 8, 10, 12 \}, \{ 0, 1, 3, 4, 6, 12, 14, 15, 17 \}$ &
		& $\{ 2, 6, 12, 16 \}, \{ 2, 5, 6, 7, 9, 10, 11, 14, 17 \}$ \\
		
		& $ \{ 2, 3, 6, 7, 11, 12, 15, 16 \},\{ 0, 6, 8, 10, 12 \}$ &
		& $\{ 2, 6, 12, 16 \}, \{ 2, 5, 8, 11, 12, 13, 15, 16, 17 \}$ \\
		
		& $\{ 1, 3, 6, 8, 10, 12, 15, 17 \},\{ 3, 5, 9, 13, 15 \}$ &
		& $\{ 2, 6, 12, 16 \}, \{ 1, 2, 3, 5, 6, 7, 10, 13, 16 \}$ \\
		
		& $\{ 3, 4, 5, 6, 12, 13, 14, 15 \},\{ 3, 7, 9, 11, 15 \}$ &
		& $\{ 2, 6, 12, 16 \}, \{ 0, 1, 3, 4, 5, 8, 11, 14, 17 \}$ \\
		
		& $\{ 3, 5, 6, 8, 9, 10, 12, 13, 15 \},\{ 1, 3, 15, 17 \}$ & 
		& $\{ 2, 6, 12, 16 \}, \{ 0, 3, 6, 9, 12, 13, 14, 16, 17 \}$ \\
		
		& $\{ 3, 4, 6, 7, 9, 11, 12, 14, 15 \},\{ 3, 5, 13, 15 \}$ &
		& $\{ 2, 6, 12, 16 \}, \{ 0, 1, 2, 4, 5, 6, 9, 12, 15 \}$ \\
		
		& $\{ 1, 2, 3, 6, 9, 12, 15, 16, 17 \},\{ 2, 6, 12, 16 \}$ & 
		& $\{ 2, 6, 12, 16 \}, \{ 0, 3, 6, 7, 8, 10, 11, 12, 15 \}$ \\
		
		$17$ & $\{ 4, 6, 12, 14 \},\{ 0, 1, 3, 5, 6, 8, 9, 15, 16 \}$ & 
		& $\{ 6, 8, 10, 12 \}, \{ 0, 2, 8, 10, 11, 13, 14, 15, 17 \}$ \\
		
		& $\{ 4, 6, 12, 14 \},\{ 5, 6, 8, 9, 11, 13, 14, 16, 17 \}$ &
		& $\{ 6, 8, 10, 12 \}, \{ 3, 5, 6, 8, 9, 10, 12, 13, 15 \}$ \\
		
		& $\{ 4, 6, 12, 14 \},\{ 0, 1, 3, 4, 6, 8, 9, 11, 12 \}$ &
		& $\{ 6, 8, 10, 12 \}, \{ 0, 2, 3, 5, 11, 13, 14, 16, 17 \}$ \\ 
		
		& $\{ 4, 6, 12, 14 \},\{ 1, 2, 4, 6, 7, 9, 10, 16, 17 \}$ &
		& $\{ 6, 8, 10, 12 \}, \{ 0, 2, 3, 4, 6, 7, 9, 15, 17 \}$ \\ 		
		
		& $\{ 4, 6, 12, 14 \},\{ 1, 3, 4, 6, 7, 13, 14, 16, 17 \}$ &
		& $\{ 6, 8, 10, 12 \}, \{ 1, 7, 9, 10, 12, 13, 14, 16, 17 \}$ \\		
		
		& $\{ 4, 6, 12, 14 \}, \{ 2, 3, 5, 6, 8, 10, 11, 13, 14 \}$ &
		& $\{ 6, 8, 10, 12 \}, \{ 0, 2, 3, 5, 6, 7, 9, 10, 12 \}$ \\		
		
		& $\{ 4, 6, 12, 14 \}, \{ 3, 4, 6, 7, 9, 11, 12, 14, 15 \}$ &
		& $\{ 6, 8, 10, 12 \}, \{ 1, 2, 4, 10, 12, 13, 15, 16, 17 \}$ \\		
		
		& $\{ 4, 6, 12, 14 \}, \{ 0, 1, 3, 4, 10, 11, 13, 14, 16 \}$ &
		& $\{ 6, 8, 10, 12 \}, \{ 0, 1, 2, 4, 5, 7, 13, 15, 16 \}$ \\		
		
		& $\{ 4, 6, 12, 14 \}, \{ 0, 2, 3, 5, 7, 8, 10, 11, 17 \}$ &
		& $\{ 6, 8, 10, 12 \}, \{ 1, 2, 3, 5, 6, 8, 14, 16, 17 \}$ \\ 		
\toprule 
\multicolumn{4}{c}{Family $\mathcal{F}_{36,2}$}  \\
\hline
$\alpha$ & \multicolumn{1}{c}{$S_0,S_1$} & 
$\alpha$ & \multicolumn{1}{c}{$S_0,S_1$} \\ 
\midrule
$1$ & $\{ 4, 6, 12, 14 \}, \{ 1, 4, 7, 8, 9, 10, 11, 14, 17 \}$ &
$17$ & $\{ 2, 6, 12, 16 \}, \{ 1, 2, 4, 5, 9, 13, 14, 16, 17 \}$ \\

& $ \{ 2, 6, 12, 16 \},\{ 1, 2, 4, 5, 9, 13, 14, 16, 17 \}$ &
& $\{ 2, 6, 12, 16 \}, \{ 1, 5, 6, 8, 9, 11, 12, 14, 15 \}$ \\

& $ \{ 6, 8, 10, 12 \}, \{ 0, 1, 2, 4, 7, 11, 14, 16, 17 \}$ & 
& $\{ 2, 6, 12, 16 \}, \{ 3, 7, 8, 10, 11, 13, 14, 16, 17 \}$ \\

& $\{ 2, 5, 6, 7, 11, 12, 13, 16 \},\{ 0, 4, 8, 10, 14 \}$ &
& $\{ 2, 6, 12, 16 \}, \{ 3, 4, 6, 7, 9, 10, 12, 13, 17 \}$ \\

& $\{ 1, 4, 5, 6, 12, 13, 14, 17 \},\{ 1, 7, 9, 11, 17 \}$ &
& $\{ 2, 6, 12, 16 \}, \{ 0, 2, 3, 5, 6, 8, 9, 13, 17 \}$ \\

& $\{ 1, 6, 7, 8, 10, 11, 12, 17 \},\{ 5, 7, 9, 11, 13 \}$ &
& $\{ 2, 6, 12, 16 \}, \{ 0, 1, 3, 4, 6, 7, 11, 15, 16 \}$ \\

& $\{ 1, 4, 6, 7, 9, 11, 12, 14, 17 \},\{ 4, 8, 10, 14 \}$ &
& $\{ 2, 6, 12, 16 \}, \{ 1, 2, 6, 10, 11, 13, 14, 16, 17 \}$ \\

& $\{ 1, 2, 5, 6, 9, 12, 13, 16, 17 \},\{ 5, 7, 11, 13 \}$ &
& $\{ 2, 6, 12, 16 \}, \{ 0, 1, 3, 4, 6, 7, 9, 10, 14 \}$ \\

& $\{ 5, 6, 7, 8, 9, 10, 11, 12, 13 \},\{ 1, 7, 11, 17 \}$ &
& $\{ 2, 6, 12, 16 \}, \{ 2, 3, 5, 6, 8, 9, 11, 12, 16 \}$ \\

$17$ & $\{ 4, 6, 12, 14 \},\{ 1, 3, 6, 9, 10, 11, 12,13,16\}$ &
& $\{ 6, 8, 10, 12 \}, \{ 2, 6, 9, 11, 12, 13, 14, 15, 17 \}$ \\

& $\{ 4, 6, 12, 14 \},\{ 1, 4, 6, 9, 12, 13, 14, 15, 16 \}$ &
& $\{ 6, 8, 10, 12 \}, \{ 0, 3, 7, 10, 12, 13, 14, 15, 16 \}$ \\

& $\{ 4, 6, 12, 14 \},\{ 2, 5, 6, 7, 8, 9, 12, 15, 17 \}$ &
& $\{ 6, 8, 10, 12 \}, \{ 0, 4, 7, 9, 10, 11, 12, 13, 15 \}$ \\ 

& $\{ 4, 6, 12, 14 \},\{ 1, 4, 5, 6, 7, 8, 11, 14, 16 \}$ &
& $\{ 6, 8, 10, 12 \}, \{ 1, 2, 3, 4, 5, 7, 10, 14, 17 \}$ \\

& $\{ 4, 6, 12, 14 \},\{ 1, 2, 3, 4, 5, 8, 11, 13, 16 \}$ &
& $\{ 6, 8, 10, 12 \}, \{ 2, 5, 7, 8, 9, 10, 11, 13, 16 \}$ \\ 

& $\{ 4, 6, 12, 14 \}, \{ 1, 4, 7, 8, 9, 10, 11, 14, 17 \}$ & 
& $\{ 6, 8, 10, 12 \}, \{ 3, 6, 8, 9, 10, 11, 12, 14, 17 \}$\\ 

& $\{ 4, 6, 12, 14 \}, \{ 0, 1, 2, 3, 6, 9, 11, 14, 17 \}$ & 
& $\{ 6, 8, 10, 12 \}, \{ 0, 1, 3, 6, 10, 13, 15, 16, 17 \}$ \\ 

& $\{ 4, 6, 12, 14 \}, \{ 2, 3, 4, 5, 6, 9, 12, 14, 17 \}$ &
& $\{ 6, 8, 10, 12 \}, \{ 0, 2, 5, 9, 12, 14, 15, 16, 17 \}$ \\

& $\{ 4, 6, 12, 14 \}, \{ 0, 3, 6, 7, 8, 9, 10, 13, 16 \}$ & 
& $\{ 6, 8, 10, 12 \}, \{ 2, 4, 5, 6, 7, 8, 10, 13, 17 \}$ \\
\bottomrule
\end{tabular} 
\label{table:list36}
\end{table}

\begin{table}
	\caption{List of Edges of $G_{36,1}$ and $G_{36,2}$ in Proposition~\ref{prop36}. 
		Vertices $1$ to $18$ are in $\mathcal{L}_1$. Vertices $19$ to $36$ are in $\mathcal{L}_2$.}
	\centering
	\begin{tabular}{c}
		\toprule
		The $36$ edges $\{(i,j)\}$ as $(i,\{j \in J\})$ with $i \in \mathcal{L}_1$ and $J \subset \mathcal{L}_1$.\\
		Common to both $G_{36,1}$ and $G_{36,2}$.\\
		$(1,\{5,7,13,15\}), \,(2,\{6,8,14,16\}), \,(3,\{7,9,15,17\})$,\\
		$(4,\{8,10,16,18\}), \,(5,\{9,11,17\}), \,(6,\{10,12,18\})$,\\
		$(7,\{11,13\}), \,(8,\{12,14\}), \,(9,\{13,15\}), \,(10,\{14,16\})$,\\
		$(11,\{15,17\}), \,(12,\{16,18\}), \,(13,\{17\}), \,(14,\{18\})$.\\
		
		\midrule
		
		The $36$ edges $\{(i,j)\}$ as $(i,\{J\})$ with $i \in \mathcal{L}_2$ and $j \in J \subset \mathcal{L}_2$.\\
		Common to both $G_{36,1}$ and $G_{36,2}$.\\
		$(19,\{23,25,31,33\}), \,(20,\{24,26,32,34\}), \,(21,\{ 25,27,33,35\})$,\\ $(22,\{26,28,34,36\}), \,(23,\{27,29,35\}), \,(24,\{28,30,36\})$,\\
		$(25,\{29,31\}), \,(26,\{30,32\}), \,(27,\{31,33\}), \,(28,\{32,34\})$,\\ $(29,\{33,35\}), \,(30,\{34,36\}), \, (31,\{35\}), \,(32,\{36\})$.\\
		
		\midrule
		
		The $162$ edges $\{(i,j)\}$ as $(i,\{J\})$ with $i \in \mathcal{L}_1$ and $j \in J \subset \mathcal{L}_2$\\
		\midrule
		In $G_{36,1}$ \\
		\midrule
		$(1, \{22,23,25,26,28,30,31,33,34\}), \,
		(2, \{23,24,26,27,29,31,32,34,35)\},$ \\ 
		
		$(3,\{24,25,27,28,30,32,33,35,36\}), \,
		(4,\{19,25,26,28,29,31,33,34,36\}),$ \\
		
		$(5,\{19,20,26,27,29,30,32,34,35\}), \,
		(6,\{20,21,27,28,30,31,33,35,36\}),$ \\
		
		$(7,\{19,21,22,28,29,31,32,34,36\}), \,
		(8, \{19,20,22,23,29,30,32,33,35\}),$ \\
		
		$(9, \{20,21,23,24,30,31,33,34,36\}), \, 
		(10, \{19,21,22,24,25,31,32,34,35\}),$ \\
		
		$(11,\{20,22,23,25,26,32,33,35,36\}), \,
		(12,\{19,21,23,24,26,27,33,34,36\}),$ \\
		
		$(13,\{19,20,22,24,25,27,28,34,35\}), \,
		(14,\{20,21,23,25,26,28,29,35,36\}),$ \\
		
		$(15,\{19,21,22,24,26,27,29,30,36\}), \,
		(16,\{19,20,22,23,25,27,28,30,31\}),$ \\
		
		$(17,\{20,21,23,24,26,28,29,31,32\}), \,
		(18,\{21,22,24,25,27,29,30,32,33\})$ \\
		
		\midrule
		
		In $G_{36,2}$\\		
	
	\midrule
	
	$(1, \{20,23,26,27,28,29,30,33,36\}), \,
	(2, \{19,21,24,27,28,29,30,31,34\}),$\\ 
	
	$(3, \{20,22,25,28,29,30,31,32,35\}), \,
	(4, \{21,23,26,29,30,31,32,33,36\}),$ \\
	
	$(5, \{19,22,24,27,30,31,32,33,34\}), \,
	(6, \{20,23,25,28,31,32,33,34,35\}),$ \\
	
	$(7, \{21,24,26,29,32,33,34,35,36\}), \,
	(8, \{19,22,25,27,30,33,34,35,36\}),$ \\
	
	$(9, \{19,20,23,26,28,31,34,35,36\}), \,
	(10, \{19,20,21,24,27,29,32,35,36\}),$ \\
	
	$(11, \{19,20,21,22,25,28,30,33,36\}), \,
	(12, \{19,20,21,22,23,26,29,31,34\}),$ \\
	
	$(13, \{20,21,22,23,24,27,30,32,35\}), \,
	(14, \{21,22,23,24,25,28,31,33,36\}),$ \\
	
	$(15, \{19,22,23,24,25,26,29,32,34\}), \,
	(16, \{20,23,24,25,26,27,30,33,35\}),$ \\
	
	$(17, \{21,24,25,26,27,28,31,34,36\}), \,
	(18, \{19,22,25,26,27,28,29,32,35\}),$ \\
	
	\bottomrule 
\end{tabular} 
\label{graph:36}
\end{table}

\section{Improved qubit codes}\label{sec:betterqubit}

Here we present improved qubit codes $\dsb{\ell,0,d}_2$ for lengths $\ell \in \{78,90,91,93,96\}$. Due to the large sizes of the relevant graphs, their presentation will be given in the appendix. 

Applying secondary constructions yields more qubit codes with strictly better parameters that previously known. There are computational routines used by M. Grassl to perform the propagation rules on the improved codes submitted for inclusion to his online table \cite{Grassl:codetables}. We highlight the process only for $\ell=78$, for brevity.

\begin{proposition}\label{prop78}
The metacirculant graph 
\begin{multline}
G_{78}:=\Gamma(6, 13, 12, \{1,4,6,7,9,12\}, 
\{1,2,3,5,7,8,9,11\},\\
\{1,2,4,5,8,10\}, \{1,2,3,6,7,8,12\})
\end{multline}
yields a new $\left(78, 2^{78}, 20\right)_4$ Type II additive self-dual code. The corresponding $Q_{78}$ is a $\dsb{78,0,20}_2$ qubit code. Prior to our discovery, the best-known was $\dsb{78,0,19}_2$ and the first known occurence of $d=20$ was at $\ell = 80$. Our improved qubit code has been listed as the current best-known since October 13, 2020 in Grassl's Table~\cite{Grassl:codetables} based on a private communication.
\end{proposition}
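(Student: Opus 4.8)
The plan is to confirm three structural facts about $C:=C(G_{78})$ and then pin down its minimum distance by computation, following the pattern of Propositions~\ref{prop27} and~\ref{prop36}.

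First I would verify that $G_{78}$ is a genuine metacirculant graph in the sense of Definition~\ref{def:mc}. Here $m=6$, $n=13$, and $\alpha=12\equiv -1\Mod{13}$, so $\alpha^{6}=1$ and $\alpha^{3}=-1$ in $\mathbb{Z}_{13}$; hence property~(3), $\alpha^{6}S_k=S_k$ for $k=1,2,3$, and property~(4), $\alpha^{3}S_{3}=-S_{3}$, hold automatically, while $0\notin S_0$ and $S_0=-S_0$ are immediate for $S_0=\{1,4,6,7,9,12\}$ because $-S_0=\{12,9,7,6,4,1\}=S_0$. Then I would form the adjacency matrix $A:=A(G_{78})$ and the $78\times 78$ matrix $A+\omega I$ over $\mathbb{F}_4$; by Schlingemann's theorem recalled in Section~\ref{sec:SDC} (see also \cite[Section~3]{Danielsen2006}), its $\mathbb{F}_2$-row span $C$ is a symplectic self-dual additive code, and since $G_{78}$ has $mn=78$ vertices, $C$ is a $\left(78,2^{78},d\right)_4$ code for some $d$. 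For the type, since $mn=78$ is even and $m=6$ is even, Theorem~\ref{thm:even} applies with $\Delta_S=|S_0|+|S_3|=6+7=13$, which is odd, so $C$ is Type~II; equivalently, by Lemma~\ref{lem:valence} every vertex of $G_{78}$ has the odd valence $|S_0|+|S_3|+2(|S_1|+|S_2|)=13+2\cdot 14=41$.

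The substantive task is to show $d=20$. The upper bound $d\le 20$ comes from exhibiting a single weight-$20$ codeword, read off as an $\mathbb{F}_2$-combination of a few rows of $A+\omega I$. The lower bound $d\ge 20$ --- that there is no nonzero codeword of weight at most $19$ --- is the real work, and I would establish it by a minimum-weight computation in \texttt{MAGMA} on the additive code $C$. A naive enumeration at length $78$ is infeasible, so the essential device is the symmetry coming from the metacirculant structure: each graph automorphism of $G_{78}$, given by a permutation matrix $P$ with $PAP^{\mathsf T}=A$, satisfies $P(A+\omega I)P^{\mathsf T}=A+\omega I$ and therefore acts as a coordinate permutation fixing $C$, so the automorphisms $\rho,\sigma$ of Definition~\ref{def:mc} endow $C$ with a permutation automorphism group of order divisible by $mn=78$; restricting the Brouwer--Zimmermann style search to orbit representatives of the low-weight candidates makes the computation tractable. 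I expect this lower-bound verification to be the principal obstacle.

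Once $d(C)=20$ is confirmed, the Calderbank--Rains--Shor--Sloane correspondence \cite{Calderbank1998} turns $C$ into the stabilizer code $Q_{78}=\dsb{78,0,20}_2$. The remaining historical assertions --- that $\dsb{78,0,19}_2$ was the previous best-known and that $d=20$ first occurred at length $80$ --- are verified by inspection of the entries of Grassl's online tables \cite{Grassl:codetables} at the time of submission and need no further argument.
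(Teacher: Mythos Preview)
Your proposal is correct and follows essentially the same approach as the paper: verify the Type~II property via Theorem~\ref{thm:even} with $\Delta_S=|S_0|+|S_3|=13$, and certify $d=20$ by a \texttt{MAGMA} minimum-distance computation. The paper's own proof is terser---it simply states that a randomized search found $G_{78}$, that \texttt{MAGMA} confirms $d=20$, and that $\Delta_S=13$ gives Type~II---so your explicit check of Definition~\ref{def:mc}, your valence computation, and your discussion of exploiting the order-$78$ automorphism group in the search are additional (and sound) detail rather than a different route.
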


\begin{proof}
A randomized search in {\tt MAGMA} found $G_{78}$. The graph yields the self-dual additive code $C_{78}$ of distance $d = 20$. By Theorem~\ref{thm:even}, $\Delta_S = |S_0| + |S_3| = 13$, which implies that $C_{78}$ is Type II. The qubit code $Q_{78}$ can then be certified to have the claimed parameters. For an easier inspection, since $G_{78}$ is too complex to visualize, we list the edges in Table~\ref{graph:78}. 
\end{proof}

By propagation rules, strict improvements were also achieved by puncturing, shortening, lengthening, and taking a subcode of $Q_{78}$. Table~\ref{table:length78} provides the details.

\begin{proposition}\label{prop90}
The metacirculant graph 
\begin{multline}
G_{90}:=\Gamma(10, 9, 8, \{ 1, 8 \},\{ 0, 1, 2, 4, 5, 8 \}, 
\{ 5, 6 \},\\ 
\{ 2, 4, 5, 6, 8 \},\{ 0, 1, 2, 4, 7 \},\{ 0, 5, 7, 8 \})
\end{multline}
generates a new $\left(90,2^{90},21\right)_4$ additive self-dual code $C_{90}$. The resulting $\dsb{90,0,21}$ qubit code $Q_{90}$ has better minimum distance than the best-known $\dsb{90,0,20}$ in \cite{Grassl:codetables}.
\end{proposition}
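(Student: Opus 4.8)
The plan is to follow the same three-step template used for Proposition~\ref{prop78}, since Proposition~\ref{prop90} is an entirely computational existence statement: it asserts that a specific metacirculant graph $G_{90}$ yields a self-dual additive $\left(90,2^{90},21\right)_4$ code whose associated qubit code beats the previously tabulated $\dsb{90,0,20}$. First I would invoke the Schlingemann/Danielsen--Parker correspondence recalled in Section~\ref{sec:SDC}: the row span of $A(G_{90}) + \omega I$ is automatically a symplectic self-dual additive $\mathbb{F}_4$-code $C_{90}$ of length $90$ and $\mathbb{F}_2$-dimension $90$, so the only nontrivial parameter to verify is the minimum distance. That verification is a finite computation in {\tt MAGMA}: construct the adjacency matrix from Definition~\ref{def:mc} with $(m,n,\alpha)=(10,9,8)$ and the listed sets $S_0,\dots,S_5$, form the generator matrix, and compute the minimum weight, confirming $d(C_{90})=21$.

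Second I would translate to the quantum side. A symplectic self-dual additive $\mathbb{F}_4$-code of length $\ell$ and minimum distance $d$ gives an $\dsb{\ell,0,d}_2$ stabilizer code, as stated in the excerpt, so $C_{90}$ yields $Q_{90}=\dsb{90,0,21}_2$. The comparison claim then follows by consulting the current entry for length $90$ in Grassl's table~\cite{Grassl:codetables}, which records $d=20$ as the previous best for $\dsb{90,0,d}_2$; since $21>20$, the improvement is strict. I would also note, as in Proposition~\ref{prop90}'s analogue, that one can check whether $C_{90}$ is Type I or Type II via Theorem~\ref{thm:even} using $\Delta_S=|S_0|+|S_5|$, though this is not needed for the statement itself.

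The only genuine obstacle is the honest one: the minimum-distance computation for a length-$90$ additive $\mathbb{F}_4$-code of dimension $90$ is not something one does by hand, so the proof rests on a machine computation rather than a structural argument. Since the search that produced $G_{90}$ was a randomized, non-exhaustive one (as the introduction flags), there is no claim of optimality or uniqueness here, only existence — which sidesteps the hard part entirely. I would therefore keep the proof brief, stating that {\tt MAGMA} confirms $d(C_{90})=21$, recording the generator data needed for an independent check (deferred to the appendix, per the remark at the start of Section~\ref{sec:betterqubit}), and citing~\cite{Grassl:codetables} for the previous record. In short:

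\begin{proof}[Proof sketch]
The row span of $A(G_{90})+\omega I$ is a symplectic self-dual additive $\mathbb{F}_4$-code $C_{90}$ of length $90$ by the Schlingemann correspondence. A computation in {\tt MAGMA} confirms $d(C_{90})=21$. Hence $C_{90}$ yields a $\dsb{90,0,21}_2$ qubit stabilizer code $Q_{90}$, which improves on the $\dsb{90,0,20}_2$ recorded in~\cite{Grassl:codetables}.
\end{proof}
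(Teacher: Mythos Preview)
Your proposal is correct and matches the paper's own proof essentially line for line: both invoke the graph-code correspondence for self-duality, defer the minimum-distance verification to a {\tt MAGMA} computation, cite Grassl's table for the previous record, and point to the appendix for the edge list of $G_{90}$. The paper additionally records that $\Delta_S=|S_0|+|S_5|=6$ is even, so $C_{90}$ is Type~I by Theorem~\ref{thm:even}, which you correctly flagged as an optional side remark.
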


\begin{proof}
The graph $G_{90}$ and the resulting codes $C_{90}$ and $Q_{90}$ as well as their parameters were found by {\tt MAGMA} searches. The code $C_{90}$ is Type I by Theorem~\ref{thm:even}, since $\Delta_S = |S_0|+|S_5|=6$. For verification, $G_{90}$ is described by its list of edges in Table~\ref{graph:90}.
\end{proof}

\begin{proposition}\label{prop91}
The multi-partite metacirculant graph
\begin{multline}	
G_{91}:= \Gamma(7, 13, 3, \{ \},\{ 4, 7, 8, 10, 11, 12 \},\\
		\{ 1, 3, 4, 7, 8, 9, 10, 11, 12 \}, \{ 0, 4, 7, 8, 10, 11, 12 \})	
\end{multline}
generates a new $\left(91, 2^{91}, 22\right)_4$ additive self-dual Type I code $C_{91}$. The corresponding $\dsb{91, 0, 22}_2$ qubit code $Q_{91}$ improves the minimum distance of the $\dsb{91,0,21}$ code in \cite{Grassl:codetables}.
\end{proposition}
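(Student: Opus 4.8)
The plan is to follow the template of Propositions~\ref{prop78} and~\ref{prop90}: exhibit the graph, obtain self-duality for free from the Schlingemann correspondence, settle the Type~I/Type~II question by a one-line parity remark, and then reduce the distance claim to a finite {\tt MAGMA} certification.

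First I would check that the tuple $(\alpha, S_0, S_1, S_2, S_3) = (3, \emptyset, \{4,7,8,10,11,12\}, \{1,3,4,7,8,9,10,11,12\}, \{0,4,7,8,10,11,12\})$ is admissible for $(m,n)=(7,13)$ in the sense of Definition~\ref{def:mc}: since $3$ is a unit modulo $13$ with $3^{3}\equiv 1$, we have $\alpha^{m}=3^{7}\equiv 3 \Mod{13}$, and each of $S_1,S_2,S_3$ is a union of orbits of the map $x\mapsto 3x$ on $\mathbb{Z}_{13}$, so property (3), $\alpha^{m}S_k=S_k$, holds; properties (1) and (2) are trivial because $S_0=\emptyset$, and (4) is vacuous as $m$ is odd. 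This is the parameter set produced by a randomized (non-exhaustive) {\tt MAGMA} search over admissible tuples. Because $S_0=\emptyset$, Theorem~\ref{thm:mult} immediately gives that $G_{91}$ is $7$-partite, which justifies the term ``multi-partite''. Next, by the Schlingemann--Danielsen--Parker correspondence recalled in Section~\ref{sec:SDC}, the code $C_{91}$ generated by the rows of $A(G_{91})+\omega I$ is symplectic self-dual over $\mathbb{F}_4$, hence a $(91,2^{91},d)_4$ code for some $d$. Since every Type~II additive $\mathbb{F}_4$ self-dual code has even length and $91$ is odd, $C_{91}$ is automatically Type~I; Theorem~\ref{thm:even} does not strictly apply here, as it assumes $2\mid mn$, although its formula would be consistent, as $\Delta_S=|S_0|=0$ is even.

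The genuinely substantive step is the minimum distance. One must certify that $C_{91}$ contains a codeword of weight $22$ --- a finite check, witnessed by one explicit word --- and, the hard part, that it contains no nonzero codeword of weight at most $21$. Direct enumeration of the $2^{91}$ codewords is hopeless, so I would invoke {\tt MAGMA}'s minimum-weight routine for additive codes, an information-set / Brouwer--Zimmermann-style search adapted to the underlying $\mathbb{F}_2$-linear structure, exploiting the block-circulant structure of $A(G_{91})$ --- a $7\times 7$ array of $13\times 13$ circulant blocks whose layers are cyclically permuted by $\sigma$ --- and the resulting automorphisms to prune the search. I expect this to be the main obstacle, both in runtime and in making the certification independently checkable; the remedy for the latter is to print the full edge list of $G_{91}$ in the appendix, parallel to Tables~\ref{graph:78} and~\ref{graph:90}, so that any reader can reconstruct $A(G_{91})$, form $A(G_{91})+\omega I$, and re-run the verification. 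Optionally, one could tabulate the weight distribution of $C_{91}$ as was done at lengths $27$ and $36$, but it is not needed for the statement.

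Finally, feeding the self-dual $C_{91}$ through the stabilizer construction of Calderbank, Rains, Shor, and Sloane in~\cite{Calderbank1998} produces a $\dsb{91,0,22}_2$ qubit code $Q_{91}$; comparing with the corresponding entry of~\cite{Grassl:codetables}, whose best length-$91$ zero-dimensional code has $d=21$, establishes the claimed strict improvement.
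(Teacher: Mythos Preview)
Your proposal is correct and follows essentially the same route as the paper: invoke Theorem~\ref{thm:mult} for the multi-partite claim (since $S_0=\emptyset$), note that odd length forces Type~I, delegate the minimum-distance certification to {\tt MAGMA}, and point to the appendix edge list (Table~\ref{table:List91}) for reproducibility. If anything, your version is more thorough than the paper's own proof, which omits the explicit admissibility check of the $S_k$ under $x\mapsto \alpha^m x$ and the discussion of how the distance computation is organised.
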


\begin{proof}
By Theorem~\ref{thm:mult}, $G_{91}$ is a multi-partite graph. The vertices are partitioned into $7$ layers, each containing $13$ vertices. We used {\tt MAGMA} to verify that the minimum distance $d$ of the generated $[91,45.5,d]_4$ self-dual additive code $C_{91}$ is indeed $22$. This code is clearly Type I since its length is odd. The corresponding $\dsb{91,0,22}_2$ qubit code $Q_{91}$ improves on the $\dsb{91,0,21}$ code currently listed as best-known in \cite{Grassl:codetables}. The list of the neighbours of each vertex in $G_{91}$ is in Table~\ref{table:List91}.
\end{proof}

\begin{proposition}\label{prop93}
The metacirculant graph
\begin{multline}	
	G_{93}:= \Gamma(3, 31, 1,  \{ 10, 12, 14, 15, 16, 17, 19, 21 \}, \\
	\{ 2, 7, 8, 10, 11, 13, 15, 16, 17, 19 \})	
\end{multline}
generates a new $\left(93, 2^{93}, 22\right)_4$ additive self-dual code $C_{93}$. The code is Type I due to its odd length. The resulting $\dsb{93, 0, 21}_2$ qubit code $Q_{93}$ has an improved minimum distance compared with the prior best known $\dsb{93,0,20}_2$ code in \cite{Grassl:codetables}.
\end{proposition}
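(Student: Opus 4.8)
Like Propositions~\ref{prop78}, \ref{prop90}, and \ref{prop91}, this is an explicit-construction claim, so the plan is to certify each assertion about $C_{93}$ and $Q_{93}$ in turn, with {\tt MAGMA}~\cite{BOSMA1997} carrying the computational load. First I would build the $93\times 93$ adjacency matrix $A(G_{93})$ directly from Definition~\ref{def:mc} with $m=3$, $n=31$, $\alpha=1$, $S_0=\{10,12,14,15,16,17,19,21\}$, and $S_1=\{2,7,8,10,11,13,15,16,17,19\}$, then set $C_{93}$ to be the $\mathbb{F}_4$-row span of $A(G_{93})+\omega I$. Symplectic self-duality of $C_{93}$, together with $\abs{C_{93}}=2^{93}$, is then immediate from Schlingemann's correspondence~\cite{Schlingemann2002} recalled in Section~\ref{sec:SDC}; that result needs nothing about $G_{93}$ beyond its being a simple undirected graph, so this step costs nothing once the adjacency matrix is written down.

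The type is the cheapest point. Since $\ell=93$ is odd and every additive $\mathbb{F}_4$ self-dual Type~II code has even length (Section~\ref{sec:SDC}), $C_{93}$ is forced to be Type~I with no computation; as a consistency check, Lemma~\ref{lem:valence} shows that $G_{93}$ is $28$-regular, and an even valence is precisely what the valence criterion in the proof of Theorem~\ref{thm:even} attaches to a non-Type-II code.

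The real work --- and the step I expect to be the obstacle --- is certifying that the minimum distance of $C_{93}$ equals $22$. Brute-force enumeration is hopeless, since $\abs{C_{93}}=2^{93}\approx 10^{28}$, so the plan is to invoke {\tt MAGMA}'s minimum-weight machinery for additive $\mathbb{F}_4$ codes, a Brouwer--Zimmermann-type information-set search (equivalently, run on the length-$186$ binary linear code that $C_{93}$ becomes under an $\mathbb{F}_2$-basis). To get the lower bound $d\ge 22$ to terminate in reasonable time I would exploit two structural features: self-duality, which halves the effective rank one must process in systematic form, and the metacirculant symmetry --- the defining automorphisms $\rho,\sigma$ of $G_{93}$ act on $C_{93}$ as a monomial group, so only orbit representatives need their weights tested. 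The matching upper bound is supplied by exhibiting a single codeword of weight $22$, and tabulating the complete weight distribution (as was done for the length-$27$ and length-$36$ families) would serve as corroboration. I would expect none of the other steps --- self-duality, the type, or the passage to the qubit code --- to cause difficulty; the distance certification is the bottleneck.

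Once $C_{93}$ is certified, the graph-to-stabilizer route of Section~\ref{sec:SDC} (the Calderbank--Rains--Shor--Sloane construction~\cite{Calderbank1998}, see also~\cite{Grassl20}) produces the zero-dimensional qubit code $Q_{93}$ on $93$ physical qubits, and I would have {\tt MAGMA} confirm its parameters are $\dsb{93,0,21}_2$; comparing with the relevant entry of Grassl's table~\cite{Grassl:codetables}, whose prior record was $\dsb{93,0,20}_2$, then gives the asserted improvement. To keep the argument reproducible in spite of the size of $G_{93}$, the write-up should close by listing, in an appendix table, the full neighbour set of each of the $93$ vertices, exactly as for $G_{78}$, $G_{90}$, and $G_{91}$.
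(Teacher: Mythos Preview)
Your overall plan mirrors the paper's proof almost exactly: build $A(G_{93})$ from Definition~\ref{def:mc}, form $C_{93}$ as the row span of $A(G_{93})+\omega I$, invoke Schlingemann for self-duality, read off Type~I from the odd length, hand the minimum-distance computation to {\tt MAGMA}, and point to an appended edge list for reproducibility. The extra commentary on Brouwer--Zimmermann and symmetry is reasonable embroidery but not something the paper itself spells out.

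There is, however, a genuine gap you should have caught. You propose to certify $d(C_{93})=22$ and then to ``confirm'' that $Q_{93}$ has parameters $\dsb{93,0,21}_2$. Those two claims are incompatible: by the very correspondence you cite from Section~\ref{sec:SDC}, a symplectic self-dual $(\ell,2^{\ell},d)_4$ code yields an $\dsb{\ell,0,d}_2$ qubit code with the \emph{same} $d$. The proposition as stated carries a typo (the ``$22$'' in the additive-code parameters), and the paper's own proof resolves it by verifying $d=21$, consistent with the abstract's entry $(\ell,d)=(93,21)$ and with the qubit parameters $\dsb{93,0,21}_2$. Your write-up should flag this inconsistency and aim the {\tt MAGMA} verification at $d=21$, not $22$; otherwise the certification step you describe as ``the bottleneck'' would never terminate with the answer you expect.
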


\begin{proof}
We used {\tt MAGMA} to verify that the generated $\left(93,2^{93},d\right)_4$ additive self-dual code $C_{93}$ has minimum distance $d = 21$. This Type I code corresponds to a $\dsb{93, 0, 21}_2$ qubit code $Q_{93}$. For verification, Table~\ref{list:93} lists the edges of $G_{93}$.
\end{proof}


\begin{proposition}\label{prop96}
The metacirculant graph
\begin{multline}
G_{96}:= \Gamma(6, 16, 7, 
\{ 2, 4, 6, 7, 9, 10, 12, 14 \},  
\{ 1, 3, 4, 5, 7, 9, 10 \},\\
\{ 3, 5, 11, 14 \},\{ 0, 2, 7, 10, 15\})	
\end{multline}
generates a new $\left(96, 2^{96}, 22\right)_4$ additive self-dual Type II code $C_{96}$. The resulting $\dsb{96, 0, 22}_2$ qubit code $Q_{96}$ improves the minimum distance of the prior best-known $\dsb{96,0,20}_2$ quantum code in \cite{Grassl:codetables}. 
\end{proposition}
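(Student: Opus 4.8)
The plan is to mirror the arguments used for Propositions~\ref{prop78} through~\ref{prop93}, each of which reduces to four routine steps: (a) check that the stated data defines a valid metacirculant graph; (b) invoke the Schlingemann / Danielsen--Parker correspondence from Section~\ref{sec:SDC} to obtain a symplectic self-dual additive code at no extra cost; (c) read off the Type from Theorem~\ref{thm:even}; and (d) certify the minimum distance by computer. Concretely, write $G_{96} := \Gamma\big(6,16,7,S_0,S_1,S_2,S_3\big)$ with $S_0 = \{2,4,6,7,9,10,12,14\}$, $S_1 = \{1,3,4,5,7,9,10\}$, $S_2 = \{3,5,11,14\}$, and $S_3 = \{0,2,7,10,15\}$. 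For step (a) I would verify the conditions of Definition~\ref{def:mc}: $\alpha = 7$ is a unit modulo $16$; $0 \notin S_0 = -S_0$; since $7^2 \equiv 1 \pmod{16}$ we have $\alpha^6 \equiv 1$, so $\alpha^6 S_k = S_k$ holds automatically for $k \in \{1,2,3\}$; and, as $m = 6$ is even, one checks $\alpha^3 S_3 = 7 \cdot \{0,2,7,10,15\} \equiv \{0,1,6,9,14\} = -S_3 \pmod{16}$.

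For step (b), I would form the matrix $A(G_{96}) + \omega I$ over $\mathbb{F}_4$; its row span $C_{96}$ is then symplectic self-dual of length $96$ and size $2^{96}$ with no further argument. For step (c), Theorem~\ref{thm:even} applies because $mn = 96$ is even: since $m$ is even, $\Delta_S = |S_0| + |S_3| = 8 + 5 = 13$ is odd, so $C_{96}$ is Type~II. The only genuinely substantive claim is step (d), namely $d(C_{96}) = 22$; this I would establish with a {\tt MAGMA} minimum-weight computation, exactly as in the earlier propositions, and then feed $C_{96}$ into the stabilizer construction of~\cite{Calderbank1998} to produce the $\dsb{96,0,22}_2$ qubit code $Q_{96}$, strictly beating the $\dsb{96,0,20}_2$ code currently recorded in~\cite{Grassl:codetables}. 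For reproducibility I would also list the full edge set of $G_{96}$ in a table in the appendix, as is done for the other improved lengths.

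The conceptual content here is minimal --- a line-by-line application of Theorem~\ref{thm:even} and the Schlingemann correspondence --- so the real difficulties are computational, not mathematical. The first is \emph{discovery}: unlike the $\ell \in \{27,36\}$ cases, an exhaustive sweep of the $(m,n) = (6,16)$ parameter families is infeasible, so $G_{96}$ has to be found by a randomized, non-exhaustive {\tt MAGMA} search, with no a priori guarantee that such a graph exists. The second is \emph{certification}: computing the true minimum distance of a length-$96$ additive code is the expensive bottleneck, and I would expect the bulk of the effort --- and the reliance on {\tt MAGMA}'s output --- to sit there. Neither step collapses to a clean hand argument; by design, this proposition is a verification rather than a derivation.
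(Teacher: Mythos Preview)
Your proposal is correct and follows essentially the same approach as the paper's own proof, which is even terser: it simply refers to the appendix table of edges for verification, confirms $d(C_{96})=22$ via {\tt MAGMA}, and applies Theorem~\ref{thm:even} with $\Delta_S = |S_0|+|S_3|=13$ to conclude Type~II. Your explicit check of the metacirculant axioms in Definition~\ref{def:mc} is a welcome addition that the paper omits.
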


\begin{proof}
Table~\ref{table:List96} lists the neighbours of each vertex of $G_{96}$, for verification. We confirmed that the minimum distance of $C_{96}$ is $22$ by using {\tt MAGMA}. Since $\Delta_S = |S_0|+ |S_3| = 13$, the code is Type II by Theorem~\ref{thm:even}.
\end{proof}

\section{Concluding Remarks}
There are at least two challenging aspects in finding improved symplectic self-dual additive codes over $\mathbb{F}_4$ or strictly better qubit codes. First, the search space quickly widens as the length $\ell$ grows. Second, determining the minimum distances of the codes is computationally expensive. 

We have shown that focusing on metacirculant graphs which are not isomorphic to the circulant ones is a fruitful approach. 

Bringing this work to its end, Table~\ref{table:property} summarizes the properties of the graphs that we have used above. All of them have diameter $2$ and girth $3$. Included in the table, for each graph $G$, are the minimum distance $d_{\rm min}(G)$ of the generated additive code $C:=C(G)$, the valency $\nu(G)$, the size $\gamma(G)$ of the maximum clique, and the size $|{\rm Aut}(G)|$ of the automorphism group of $G$.

\begin{table}
\caption{Properties of the Graphs}
\centering
\small
\begin{tabular}{c c c c c}
		\toprule
		Graph $G$ & $d_{\rm min}(G)$ & $\nu(G)$ & $\gamma(G)$ & $|{\rm Aut}(G)|$ \\
		\midrule
		$G_{12}$ & $6$ & $5$ & $4$ & $24$ \\
		
		$G_{27,1}$ & $9$ & $16$ & $6$ & $27$ \\
		$G_{27,2}$ & $9$ & $10$ & $4$ & $27$ \\
		$G_{36,1}$ & $12$ & $13$ & $6$ & $72$ \\
		$G_{36,2}$ & $12$ & $13$ & $4$ & $72$ \\
		
		$G_{78}$ & $20$ & $41$ & $7$ & $78$ \\
		$G_{90}$ & $21$ & $42$ & $7$ & $90$ \\
		$G_{91}$ & $22$ & $44$ & $7$ & $546$ \\
		$G_{93}$ & $22$ & $28$ & $4$ & $186$ \\
		$G_{96}$ & $22$ & $35$ & $6$ & $96$ \\
		\bottomrule 
	\end{tabular}
	\label{table:property} 
\end{table}

\begin{table}[h!]
\caption{New codes from modifying $Q_{78}$}
\renewcommand{\arraystretch}{1.1}
\centering
\small
\begin{tabular}{c c l} 
\toprule
Code & Parameters & Propagation rule \\
\midrule
		$Q_{78, 1}$ & $\dsb{77,0,19}_2$ & Puncture $Q_{78}$ at $\{78\}$\\
		$Q_{78,2}$ & $\dsb{77,1,19}_2$ & Shorten $Q_{78}$ at $\{78\}$\\
		
		$Q_{78,3}$ & $\dsb{78,1,19}_2$ & Lengthen $Q_{78,2}$ by $1$\\
		
		$Q_{78,4}$ & $\dsb{76,2,18}_2$ & Shorten $Q_{78}$ at $\{77, 78\}$     \\
		
		$Q_{78,5}$ & $\dsb{76,1,18}_2$ & Subcode of $Q_{78,4}$ \\
		
		$Q_{78,6}$ & $\dsb{77,2,18}_2$ & Lengthen $Q_{78,4}$ by $1$ \\
		
		$Q_{78,7}$ & $\dsb{75,3,17}_2$ & Shorten $Q_{78}$ at  $\{76,77,78\}$   \\
		
		$Q_{78,8}$ & $\dsb{76,3,17}_2$ & Lengthen $Q_{78,7}$ by $1$ \\
		
		$Q_{78,9}$ & $\dsb{75, 2, 17}_2$ & Subcode of $Q_{78,7}$ \\
\bottomrule
\end{tabular}
\label{table:length78}

\end{table}


\section*{Appendix}

The metacirculant graphs defined in Section~\ref{sec:betterqubit} are described here in details for reconstruction and verification, should the need arise. For a cleaner presentation, we relabel the vertices as positive integers instead of as tuples $(i,j)$ with $0 \leq i < m$ and $0 \leq j < n$. Given $m$ and $n$, vertices in partition $V_0$, now called layer $\mathcal{L}_1$, are labeled $1$ to $n$. Vertices $n+1$ to $2n$ are in partition $V_1$, now called layer $\mathcal{L}_2$. We repeat the assignment until all vertices are accounted for. 

For $G_{\ell}$ with $\ell \in \{78,93,96\}$, the edges in $G_{\ell}$ are listed according to their layer(s) to highlight the partitioning of the vertices and how different layers are connected. The numbers of layers for $\ell \in \{90,91\}$ are $> 6$. To keep the description for each graph within a page, we simply list their edges.

\begin{table}
	\caption{The $1\,599$ Edges of $G_{78}$ in Proposition \ref{prop78}: Vertices are partitioned into $6$ layers $\mathcal{L}_1, \ldots,\mathcal{L}_6$, with $13$ vertices in each layer. Vertices $z, z+1, \ldots, z+12$ are in $\mathcal{L}_{k+1}$ for $0 \leq k \leq 5$ and $z=13k+1$.}
	\centering
	\resizebox{\textwidth}{!}{
	\begin{tabular}{cc l}
		\toprule
		$\#$ & $(r,s)$ & $(i,\{j \in J\})$ \\
		\midrule
		$39$ & $(1,1)$ & $(1, \{2, 5, 7, 8, 10, 13\}), 
		(2, \{3, 6, 8, 9,11\}), 
		(3, \{4, 7, 9, 10, 12\}), 
		(4, \{5, 8, 10, 11, 13\}), 
		(5, \{6, 9, 11, 12\}), 
		(6, \{7, 10, 12, 13\}),$\\
		& & $(7, \{8, 11, 13\}),(8, \{9, 12\}), 
		(9, \{10,13\}),(10, \{11\}),(11, \{12\}),(12, \{13\})$\\
		
		$39$ & $(2,2)$ & $(14, \{15, 18, 20, 21, 23, 26\}),(15, \{16, 19, 21, 22, 24\}), (16, \{17, 20, 22, 23, 25\}), (17, \{18, 21, 23, 24, 26\}),(18, \{19, 22, 24, 25\}),$\\
		
		& & $(19, \{20, 23, 25, 26\}),
		(20, \{21, 24, 26\}),(21, \{22, 25\}),
		(22, \{23, 26\}), (23, \{24\}), (24, \{25\}),(25, \{26\})$\\ 
		
		$39$ & $(3,3)$ & $(27, \{28, 31, 33, 34, 36, 39 \}),
		(28, \{29, 32, 34, 35, 37 \}),
		(29, \{30, 33, 35, 36, 38 \}),
		(30, \{31, 34, 36, 37, 39 \}),
		(31, \{32, 35, 37, 38 \}),$\\ 
		&& $(32, \{33, 36, 38, 39 \}),(33, \{34, 37, 39 \}),
		(34, \{35, 38 \}),(35, \{36, 39 \}),(36, \{37 \}),
		(37, \{38 \}),(38, \{39 \})$ \\
		
		$39$ & $(4,4)$ & $(40, \{41, 44, 46, 47, 49, 52 \}), 
		(41, \{42, 45, 47, 48, 50 \}),
		(42, \{43, 46, 48, 49, 51 \}), 
		(43, \{44, 47, 49, 50, 52 \}), 
		(44, \{45, 48, 50, 51 \}),$\\ 
		&& $(45, \{46, 49, 51, 52 \}), 
		(46, \{47, 50, 52 \}),
		(47, \{48, 51 \}),
		(48, \{49, 52 \}), 
		(49, \{50 \}),
		(50, \{51 \}), 
		(51, \{52 \})$\\
		
		$39$ & $(5,5)$ & $(53, \{54, 57, 59, 60, 62, 65 \}), 
		(54, \{55, 58, 60, 61, 63 \}),
		(55, \{56, 59, 61, 62, 64 \}), 
		(56, \{57, 60, 62, 63, 65 \}), 
		(57, \{58, 61, 63, 64 \}),$\\
		
		&& $(58, \{59, 62, 64, 65 \}), 
		(59, \{60, 63, 65 \}), 
		(60, \{61, 64 \}), 
		(61, \{62, 65 \}), 
		(62, \{63 \}), 
		(63, \{64 \}), 
		(64, \{65 \})$\\ 
		
		$39$ & $(6,6)$ & $(66, \{67, 70, 72, 73, 75, 78 \}), 
		(67, \{68, 71, 73, 74, 76 \}),
		(68, \{69, 72, 74, 75, 77 \}), 
		(69, \{70, 73, 75, 76, 78 \}), 
		(70, \{71, 74, 76, 77 \}),$\\
		&& $(71, \{72, 75, 77, 78 \}),
		(72, \{73, 76, 78 \}),
		(73, \{74, 77 \}),
		(74, \{75, 78 \}), 
		(75, \{76 \}),
		(76, \{77 \}),
		(77, \{78 \})$\\
		
		$104$ & $(1,2)$ & 
		$(1, \{15, 16, 17, 19, 21, 22, 23, 25 \}),
		(2, \{16, 17, 18, 20, 22, 23, 24, 26 \}), 
		(3, \{14, 17, 18, 19, 21, 23, 24, 25 \}),
		(4, \{15, 18, 19, 20, 22, 24, 25, 26 \}),$\\
		
		&& $(5, \{14, 16, 19, 20, 21, 23, 25, 26 \}), 
		(6, \{14, 15, 17, 20, 21, 22, 24, 26 \}), 
		(7, \{14, 15, 16, 18, 21, 22, 23, 25 \}), 
		(8, \{15, 16, 17, 19, 22, 23, 24, 26 \}),$\\ 
		
		&& $(9, \{14, 16, 17, 18, 20, 23, 24, 25 \}),
		(10, \{15, 17, 18, 19, 21, 24, 25, 26 \}),
		(11, \{14, 16, 18, 19, 20, 22, 25, 26 \}),$\\ 
		
		&& $(12, \{14, 15, 17, 19, 20, 21, 23, 26 \}),
		(13, \{14, 15, 16, 18, 20, 21, 22, 24 \})$\\ 
		
		$78$ & $(1,3)$ &
		$(1, \{28, 29, 31, 32, 35, 37 \}), 
		(2, \{29, 30, 32, 33, 36, 38 \}), 
		(3, \{30, 31, 33, 34, 37, 39 \}), 
		(4, \{27, 31, 32, 34, 35, 38 \}), 
		(5, \{28, 32, 33, 35, 36, 39 \}),$\\
		
		&& $(6, \{27, 29, 33, 34, 36, 37 \}), 
		(7, \{28, 30, 34, 35, 37, 38 \}), 
		(8, \{29, 31, 35, 36, 38, 39 \}), 
		(9, \{27, 30, 32, 36, 37, 39 \}),$\\
		
		&& $(10, \{27, 28, 31, 33, 37, 38 \}), 
		(11, \{28, 29, 32, 34, 38, 39 \}), 
		(12, \{27, 29, 30, 33, 35, 39 \}), 
		(13, \{27, 28, 30, 31, 34, 36 \}),$\\ 
		
		$91$ & $(1,4)$ & $(1, \{41, 42, 43, 46, 47, 48, 52 \}), 
		(2, \{40, 42, 43, 44, 47, 48, 49 \}), 
		(3, \{41, 43, 44, 45, 48, 49, 50 \}), 
		(4, \{42, 44, 45, 46, 49, 50, 51 \}),$\\
		
		&& $(5, \{43, 45, 46, 47, 50, 51, 52 \}),
		(6, \{40, 44, 46, 47, 48, 51, 52 \}), 
		(7, \{40, 41, 45, 47, 48, 49, 52 \}),
		(8, \{40, 41, 42, 46, 48, 49, 50 \}),$\\
		
		&& $(9, \{41, 42, 43, 47, 49, 50, 51 \}),
		(10, \{42, 43, 44, 48, 50, 51, 52 \}), 
		(11, \{40, 43, 44, 45, 49, 51, 52 \}),$\\
		
		&& $(12, \{40, 41, 44, 45, 46, 50, 52 \}), 
		(13, \{40, 41, 42, 45, 46, 47, 51 \})$\\
		
		$78$ & $(1,5)$ & $(1, \{56, 58, 61, 62, 64, 65 \}), 
		(2, \{53, 57, 59, 62, 63, 65 \}), 
		(3, \{53, 54, 58, 60, 63, 64 \}), 
		(4, \{54, 55, 59, 61, 64, 65 \}), 
		(5, \{53, 55, 56, 60, 62, 65 \}),$\\ 
		
		&& $(6, \{53, 54, 56, 57, 61, 63 \}), 
		(7, \{54, 55, 57, 58, 62, 64 \}), 
		(8, \{55, 56, 58, 59, 63, 65 \}), 
		(9, \{53, 56, 57, 59, 60, 64 \}),$\\ 
		
		&&$(10, \{54, 57, 58, 60, 61, 65 \}), 
		(11, \{53, 55, 58, 59, 61, 62 \}), 
		(12, \{54, 56, 59, 60, 62, 63 \}), 
		(13, \{55, 57, 60, 61, 63, 64 \})$\\ 
		
		$104$ & $(1,6)$ &$(1, \{67, 68, 69, 71, 73, 74, 75, 77 \}), 
		(2, \{68, 69, 70, 72, 74, 75, 76, 78 \}), 
		(3, \{66, 69, 70, 71, 73, 75, 76, 77 \}), 
		(4, \{67, 70, 71, 72, 74, 76, 77, 78 \}),$\\ 
		
		&& $(5, \{66, 68, 71, 72, 73, 75, 77, 78 \}), 
		(6, \{66, 67, 69, 72, 73, 74, 76, 78 \}), 
		(7, \{66, 67, 68, 70, 73, 74, 75, 77 \}), 
		(8, \{67, 68, 69, 71, 74, 75, 76, 78 \}),$ \\
		
		&& $(9, \{66, 68, 69, 70, 72, 75, 76, 77 \}), 
		(10, \{67, 69, 70, 71, 73, 76, 77, 78 \}),
		(11, \{66, 68, 70, 71, 72, 74, 77, 78 \}),$\\ 
		
		&& $(12, \{66, 67, 69, 71, 72, 73, 75, 78 \}), 
		(13, \{66, 67, 68, 70, 72, 73, 74, 76 \})$ \\
		
		$104$ & $(2,3)$ &$(14, \{29, 31, 32, 33, 35, 37, 38, 39 \}), 
		(15, \{27, 30, 32, 33, 34, 36, 38, 39 \}), 
		(16, \{27, 28, 31, 33, 34, 35, 37, 39 \}), 
		(17, \{27, 28, 29, 32\}),$\\ 
		
		&&$(17,\{34, 35, 36, 38 \}),
		(18, \{28, 29, 30, 33, 35, 36, 37, 39 \}), 
		(19, \{27, 29, 30, 31, 34, 36, 37, 38 \}), 
		(20, \{28, 30, 31, 32, 35, 37, 38, 39 \}),$ \\
		
		&&$(21, \{27, 29, 31, 32, 33, 36, 38, 39 \}),
		(22, \{27, 28, 30, 32, 33, 34, 37, 39 \}), 
		(23, \{27, 28, 29, 31, 33, 34, 35, 38 \}),$\\ 
		
		&&$(24, \{28, 29, 30, 32, 34, 35, 36, 39 \}),
		(25, \{27, 29, 30, 31, 33, 35, 36, 37 \}), 
		(26, \{28, 30, 31, 32, 34, 36, 37, 38 \})$\\ 
		
		$78$ & $(2,4)$ & $(14, \{43, 45, 48, 49, 51, 52 \}), 
		(15, \{40, 44, 46, 49, 50, 52 \}), 
		(16, \{40, 41, 45, 47, 50, 51 \}), 
		(17, \{41, 42, 46, 48, 51, 52 \}), 
		(18, \{40, 42, 43\}),$\\
		
		&&$(18, \{ 47, 49, 52 \}),(19, \{40, 41, 43, 44, 48, 50 \}), 
		(20, \{41, 42, 44, 45, 49, 51 \}), 
		(21, \{42, 43, 45, 46, 50, 52 \}), 
		(22, \{40, 43, 44, 46, 47, 51 \}),$ \\
		
		&&$(23, \{41, 44, 45, 47, 48, 52 \}), 
		(24, \{40, 42, 45, 46, 48, 49 \}), 
		(25, \{41, 43, 46, 47, 49, 50 \}), 
		(26, \{42, 44, 47, 48, 50, 51 \})$\\
		
		$91$ & $(2,5)$ & $(14, \{54, 58, 59, 60, 63, 64, 65 \}), 
		(15, \{53, 55, 59, 60, 61, 64, 65 \}), 
		(16, \{53, 54, 56, 60, 61, 62, 65 \}), 
		(17, \{53, 54, 55, 57\}),$\\ 
		
		&&$(17, \{61, 62, 63 \}),
		(18, \{54, 55, 56, 58, 62, 63, 64 \}), 
		(19, \{55, 56, 57, 59, 63, 64, 65 \}), 
		(20, \{53, 56, 57, 58, 60, 64, 65 \}),$\\ 
		
		&&$(21, \{53, 54, 57, 58, 59, 61, 65 \}),
		(22, \{53, 54, 55, 58, 59, 60, 62 \}), 
		(23, \{54, 55, 56, 59, 60, 61, 63 \}),$\\ 
		
		&&$(24, \{55, 56, 57, 60, 61, 62, 64 \}),
		(25, \{56, 57, 58, 61, 62, 63, 65 \}), 
		(26, \{53, 57, 58, 59, 62, 63, 64 \})$\\
		
		$78$ & $(2,6)$ & $(14, \{67, 68, 70, 71, 74, 76 \}), 
		(15, \{68, 69, 71, 72, 75, 77 \}), 
		(16, \{69, 70, 72, 73, 76, 78 \}), 
		(17, \{66, 70, 71, 73, 74, 77 \}), 
		(18, \{67, 71, 72\}),$\\ 
		
		&& $(18, \{ 74, 75, 78 \}),
		(19, \{66, 68, 72, 73, 75, 76 \}), 
		(20, \{67, 69, 73, 74, 76, 77 \}), 
		(21, \{68, 70, 74, 75, 77, 78 \}), 
		(22, \{66, 69, 71, 75, 76, 78 \}),$\\ 
		
		&& $(23, \{66, 67, 70, 72, 76, 77 \}), 
		(24, \{67, 68, 71, 73, 77, 78 \}), 
		(25, \{66, 68, 69, 72, 74, 78 \}), 
		(26, \{66, 67, 69, 70, 73, 75 \})$\\
		
		$104$ & $(3,4)$ & $(27, \{41, 42, 43, 45, 47, 48, 49, 51 \}), 
		(28, \{42, 43, 44, 46, 48, 49, 50, 52 \}), 
		(29, \{40, 43, 44, 45, 47, 49, 50, 51 \}), 
		(30, \{41, 44, 45, 46\}),$\\
		
		&& $(30, \{ 48, 50, 51, 52 \}), 
		(31, \{40, 42, 45, 46, 47, 49, 51, 52 \}), 
		(32, \{40, 41, 43, 46, 47, 48, 50, 52 \}), 
		(33, \{40, 41, 42, 44, 47, 48, 49, 51 \}),$\\

		&& $(34, \{41, 42, 43, 45, 48, 49, 50, 52 \}), 
		(35, \{40, 42, 43, 44, 46, 49, 50, 51 \}), 
		(36, \{41, 43, 44, 45, 47, 50, 51, 52 \}),$\\ 
		
		&& $(37, \{40, 42, 44, 45, 46, 48, 51, 52 \}),
		(38, \{40, 41, 43, 45, 46, 47, 49, 52 \}), 
		(39, \{40, 41, 42, 44, 46, 47, 48, 50 \})$\\
		
		$78$ & $(3,5)$ & $(27, \{54, 55, 57, 58, 61, 63 \}), 
		(28, \{55, 56, 58, 59, 62, 64 \}), 
		(29, \{56, 57, 59, 60, 63, 65 \}), 
		(30, \{53, 57, 58, 60, 61, 64 \}), 
		(31, \{54, 58, 59\}),$\\
		
		&& $(31, \{61, 62, 65 \}), 
		(32, \{53, 55, 59, 60, 62, 63 \}), 
		(33, \{54, 56, 60, 61, 63, 64 \}), 
		(34, \{55, 57, 61, 62, 64, 65 \}), 
		(35, \{53, 56, 58, 62, 63, 65 \}),$\\ 
		
		&&$(36, \{53, 54, 57, 59, 63, 64 \}), 
		(37, \{54, 55, 58, 60, 64, 65 \}), 
		(38, \{53, 55, 56, 59, 61, 65 \}), 
		(39, \{53, 54, 56, 57, 60, 62 \}),$\\
		
		$91$ & $(3,6)$ &$(27, \{67, 68, 69, 72, 73, 74, 78 \}), 
		(28, \{66, 68, 69, 70, 73, 74, 75 \}), 
		(29, \{67, 69, 70, 71, 74, 75, 76 \}),
		(30, \{68, 70, 71, 72, 75, 76, 77 \}),$\\
		
		&&$(31, \{69, 71, 72, 73, 76, 77, 78 \}), 
		(32, \{66, 70, 72, 73, 74, 77, 78 \}), 
		(33, \{66, 67, 71, 73, 74, 75, 78 \}), 
		(34, \{66, 67, 68, 72, 74, 75, 76 \}),$\\ 
		
		&&$(35, \{67, 68, 69, 73, 75, 76, 77 \}), 
		(36, \{68, 69, 70, 74, 76, 77, 78 \}), 
		(37, \{66, 69, 70, 71, 75, 77, 78 \}),$\\ 
		
		&&$(38, \{66, 67, 70, 71, 72, 76, 78 \}), 
		(39, \{66, 67, 68, 71, 72, 73, 77 \})$\\
		
		$104$ & $(4,5)$ & 
		$(40, \{55, 57, 58, 59, 61, 63, 64, 65 \}), 
		(41, \{53, 56, 58, 59, 60, 62, 64, 65 \}), 
		(42, \{53, 54, 57, 59, 60, 61, 63, 65 \}), 
		(43, \{53, 54, 55, 58\}),$\\
		
		&& $(43, \{60, 61, 62, 64 \}), 
		(44, \{54, 55, 56, 59, 61, 62, 63, 65 \}), 
		(45, \{53, 55, 56, 57, 60, 62, 63, 64 \}), 
		(46, \{54, 56, 57, 58, 61, 63, 64, 65 \}),$\\
		
		&& $(47, \{53, 55, 57, 58, 59, 62, 64, 65 \}), 
		(48, \{53, 54, 56, 58, 59, 60, 63, 65 \}), 
		(49, \{53, 54, 55, 57, 59, 60, 61, 64 \}),$\\
		
		&&$ (50, \{54, 55, 56, 58, 60, 61, 62, 65 \}), 
		(51, \{53, 55, 56, 57, 59, 61, 62, 63 \}), 
		(52, \{54, 56, 57, 58, 60, 62, 63, 64 \})$\\
		
		$78$& $(4,6)$ & 
		$(40, \{69, 71, 74, 75, 77, 78 \}), 
		(41, \{66, 70, 72, 75, 76, 78 \}), 
		(42, \{66, 67, 71, 73, 76, 77 \}), 
		(43, \{67, 68, 72, 74, 77, 78 \}), 
		(44, \{66, 68, 69\},$\\
		
		&& $(44, \{73, 75, 78 \}),
		(45, \{66, 67, 69, 70, 74, 76 \}), 
		(46, \{67, 68, 70, 71, 75, 77 \}), 
		(47, \{68, 69, 71, 72, 76, 78 \}), 
		(48, \{66, 69, 70, 72, 73, 77 \}),$\\
		
		&& $(49, \{67, 70, 71, 73, 74, 78 \}), 
		(50, \{66, 68, 71, 72, 74, 75 \}), 
		(51, \{67, 69, 72, 73, 75, 76 \}), 
		(52, \{68, 70, 73, 74, 76, 77 \})$\\
		
		$104$ & $(5,6)$ & $(53, \{67, 68, 69, 71, 73, 74, 75, 77 \}), 
		(54, \{68, 69, 70, 72, 74, 75, 76, 78 \}), 
		(55, \{66, 69, 70, 71, 73, 75, 76, 77 \}), 
		(56, \{67, 70, 71, 72\}),$\\
		
		&& $(56, \{74, 76, 77, 78 \}), 
		(57, \{66, 68, 71, 72, 73, 75, 77, 78 \}), 
		(58, \{66, 67, 69, 72, 73, 74, 76, 78 \}), 
		(59, \{66, 67, 68, 70, 73, 74, 75, 77 \}),$\\
		
		&& $(60, \{67, 68, 69, 71, 74, 75, 76, 78 \}), 
		(61, \{66, 68, 69, 70, 72, 75, 76, 77 \}), 
		(62, \{67, 69, 70, 71, 73, 76, 77, 78 \}),$\\
		
		&& $(63, \{66, 68, 70, 71, 72, 74, 77, 78 \}), 
		(64, \{66, 67, 69, 71, 72, 73, 75, 78 \}), 
		(65, \{66, 67, 68, 70, 72, 73, 74, 76 \})$\\
		\bottomrule 
	\end{tabular}}
	\label{graph:78}
\end{table}

\begin{table}
	\caption{The $1\,890$ edges of $G_{90}$ in Proposition\ref{prop90}. Vertices are partitioned into $10$ layers $\mathcal{L}_1, \ldots, \mathcal{L}_{10}$, with $9$ vertices in each layer. Vertices $z,z+1,\ldots,z+8$ are in $\mathcal{L}_{k+1}$ for $0 \leq k \leq 9$ and $z=9k+1$.}
	\setlength{\tabcolsep}{2pt}
	\renewcommand{\arraystretch}{1.1}
	\centering
	\resizebox{\textwidth}{!}{
	\begin{tabular}{c l | c l | c l}
		\toprule
		$i$ & \multicolumn{1}{c|}{$\left\{j : (i,j) \in E_{G_{90}}\right\}$} & 
		$i$ & \multicolumn{1}{c|}{$\left\{j : (i,j) \in E_{G_{90}}\right\}$} & 
		$i$ & \multicolumn{1}{c}{$\left\{j : (i,j) \in E_{G_{90}}\right\}$} \\
		\midrule
		$1$ & $\{ 
		2, 9, 10, 11, 12, 14, 15, 18, 24, 25, 30, 32, 33, 34, 36, 37, 38, 39, 41,44, 46, 51,$ & 
		$29$ & $\{30, 37, 38, 39, 42, 43, 45, 50, 51, 57,$ &
		$49$ & $\{50, 56, 57, 58, 59, 62, 63, 70, 71, 74 ,$ \\ 
		
		&$53, 54, 55, 57, 60, 62, 63, 66, 68, 69, 70, 72, 76, 77, 82, 83, 84, 86, 87, 90 \}$ & 
		&  $59, 60, 61, 63, 64, 65, 67, 70, 72, 74,$ & 
		& $ 77, 79, 80, 81, 83, 84, 85, 87, 90\}$\\
		
		$2$ & $\{ 
		3, 10, 11, 12, 13, 15, 16, 25, 26, 28, 31, 33, 34, 35, 38, 39, 40, 42, 
		45, 46, 47,$ & 
		& $75, 76, 78, 83, 84, 85, 87, 90\}$ & 
		$50$ & $\{51, 55, 57, 58, 59, 60, 63, 71, 72, 73 ,$\\
		
		&$52, 54, 55, 56, 58, 61, 63, 64, 67, 69, 70, 71, 77, 78, 82, 83, 84, 
		85, 87, 88 \}$ & 
		$30$ & $\{31, 37, 38, 39, 40, 43, 44, 51, 52, 55,$ & 
		& $ 75, 78, 80, 81, 82, 84, 85, 86, 88\}$\\
		
		$3$ & $\{ 
		4, 11, 12, 13, 14, 16, 17, 26, 27, 29, 32, 34, 35, 36, 37, 39, 40, 41, 
		43, 46, 47,$ & & 
		$58, 60, 61, 62, 64, 65, 66, 68, 71, 75,$ & 
		$51$ & $\{ 52, 55, 56, 58, 59, 60, 61, 64, 72, 73,$\\
		
		&$48, 53, 55, 56, 57, 59, 62, 65, 68, 70, 71, 72, 78, 79, 83, 84, 85, 
		86, 88, 89 \}$ &
		& $76, 77, 79, 82, 84, 85, 86, 88 \}$ & 
		& $\{ 74, 76, 79, 81, 83, 85, 86, 87, 89,$\\
		
		$4$ & $\{ 
		5, 12, 13, 14, 15, 17, 18, 19, 27, 28, 30, 33, 35, 36, 38, 40, 41, 42, 44, 47, 48,$ &
		$31$ & $\{32, 38, 39, 40, 41, 44, 45, 52, 53, 56,$ &
		$52$ & $\{53, 56, 57, 59, 60, 61, 62, 64, 65, 73 ,$\\
		
		&$49, 54, 56, 57, 58, 60, 63, 64, 66, 69, 71, 72, 79, 80, 84, 85, 86, 87, 89, 90 \}$ & 
		& $59, 61, 62, 63, 65, 66, 67, 69, 72, 76,$ &
		& $ 74, 75, 77, 80, 84, 86, 87, 88, 90\}$\\
		
		$5$ & $\{ 
		6, 10, 13, 14, 15, 16, 18, 19, 20, 28, 29, 31, 34, 36, 39, 41, 42, 43, 
		45, 46, 48,$ & 
		& $77, 78, 80, 83, 85, 86, 87, 89 \}$ & 
		$53$ & $\{54, 57, 58, 60, 61, 62, 63, 65, 66, 74,$\\
		
		&$49, 50, 55, 57, 58, 59, 61, 64, 65, 67, 70, 72, 80, 81, 82, 85, 86, 
		87, 88, 90 \}$ & 
		$32$ & $\{33, 37, 39, 40, 41, 42, 45, 53, 54, 55,$ &
		& $75, 76, 78, 81, 82, 85, 87, 88, 89 \}$\\
		
		$6$ & $\{ 
		7, 10, 11, 14, 15, 16, 17, 20, 21, 28, 29, 30, 32, 35, 37, 40, 42, 43, 
		44, 47, 49,$ & 
		& $57, 60, 62, 63, 64, 66, 67, 68, 70, 77,$ &
		$54$ & $\{55, 58, 59, 61, 62, 63, 66, 67, 73 ,$ \\
		
		&$50, 51, 56, 58, 59, 60, 62, 64, 65, 66, 68, 71, 73, 81, 82, 83, 86, 
		87, 88, 89 \}$ & 
		& $78, 79, 81, 84, 86, 87, 88, 90 \}$ &
		& $ 75, 76, 77, 79, 83, 86, 88, 89, 90\}$\\
		
		$7$ & $\{ 
		8, 11, 12, 15, 16, 17, 18, 21, 22, 29, 30, 31, 33, 36, 38, 41, 43, 44, 
		45, 48, 50,$ & 
		$33$ & $\{34, 37, 38, 40, 41, 42, 43, 46, 54, 55,$ &
		$55$ & $\{56, 63, 64, 65, 66, 68, 69, 72,$ \\
		
		&$51, 52, 57, 59, 60, 61, 63, 65, 66, 67, 69, 72, 73, 74, 83, 84, 87, 
		88, 89, 90 \}$ & 
		& $56, 58, 61, 63, 65, 67, 68, 69, 71, 73,$ &
		& $78, 79, 84, 86, 87, 88, 90 \}$\\
		
		$8$ & $\{ 
		9, 10, 12, 13, 16, 17, 18, 22, 23, 28, 30, 31, 32, 34, 37, 39, 42, 44, 
		45, 49, 51,$ & 
		& $78, 79, 80, 82, 85, 87, 88, 89 \}$ & 
		$56$ & $\{ 57, 64, 65, 66, 67, 69, 70,$\\
		
		&$52, 53, 55, 58, 60, 61, 62, 64, 66, 67, 68, 70, 74, 75, 82, 84, 85, 
		88, 89, 90\}$ & 
		$34$ & $\{35, 38, 39, 41, 42, 43, 44, 46, 47, 55,$ &
		& $ 79, 80, 82, 85, 87, 88, 89\}$\\
		
		$9$ & $\{ 
		10, 11, 13, 14, 17, 18, 23, 24, 29, 31, 32, 33, 35, 37, 38, 40, 43, 45, 50, 52,$ & 
		& $56, 57, 59, 62, 66, 68, 69, 70, 72, 74,$ &
		$57$ & $\{ 58, 65, 66, 67, 68, 70, 71,$\\
		
		&$53, 54, 56, 59, 61, 62, 63, 65, 67, 68, 69, 71, 75, 76, 82, 83, 85, 86, 89, 90 \}$ & 
		& $79, 80, 81, 83, 86, 88, 89, 90 \}$ &
		& $80, 81, 83, 86, 88, 89, 90\}$\\
		
		$10$ & $\{ 
		11, 18, 19, 20, 23, 24, 26, 27, 31, 32, 38, 40, 41, 42, 44, 46, 48,51,$ & 
		$35$ &$\{36, 39, 40, 42, 43, 44, 45, 47, 48, 56,$ &
		$58$ & $\{59, 66, 67, 68, 69, 71, 72,$\\
		
		&$53, 54, 55, 56, 57, 59, 64, 65, 66, 68, 71, 74, 76, 77, 78, 80, 87, 88 \}$ & 
		& $57, 58, 60, 63, 64, 67, 69, 70, 71, 73,$ & 
		& $73, 81, 82, 84, 87, 89, 90 \}$\\
		
		$11$ & $\{ 
		12, 19, 20, 21, 24, 25, 27, 32, 33, 39, 41, 42, 43, 45, 46, 47, 49, 52,$ & 
		& $75, 80, 81, 82, 84, 87, 89, 90\}$ & 
		$59$ & $\{ 60, 64, 67, 68, 69, 70, 72,$\\
		
		&$54, 56, 57, 58, 60, 65, 66, 67, 69, 72, 75, 77, 78, 79, 81, 88, 89 \}$ & 
		$36$ & $\{ 37, 40, 41, 43, 44, 45, 48, 49, 55, 57,$ & 
		& $ 73, 74, 82, 83, 85, 88, 90\}$\\
		
		$12$ & $\{ 
		13, 19, 20, 21, 22, 25, 26, 33, 34, 37, 40, 42, 43, 44, 46, 47, 48, 50,$ & 
		& $58, 59, 61, 65, 68, 70, 71, 72, 73, 74,$ &
		$60$ & $\{61, 64, 65, 68, 69, 70, 71,$ \\
		
		&$53, 57, 58, 59, 61, 64, 66, 67, 68, 70, 73, 76, 78, 79, 80, 89, 90 \}$ & 
		& $76, 81, 82, 83, 85, 88, 90 \}$ &
		& $74, 75, 82, 83, 84, 86, 89 \}$\\
		
		$13$ & $\{ 
		14, 20, 21, 22, 23, 26, 27, 34, 35, 38, 41, 43, 44, 45, 47, 48, 49, 51,$ & 
		$37$ & $\{38, 45, 46, 47, 48, 50, 51, 54,$ &
		$61$ & $\{ 62, 65, 66, 69, 70, 71, 72,$ \\
		
		&$54, 58, 59, 60, 62, 65, 67, 68, 69, 71, 74, 77, 79, 80, 81, 82, 90 \}$ & 
		& $60, 61, 66, 68, 69, 70, 72, 73,$ &
		& $75, 76, 83, 84, 85, 87, 90 \}$ \\
		
		$14$ & $\{ 
		15, 19, 21, 22, 23, 24, 27, 35, 36, 37, 39, 42, 44, 45, 46, 48, 49, 50,$ & 
		& $ 74, 75, 77, 80, 82, 87, 89, 90\}$ &
		$62$ & $\{63, 64, 66, 67, 70, 71, 72,$ \\
		
		&$52, 59, 60, 61, 63, 66, 68, 69, 70, 72, 73, 75, 78, 80, 81, 82, 83 \}$ & 
		$38$ & $\{ 39, 46, 47, 48, 49, 51, 52, 61,$ &
		& $76, 77, 82, 84, 85, 86, 88 \}$ \\
		
		$15$ & $\{ 
		16, 19, 20, 22, 23, 24, 25, 28, 36, 37, 38, 40, 43, 45, 47, 49, 50, 51,$ & 
		& $62, 64, 67, 69, 70, 71, 74, 75,$ &
		$63$ & $\{ 64, 65, 67, 68, 71, 72, 77,$ \\
		
		&$53, 55, 60, 61, 62, 64, 67, 69, 70, 71, 73, 74, 76, 79, 81, 83, 84 \}$ & 
		& $76, 78, 81, 82, 83, 88, 90\}$ &
		& $78, 83, 85, 86, 87, 89\}$ \\
		
		$16$ & $\{ 
		17, 20, 21, 23, 24, 25, 26, 28, 29, 37, 38, 39, 41, 44, 48, 50, 51, 52,$ & 
		$39$ & $\{ 40, 47, 48, 49, 50, 52, 53, 62,$ &
		$64$ & $\{ 65, 72, 73, 74, 77, 78, 80, 81, 85, 86 \}$\\
		
		&$54, 56, 61, 62, 63, 65, 68, 70, 71, 72, 73, 74, 75, 77, 80, 84, 85 \}$ &
		& $ 63, 65, 68, 70, 71, 72, 73, 75,$ & 
		$65$ & $\{66, 73, 74, 75, 78, 79, 81, 86, 87\}$ \\ 
		
		$17$ & $\{ 
		18, 21, 22, 24, 25, 26, 27, 29, 30, 38, 39, 40, 42, 45, 46, 49, 51, 52,$ &
		& $76, 77, 79, 82, 83, 84, 89 \}$ & 
		$66$ & $\{67, 73, 74, 75, 76, 79, 80, 87, 88\}$ \\
		
		&$53, 55, 57, 62, 63, 64, 66, 69, 71, 72, 74, 75, 76, 78, 81, 85, 86 \}$ & 
		$40$ & $\{ 41, 48, 49, 50, 51, 53, 54, 55,$ &
		$67$ & $\{68, 74, 75, 76, 77, 80, 81, 88, 89\}$ \\ 
		
		$18$ & $\{ 
		19, 22, 23, 25, 26, 27, 30, 31, 37, 39, 40, 41, 43, 47, 50, 52, 53, 54,$ & 
		& $63, 64, 66, 69, 71, 72, 74, 76 ,$ & 
		$68$ & $\{69, 73, 75, 76, 77, 78, 81, 89, 90\}$ \\
		
		&$55,56, 58, 63, 64, 65, 67, 70, 72, 73, 75, 76, 77, 79, 86, 87 \}$ & 
		& $77, 78, 80, 83, 84, 85, 90 \}$ &
		$69$ & $\{70, 73, 74, 76, 77, 78, 79, 82, 90\}$ \\ 
		
		$19$ & $\{ 
		20, 27, 28, 29, 30, 32, 33, 36, 42, 43, 48, 50, 51, 52, 54, 55, 56, 57,$ & 
		$41$ & $\{42, 46, 49, 50, 51, 52, 54, 55 ,$ & 
		$70$ & $\{71, 74, 75, 77, 78, 79, 80, 82, 83\}$ \\ 
		
		&$59,62,64, 69, 71, 72, 73, 75, 78, 80, 81, 84, 86, 87, 88, 90 \}$ & & $ 56, 64, 65, 67, 70, 72, 75, 77,$ & 
		$71$ & $\{72, 75, 76, 78, 79, 80, 81, 83, 84\}$ \\
		
		$20$ & $\{ 
		21, 28, 29, 30, 31, 33, 34, 43, 44, 46, 49, 51, 52, 53, 56, 57, 58, 60,$ & 
		& $78, 79, 81, 82, 84, 85, 86 \}$ & 
		$72$ & $\{73, 76, 77, 79, 80, 81, 84, 85\}$ \\ 
		
		&$63, 64, 65, 70, 72, 73, 74, 76, 79, 81, 82, 85, 87, 88, 89\}$ & 
		$42$ & $\{ 43, 46, 47, 50, 51, 52, 53, 56,$ &
		$73$ & $\{74, 81, 82, 83, 84, 86, 87, 90\}$ \\
		
		$21$ & $\{ 
		22, 29, 30, 31, 32, 34, 35, 44, 45, 47, 50, 52, 53, 54, 55, 57, 58, 59,$ & 
		& $ 57, 64, 65, 66, 68, 71, 73, 76,$ & 
		$74$ & $\{75, 82, 83, 84, 85, 87, 88\}$ \\
		
		&$61, 64, 65, 66, 71, 73, 74, 75, 77, 80, 83, 86, 88, 89, 90 \}$ & 
		& $78, 79, 80, 83, 85, 86, 87 \}$ &
		$75$ & $\{76, 83, 84, 85, 86, 88, 89\}$ \\ 
		
		$22$ & $\{ 
		23, 30, 31, 32, 33, 35, 36, 37, 45, 46, 48, 51, 53, 54, 56, 58, 59, 60,$ & 
		$43$ & $\{ 44, 47, 48, 51, 52, 53, 54, 57,$ & 
		$76$ & $\{77, 84, 85, 86, 87, 89, 90\}$ \\ 
		
		&$62, 65, 66, 67, 72, 74, 75, 76, 78, 81, 82, 84, 87, 89, 90 \}$ & 
		& $ 58, 65, 66, 67, 69, 72, 74, 77,$ & 
		$77$ & $\{78, 82, 85, 86, 87, 88, 90\}$ \\
		
		$23$ & $\{ 
		24, 28, 31, 32, 33, 34, 36, 37, 38, 46, 47, 49, 52, 54, 57, 59, 60, 61,$ &
		& $79, 80, 81, 84, 86, 87, 88 \}$ & 
		$78$ & $\{79, 82, 83, 86, 87, 88, 89\}$ \\
		
		&$63, 64, 66, 67, 68, 73, 75, 76, 77, 79, 82, 83, 85, 88, 90 \}$ & 
		$44$ & $\{ 45, 46, 48, 49, 52, 53, 54, 58,$ &
		$79$ & $\{80, 83, 84, 87, 88, 89, 90\}$ \\
		
		$24$ & $\{ 
		25, 28, 29, 32, 33, 34, 35, 38, 39, 46, 47, 48, 50, 53, 55, 58, 60, 61,$ &
		& $ 59, 64, 66, 67, 68, 70, 73, 75,$ & 
		$80$ & $\{81, 82, 84, 85, 88, 89, 90\}$ \\ 
		
		&$ 62, 65, 67, 68, 69, 74, 76, 77, 78, 80, 82, 83, 84, 86, 89\}$ & 
		& $78, 80, 81, 85, 87, 88, 89 \}$ & 
		$81$ & $\{82, 83, 85, 86, 89, 90\}$ \\
		
		$25$ & $\{ 
		26, 29, 30, 33, 34, 35, 36, 39, 40, 47, 48, 49, 51, 54, 56, 59, 61, 62,$ & 
		$45$ & $\{ 46, 47, 49, 50, 53, 54, 59, 60,$  &
		$82$ & $\{83, 90\}$ \\
		
		&$ 63, 66, 68, 69, 70, 75, 77, 78, 79, 81, 83, 84, 85, 87, 90\}$ & 
		& $65, 67, 68, 69, 71, 73, 74, 76 ,$ & 
		$83$ & $\{84\}$ \\ 
		
		$26$ & $\{ 
		27, 28, 30, 31, 34, 35, 36, 40, 41, 46, 48, 49, 50, 52, 55, 57, 60, 62,$ & 
		& $ 79, 81, 86, 88, 89, 90\}$ & 
		$84$ & $\{85\}$ \\
		
		&$63, 67, 69, 70, 71, 73, 76, 78, 79, 80, 82, 84, 85, 86, 88 \}$ & 
		$46$ & $\{47, 54, 55, 56, 59, 60, 62, 63, 67, 68,$ &
		$85$ & $\{86\}$ \\
		
		$27$ & $\{ 
		28, 29, 31, 32, 35, 36, 41, 42, 47, 49, 50, 51, 53, 55, 56, 58, 61, 63,$ & 
		& $ 74, 76, 77, 78, 80, 82, 84, 87, 89, 90\}$ &
		$86$ & $\{87\}$ \\
		
		&$ 68, 70, 71, 72, 74, 77, 79, 80, 81, 83, 85, 86, 87, 89\}$ & 
		$47$ & $\{48, 55, 56, 57, 60, 61, 63, 68, 69, 75,$ &
		$87$ & $\{88\}$ \\ 
		
		$28$ & $\{ 
		29, 36, 37, 38, 41, 42, 44, 45, 49, 50, 56, 58, 59, 60, 62,$ & 
		& $77, 78, 79, 81, 82, 83, 85, 88, 90\}$ & 
		$88$ & $\{89\}$ \\
		
		&$ 64, 66, 69, 71, 72, 73, 74, 75, 77, 82, 83, 84, 86, 89\}$ & 
		$48$ & $\{49, 55, 56, 57, 58, 61, 62, 69, 70, 73,$ & 
		$89$ & $\{ 90\}$ \\
		
		&& 
		& $76, 78, 79, 80, 82, 83, 84, 86, 89\}$ &
		&\\ 
		\bottomrule 
	\end{tabular}
	}
	\label{graph:90} 
\end{table}


\begin{table}[!htbp]
	\caption{The $2\,002$ Edges of $G_{91}$ in Proposition \ref{prop91}: Vertices are partitioned into $7$ layers $\mathcal{L}_1, \ldots,\mathcal{L}_7$, with $13$ vertices in each layer. Vertices $z, z+1, \ldots, z+12$ are in $\mathcal{L}_{k+1}$ for $0 \leq k \leq 6$ and $z=13k+1$.}
	\setlength{\tabcolsep}{2pt}
	\renewcommand{\arraystretch}{1.1}
	\centering
	\resizebox{\textwidth}{!}{
	\begin{tabular}{c l | c l | c l}
		\toprule
		$i$ & \multicolumn{1}{c|}{$\left\{j : (i,j) \in E_{G_{91}}\right\}$} & 
		$i$ & \multicolumn{1}{c|}{$\left\{j : (i,j) \in E_{G_{91}}\right\}$} & 
		$i$ & \multicolumn{1}{c}{$\left\{j : (i,j) \in E_{G_{91}}\right\}$} \\
		\midrule
		
		$1$ & $\{ 
		4, 5, 10, 12, 13, 14, 19, 20, 21, 24, 26, 27, 28, 30, 31, 32, 34, 40, 41,42, 47, 48, $ & $31$ & $\{34, 35, 36, 37, 40, 42, 43, 44, 49, 50,$  &
		$51$ & $\{54, 55, 57, 60, 62, 63, 64, 69, 70, 71,$ \\
		
		& $49, 51,52, 53, 58, 59, 60, 66, 68, 69, 70,
		72, 73, 74, 76, 79, 80, 81, 86, 87, 88, 90 \}$ &
		& $51, 54, 56, 57, 60, 61, 62, 64, 65, 70,$ &  
		& $74, 76, 77, 80, 81, 82, 84, 85, 90, 91\}$\\
		
		$2$ & $\{ 
		5, 6, 8, 11, 13, 14, 15, 20, 21, 22, 25, 27, 28, 31, 32, 33, 35, 36, 41, 42, 43, 48,$ & 
		& $71, 72, 77, 81, 82, 83, 88, 89, 90\}$  &
		$52$ & $\{55, 56, 57, 58, 61, 63, 64, 65, 70, 71,$ \\
		
		& $49, 52, 53, 54, 59, 60, 61, 64, 67, 69, 70, 73, 74, 75, 77, 80, 81, 82, 87, 88, 89, 91\}$& 
		$32$ & $\{35, 36, 37, 38, 41, 43, 44, 45, 50, 51,$ & 
		& $72, 75, 77, 78, 81, 82, 83, 85, 86, 91\}$ \\		
		
		$3$ & $\{ 
		6, 7, 8, 9, 12, 14, 15, 16, 21, 22, 23, 26, 28, 29, 32, 33, 34, 36, 37, 
		42, 43, 44, 49,$ & 
		& $52, 55, 58, 61, 62, 63, 64, 65, 66, 71,$ &
		$53$ & $\{56, 57, 58, 59, 62, 64, 65, 66, 71, 72,$ \\
		
		& $53, 54, 55, 60, 61, 62, 64, 65, 68, 70, 71, 74, 75, 76, 81, 82, 83, 85, 88, 89, 90 \}$& & $72, 73, 82, 83, 84, 89, 90, 91 \}$&
		& $73, 76, 79, 82, 83, 84, 85, 86, 87\}$\\		
		
		$4$ & $\{ 
		7, 8, 9, 10, 13, 15, 16, 17, 22, 23, 24, 27, 30, 33, 34, 35, 36, 37, 38, 
		43, 44, 45,$ & 
		$33$ & $\{37, 38, 39, 42, 44, 45, 46, 51, 52, 53,$ &
		$54$ & $\{58, 59, 60, 63, 65, 66, 67, 72, 73,$ \\
		
		& $54, 55, 56, 61, 62, 63, 64, 65, 66, 69, 72, 75, 76, 77, 82, 83, 84, 
		86, 89, 90, 91 \}$ & 
		& $56, 57, 59, 62, 63, 65, 66, 67, 72, 73,$ &
		& $74, 77, 78, 80, 83, 84, 86, 87, 88 \}$\\		
		
		$5$ & $\{ 
		9, 10, 11, 14, 16, 17, 18, 23, 24, 25, 28, 29, 31, 34, 35, 37, 38, 39, 
		44, 45, 46,$ & 
		& $74, 78, 83, 84, 85, 90, 91\}$  &
		$55$ & $\{57, 59, 60, 61, 66, 67, 68, 71, 73, $ \\
		
		& $50, 55, 56, 57, 62, 63, 65, 66, 67, 70, 71, 73, 76, 77, 78, 83, 84, 
		85, 87, 90, 91 \}$ &
		$34$ & $\{36, 38, 39, 40, 45, 46, 47, 50, 52,$ &
		& $74, 75, 78, 79, 81, 84, 87, 88, 89\}$\\		
		
		$6$ & $\{ 
		8, 10, 11, 12, 17, 18, 19, 22, 24, 25, 26, 29, 30, 32, 35, 38, 39, 40, 
		45, 46, 47, $& 
		& $53, 54, 57, 58, 60, 63, 66, 67, 68,$  &
		$56$ & $\{58, 60, 61, 62, 67, 68, 69, 72, 74,$ \\
		
		& $50, 51, 56, 57, 58, 63, 64, 66, 67, 68, 71, 72, 74, 77, 78, 79, 84, 
		85, 86, 88, 91 \}$&
		& $73, 74, 75, 78, 79, 84, 85, 86, 91\}$ &
		& $75, 76, 78, 79, 80, 82, 88, 89, 90\}$\\
		
		$7$ & $\{ 
		9, 11, 12, 13, 18, 19, 20, 23, 25, 26, 27, 29, 30, 31, 33, 39, 40, 41, 
		46, 47, 48,$ & 
		$35$& $\{37, 39, 40, 41, 46, 47, 48, 51, 53,$  &
		$57$ & $\{60, 61, 66, 68, 69, 70, 75, 76, 77,$ \\
		
		& $50, 51, 52, 57, 58, 59, 65, 67, 68, 69, 71, 72, 73, 75, 78, 79, 80, 
		85, 86, 87, 89 \}$&
		& $54, 55, 57, 58, 59, 61, 67, 68, 69,$ &
		& $80, 82, 83, 84, 86, 87, 88, 90\}$\\		
		
		$8$ & $\{ 
		11, 12, 17, 19, 20, 21, 26, 27, 28, 31, 33, 34, 35, 37, 38, 39, 41, 47, 
		48, 49,$& 
		& $74, 75, 76, 78, 79, 80, 85, 86, 87\}$  &
		$58$ & $\{61, 62, 64, 67, 69, 70, 71, 76, 77,$ \\
		
		& $54, 55, 56, 58, 59, 60, 65, 66, 67, 73, 75, 76, 77, 79, 80, 81, 83, 86, 87, 88 \}$ &
		$36$ & $\{39, 40, 45, 47, 48, 49, 54, 55, 56,$ & 
		& $78, 81, 83, 84, 87, 88, 89, 91\}$\\	
		
		$9$ & $\{ 
		12, 13, 15, 18, 20, 21, 22, 27, 28, 29, 32, 34, 35, 38, 39, 40, 42, 43, 
		48, 49,$ & 
		& $59, 61, 62, 63, 65, 66, 67, 69, 75,$  &
		$59$ & $\{62, 63, 64, 65, 68, 70, 71, 72, 77,$ \\
		
		& 
		$50, 55, 56, 59, 60, 61, 66, 67, 68, 71, 74, 76, 77, 80, 81, 82, 84, 87, 
		88, 89 \}$&
		& $76, 77, 82, 83, 84, 86, 87, 88\}$ & & $78, 79, 82, 84, 85, 88, 89, 90\}$\\	
		
		$10$ & $\{ 
		13, 14, 15, 16, 19, 21, 22, 23, 28, 29, 30, 33, 35, 36, 39, 40, 41, 43, 
		44, 49, $& 
		$37$& $\{40, 41, 43, 46, 48, 49, 50, 55, 56,$  &
		$60$ & $\{63, 64, 65, 66, 69, 71, 72, 73,$ \\
		
		& $50, 51, 56, 60, 61, 62, 67, 68, 69, 71, 72, 75, 77, 78, 81, 82, 83, 88, 89, 90 \}$
		& & $57, 60, 62, 63, 66, 67, 68, 70, 71,$ & 
		& $78, 79, 80, 83, 86, 89, 90, 91\}$\\		
		
		$11$ & $\{ 
		14, 15, 16, 17, 20, 22, 23, 24, 29, 30, 31, 34, 37, 40, 41, 42, 43, 44, 
		45, 50,$ & 
		& $76, 77, 78, 83, 84, 87, 88, 89\}$  &
		$61$ & $\{65, 66, 67, 70, 72, 73, 74, 79,$ \\
		
		& $51, 52, 61, 62, 63, 68, 69, 70, 71, 72, 73, 76, 79, 82, 83, 84, 89, 90, 91 \}$&
		$38$& $\{41, 42, 43, 44, 47, 49, 50, 51, 56,$ &
		& $80, 81, 84, 85, 87, 90, 91\}$\\		
		
		$12$ & $\{ 
		16, 17, 18, 21, 23, 24, 25, 30, 31, 32, 35, 36, 38, 41, 42, 44, 45, 46, 
		51,$ & 
		& $57, 58, 61, 63, 64, 67, 68, 69, 71,$  &
		$62$ & $\{64, 66, 67, 68, 73, 74, 75, 78,$ \\
		
		& $52, 53, 57, 62, 63, 64, 69, 70, 72, 73, 74, 77, 78, 80, 83, 84, 85, 90, 91 \}$&
		& $72, 77, 78, 79, 84, 88, 89, 90 \}$ & 
		& $80, 81, 82, 85, 86, 88, 91\}$\\		
		
		$13$ & $\{ 
		15, 17, 18, 19, 24, 25, 26, 29, 31, 32, 33, 36, 37, 39, 42, 45, 46, 47, 
		52,$ & 
		$39$ & $\{42, 43, 44, 45, 48, 50, 51, 52, 57,$  &
		$63$ & $\{65, 67, 68, 69, 74, 75, 76, 79,$ \\
		
		& $53, 54, 57, 58, 63, 64, 65, 70, 71, 73, 74, 75, 78, 79, 81, 84, 85, 86, 91 \}$&
		& $58, 59, 62, 65, 68, 69, 70, 71,$ &
		& $81, 82, 83, 85, 86, 87, 89\}$ \\		
		
		$14$ & $\{ 
		16, 18, 19, 20, 25, 26, 27, 30, 32, 33, 34, 36, 37, 38, 40, 46, 47, 48, 
		53,$& 
		& $72, 73, 78, 79, 80, 89, 90, 91 \}$  &
		$64$ & $\{67, 68, 73, 75, 76, 77, 82, 83,$\\
		
		& $54, 55, 57, 58, 59, 64, 65, 66, 72, 74, 75, 76, 78, 79, 80, 82, 85, 86, 87 \}$&
		$40$& $\{44, 45, 46, 49, 51, 52, 53, 58, 59,$ &
		& $84, 87, 89, 90, 91\}$\\		
		
		$15$ & $\{ 
		18, 19, 24, 26, 27, 28, 33, 34, 35, 38, 40, 41, 42, 44, 45, 46, 48, 54, 
		55,$ & 
		& $60, 63, 64, 66, 69, 70, 72, 73,$ &
		$65$ & $\{68, 69, 71, 74, 76, 77, 78,$\\
		
		& $56, 61, 62, 63, 65, 66, 67, 72, 73, 74, 80, 82, 83, 84, 86, 87, 88, 90 \}$& & 
		$74, 79, 80, 81, 85, 90, 91\}$&& $83, 84, 85, 88, 90, 91\}$\\		
		
		$16$ & $\{ 
		19, 20, 22, 25, 27, 28, 29, 34, 35, 36, 39, 41, 42, 45, 46, 47, 49, 50, 
		55,$& 
		$41$ & $\{43, 45, 46, 47, 52, 53, 54, 57,$  &
		$66$ & $\{69, 70, 71, 72, 75, 77, 78,$ \\
		
		& $56, 57, 62, 63, 66, 67, 68, 73, 74, 75, 78, 81, 83, 84, 87, 88, 89, 91 \}$&
		& $59, 60, 61, 64, 65, 67, 70, 73,$&& $79, 84, 85, 86, 89, 91\}$\\		
		
		$17$ & $\{ 
		20, 21, 22, 23, 26, 28, 29, 30, 35, 36, 37, 40, 42, 43, 46, 47, 48, 50, 
		51,$ & 
		& $ 74, 75, 80, 81, 82, 85, 86, 91\}$ &
		$67$ & $\{70, 71, 72, 73, 76, 78,$\\
		
		& $56, 57, 58, 63, 67, 68, 69, 74, 75, 76, 78, 79, 82, 84, 85, 88, 89, 90 \}$& 
		$42$& $\{ 44, 46, 47, 48, 53, 54, 55, 58,$&& $79, 80, 85, 86, 87, 90\}$\\		
		
		$18$ & $\{ 
		21, 22, 23, 24, 27, 29, 30, 31, 36, 37, 38, 41, 44, 47, 48, 49, 50, 51,$ & 
		& $60, 61, 62, 64, 65, 66, 68, 74,$ &
		$68$ & $\{72, 73, 74, 77, 79, 80,$ \\
		
		& $52, 57, 58, 59, 68, 69, 70, 75, 76, 77, 78, 79, 80, 83, 86, 89, 90, 91\}$& 
		& $ 75, 76, 81, 82, 83, 85, 86, 87\}$& 
		& $81, 86, 87, 88, 91\}$\\		
		
		$19$ & $\{ 
		23, 24, 25, 28, 30, 31, 32, 37, 38, 39, 42, 43, 45, 48, 49, 51, 52, 53,$& 
		$43$ & $\{46, 47, 52, 54, 55, 56, 61, 62,$  &
		$69$ & $\{71, 73, 74, 75, 80, 81,$ \\
		
		& $58, 59, 60, 64, 69, 70, 71, 76, 77, 79, 80, 81, 84, 85, 87, 90, 91 \}$&
		& $63, 66, 68, 69, 70, 72, 73, 74,$ && $82, 85, 87, 88, 89\}$\\		
		
		$20$ & $\{ 
		22, 24, 25, 26, 31, 32, 33, 36, 38, 39, 40, 43, 44, 46, 49, 52, 53, 54,$& 
		& $ 76, 82, 83, 84, 89, 90, 91\}$ &
		$70$ & $\{72, 74, 75, 76, 81, 82,$ \\
		
		& $59, 60, 61, 64, 65, 70, 71, 72, 77, 78, 80, 81, 82, 85, 86, 88, 91 \}$ &
		$44$ & $\{ 47, 48, 50, 53, 55, 56, 57, 62,$ && $83, 86, 88, 89, 90\}$\\		
		
		$21$ & $\{ 
		23, 25, 26, 27, 32, 33, 34, 37, 39, 40, 41, 43, 44, 45, 47, 53, 54, 55,$& 
		& $63, 64, 67, 69, 70, 73, 74, 75,$  &
		$71$ & $\{74, 75, 80, 82, 83, 84, 89, 90, 91\}$\\
		
		& $60, 61, 62, 64, 65, 66, 71, 72, 73, 79, 81, 82, 83, 85, 86, 87, 89 \}$ & 
		& $77, 78, 83, 84, 85, 90, 91\}$& 
		$72$ & $\{75, 76, 78, 81, 83, 84, 85, 90, 91\}$\\		
		
		$22$ & $\{ 
		25, 26, 31, 33, 34, 35, 40, 41, 42, 45, 47, 48, 49, 51, 52, 53, 55, 61,$& 
		$45$ & $\{48, 49, 50, 51, 54, 56, 57, 58, $ &
		$73$ & $\{76, 77, 78, 79, 82, 84, 85, 86, 91\}$\\	
		
		& $ 62, 63, 68, 69, 70, 72, 73, 74, 79, 80, 81, 87, 89, 90, 91\}$& 
		& $63, 64, 65, 68, 70, 71, 74, 75,$& 
		$74$ & $\{77, 78, 79, 80, 83, 85, 86, 87\}$\\		
		
		$23$ & $\{ 
		26, 27, 29, 32, 34, 35, 36, 41, 42, 43, 46, 48, 49, 52, 53, 54, 56,$& 
		& $76, 78, 79, 84, 85, 86, 91\}$ &
		$75$ & $\{79, 80, 81, 84, 86, 87, 88\}$\\	
		
		& $57, 62, 63, 64, 69, 70, 73, 74, 75, 80, 81, 82, 85, 88, 90, 91 \}$& 
		$46$ & $\{49, 50, 51, 52, 55, 57, 58, 59, $ & 
		$76$& $\{78, 80, 81, 82, 87, 88, 89\}$\\		
		
		$24$ & $\{ 
		27, 28, 29, 30, 33, 35, 36, 37, 42, 43, 44, 47, 49, 50, 53, 54, 55,$& 
		& $64, 65, 66, 69, 72, 75, 76,$  &
		$77$ & $\{79, 81, 82, 83, 88, 89, 90\}$\\	
		
		& $57, 58, 63, 64, 65, 70, 74, 75, 76, 81, 82, 83, 85, 86, 89, 91 \}$& 
		& $ 77, 78, 79, 80, 85, 86, 87\}$& $78$ & $\{81, 82, 87, 89, 90, 91\}$\\		
		
		$25$ & $\{ 
		28, 29, 30, 31, 34, 36, 37, 38, 43, 44, 45, 48, 51, 54, 55, 56,$& 
		$47$ & $\{51, 52, 53, 56, 58, 59, 60, 65,$  &
		$79$ & $\{82, 83, 85, 88, 90, 91\}$\\	
		
		& $57, 58, 59, 64, 65, 66, 75, 76, 77, 82, 83, 84, 85, 86, 87, 90\}$
		& 
		& $66, 67, 70, 71, 73, 76, 77,$ &
		$80$ & $\{83, 84, 85, 86, 89, 91\}$\\	
		
		$26$ & $\{ 
		30, 31, 32, 35, 37, 38, 39, 44, 45, 46, 49, 50, 52, 55, 56, 58,$& 
		& $ 79, 80, 81, 86, 87, 88\}$ &
		$81$ & $\{84, 85, 86, 87, 90\}$\\	
		
		& $ 59, 60, 65, 66, 67, 71, 76, 77, 78, 83, 84, 86, 87, 88, 91\}$& 
		$48$ & $\{50, 52, 53, 54, 59, 60, 61,$ & $82$ & $\{86, 87, 88, 91\}$\\		
		
		$27$ & $\{ 
		29, 31, 32, 33, 38, 39, 40, 43, 45, 46, 47, 50, 51, 53, 56, 59,$& 
		& $64, 66, 67, 68, 71, 72, 74,$ &
		$83$ & $\{85, 87, 88, 89\}$\\	
		
		& $60, 61, 66, 67, 68, 71, 72, 77, 78, 79, 84, 85, 87, 88, 89 \}$&
		& $ 77, 80, 81, 82, 87, 88, 89\}$ &
		$84$ & $\{86, 88, 89, 90\}$\\			
		
		$28$ & $\{ 
		30, 32, 33, 34, 39, 40, 41, 44, 46, 47, 48, 50, 51, 52, 54, 60,$& 
		$49$ & $\{51, 53, 54, 55, 60, 61, 62,$  &
		$85$ & $\{88, 89\}$\\	
		
		& $61, 62, 67, 68, 69, 71, 72, 73, 78, 79, 80, 86, 88, 89, 90 \}$& 
		& $65, 67, 68, 69, 71, 72, 73,$ &
		$86$ & $\{89, 90\}$\\			
		
		$29$ & $\{
		32, 33, 38, 40, 41, 42, 47, 48, 49, 52, 54, 55, 56, 58, 59,$ &
		& $ 75, 81, 82, 83, 88, 89, 90\}$ &
		$87$ & $\{90, 91\}$\\	
		
		& $60, 62, 68, 69, 70,75, 76, 77, 79, 80, 81, 86, 87, 88\}$ &
		$50$ & $\{53, 54, 59, 61, 62, 63, 68, 69, 70, 73,$ & 
		$88$ & $\{91\}$\\	
		
		$30$ & $\{
		33, 34, 36, 39, 41, 42, 43, 48, 49, 50, 53, 55, 56, 59, 60,$ &
		& $75, 76, 77, 79, 80, 81, 83, 89, 90, 91\}$ &
		& \\	
		
		& $ 61, 63, 64, 69, 70,71, 76, 77, 80, 81, 82, 87, 88, 89\}$ &
		& \\	
		\bottomrule 
	\end{tabular}
	}
	\label{table:List91} 
\end{table}


\begin{table}
	\caption{The $1\,302$ edges of $G_{93}$ in Proposition\ref{prop93}. Vertices are partitioned into $3$ layers of $31$ vertices each. Vertices $1$ to $31$ are in $\mathcal{L}_1$. Vertices $32$ to $62$ are in $\mathcal{L}_2$. Vertices $63$ to $93$ are in $\mathcal{L}_3$. Edges $\{(i,j)\}$ are listed as $(i,\{j \in J\})$ with $i \in \mathcal{L}_r$ and $J \subset \mathcal{L}_s$}
	\setlength{\tabcolsep}{3pt}
	\renewcommand{\arraystretch}{1.1}
	\centering
	\resizebox{\textwidth}{!}{
	\begin{tabular}{ccl}
		\toprule
		$\#$ & $(r,s)$ & \multicolumn{1}{c}{$(i,\{j \in J\})$} \\
		\midrule
		$124$ & $(1,1)$ & 
		$(1, \{ 11, 13, 15, 16, 17, 18, 20, 22 \}), \,
		(2, \{ 12, 14, 16, 17, 18, 19, 21, 23 \}), \,
		(3, \{ 13, 15, 17, 18, 19, 20, 22, 24 \}),$\\
		
		&& $(4, \{ 14, 16, 18, 19, 20, 21, 23, 25 \}), \,
		(5, \{15, 17, 19, 20, 21, 22, 24, 26 \}), \,
		(6, \{16, 18, 20, 21, 22, 23, 25, 27 \}),$\\
		
		&& $(7, \{17, 19, 21, 22, 23, 24, 26, 28 \}), \,
		(8, \{18, 20, 22, 23, 24, 25, 27, 29 \}), \,
		(9, \{19, 21, 23, 24, 25, 26, 28, 30 \}),$\\
		
		&& $(10, \{20, 22, 24, 25, 26, 27, 29, 31 \}), \,
		(11, \{21, 23, 25, 26, 27, 28, 30 \}), \,
		(12, \{22, 24, 26, 27, 28, 29, 31 \}),$\\
		
		&& $(13, \{23, 25, 27, 28, 29, 30 \}), \,
		(14, \{24, 26, 28, 29, 30, 31 \}), \,
		(15, \{25, 27, 29, 30, 31 \}), \,
		(16, \{26, 28, 30, 31 \}),$\\
		
		&& $(17, \{27, 29, 31 \}), \,
		(18, \{28, 30 \}), \,
		(19, \{29, 31 \}), \,
		(20, \{30 \}), \,
		(21, \{31 \})$\\
		
		$124$ & $(2,2)$ & $(32, \{42, 44, 46, 47, 48, 49, 51, 53 \}), \,
		(33, \{43, 45, 47, 48, 49, 50, 52, 54 \}), \,
		(34, \{44, 46, 48, 49, 50, 51, 53, 55 \}),$\\
		
		&& $(35, \{45, 47, 49, 50, 51, 52, 54, 56 \}), \,
		(36, \{46, 48, 50, 51, 52, 53, 55, 57 \}), \,
		(37, \{47, 49, 51, 52, 53, 54, 56, 58 \}),$\\
		
		&& $(38, \{48, 50, 52, 53, 54, 55, 57, 59 \}), \,
		(39, \{49, 51, 53, 54, 55, 56, 58, 60 \}), \,
		(40, \{50, 52, 54, 55, 56, 57, 59, 61 \}),$\\
		
		&& $(41, \{51, 53, 55, 56, 57, 58, 60, 62 \}), \,
		(42, \{52, 54, 56, 57, 58, 59, 61 \}), \,
		(43, \{53, 55, 57, 58, 59, 60, 62 \}),$\\
		
		&& $(44, \{54, 56, 58, 59, 60, 61 \}), \,
		(45, \{55, 57, 59, 60, 61, 62 \}), \,
		(46, \{56, 58, 60, 61, 62 \}), \,
		(47, \{57, 59, 61, 62 \}),$\\
		
		&& $(48, \{58, 60, 62 \}), \,
		(49, \{59, 61 \}),\,
		(50, \{60, 62 \}),\,
		(51, \{61 \}),\,
		(52, \{62 \})$\\
		
		$124$ & $(3,3)$ & 
		$(63, \{73, 75, 77, 78, 79, 80, 82, 84 \}), \, 
		(64, \{74, 76, 78, 79, 80, 81, 83, 85 \}), \, 
		(65, \{75, 77, 79, 80, 81, 82, 84, 86 \}),$\\ 
		
		&& $(66, \{76, 78, 80, 81, 82, 83, 85, 87 \}), \, 
		(67, \{77, 79, 81, 82, 83, 84, 86, 88 \}), \, 
		(68, \{78, 80, 82, 83, 84, 85, 87, 89 \}),$\\ 
		
		&& $(69, \{79, 81, 83, 84, 85, 86, 88, 90 \}), \, 
		(70, \{80, 82, 84, 85, 86, 87, 89, 91 \}), \, 
		(71, \{81, 83, 85, 86, 87, 88, 90, 92 \}),$\\
		
		&& $(72, \{82, 84, 86, 87, 88, 89, 91, 93 \}), \, 
		(73, \{83, 85, 87, 88, 89, 90, 92 \}), \, 
		(74, \{84, 86, 88, 89, 90, 91, 93 \}),$\\
		
		&& $(75, \{85, 87, 89, 90, 91, 92 \}), \, 
		(76, \{86, 88, 90, 91, 92, 93 \}), \, 
		(77, \{87, 89, 91, 92, 93 \}), \, 
		(78, \{88, 90, 92, 93 \}),$\\
		
		&& $(79, \{89, 91, 93 \}), \, 
		(80, \{90, 92 \}), \, 
		(81, \{91, 93 \}), \, 
		(82, \{92 \}), \, 
		(83, \{93 \})$\\
		
		$310$ & $(1,2)$ & 
		$(1, \{34, 39, 40, 42, 43, 45, 47, 48, 49, 51 \}), \, 
		(2, \{35, 40, 41, 43, 44, 46, 48, 49, 50, 52 \}), \, 
		(3, \{36, 41, 42, 44, 45, 47, 49, 50, 51, 53 \}),$\\
		
		&& $(4, \{37, 42, 43, 45, 46, 48, 50, 51, 52, 54 \}), \, 
		(5, \{38, 43, 44, 46, 47, 49, 51, 52, 53, 55 \}), \, 
		(6, \{39, 44, 45, 47, 48, 50, 52, 53, 54, 56 \}),$\\ 
		
		&& $(7, \{40, 45, 46, 48, 49, 51, 53, 54, 55, 57 \}), \, 
		(8, \{41, 46, 47, 49, 50, 52, 54, 55, 56, 58 \}), \, 
		(9, \{42, 47, 48, 50, 51, 53, 55, 56, 57, 59 \}),$\\ 
		
		&& $(10, \{43, 48, 49, 51, 52, 54, 56, 57, 58, 60 \}), \, 
		(11, \{44, 49, 50, 52, 53, 55, 57, 58, 59, 61 \}), \, 
		(12, \{45, 50, 51, 53, 54, 56, 58, 59, 60, 62 \}),$\\ 
		
		&& $(13, \{32, 46, 51, 52, 54, 55, 57, 59, 60, 61 \}), \, 
		(14, \{33, 47, 52, 53, 55, 56, 58, 60, 61, 62 \}), \, 
		(15, \{32, 34, 48, 53, 54, 56, 57, 59, 61, 62 \}),$\\ 
		
		&& $(16, \{32, 33, 35, 49, 54, 55, 57, 58, 60, 62 \}), \, 
		(17, \{32, 33, 34, 36, 50, 55, 56, 58, 59, 61 \}), \, 
		(18, \{33, 34, 35, 37, 51, 56, 57, 59, 60, 62 \}),$\\ 
		
		&& $(19, \{32, 34, 35, 36, 38, 52, 57, 58, 60, 61 \}), \, 
		(20, \{33, 35, 36, 37, 39, 53, 58, 59, 61, 62 \}), \, 
		(21, \{32, 34, 36, 37, 38, 40, 54, 59, 60, 62 \}),$\\ 
		
		&& $(22, \{32, 33, 35, 37, 38, 39, 41, 55, 60, 61 \}), \, 
		(23, \{33, 34, 36, 38, 39, 40, 42, 56, 61, 62 \}), \, 
		(24, \{32, 34, 35, 37, 39, 40, 41, 43, 57, 62 \}),$\\
		
		&& $(25, \{32, 33, 35, 36, 38, 40, 41, 42, 44, 58 \}), \, 
		(26, \{33, 34, 36, 37, 39, 41, 42, 43, 45, 59 \}), \, 
		(27, \{34, 35, 37, 38, 40, 42, 43, 44, 46, 60 \}),$\\ 
		
		&& $(28, \{35, 36, 38, 39, 41, 43, 44, 45, 47, 61 \}), \, 
		(29, \{36, 37, 39, 40, 42, 44, 45, 46, 48, 62 \}), \, 
		(30, \{32, 37, 38, 40, 41, 43, 45, 46, 47, 49 \}),$\\ 
		
		&& $(31, \{33, 38, 39, 41, 42, 44, 46, 47, 48, 50 \})$\\
		
		$310$ & $(1,3)$ & 
		$(1, \{75, 77, 78, 79, 81, 83, 84, 86, 87, 92 \}), \, 
		(2, \{76, 78, 79, 80, 82, 84, 85, 87, 88, 93 \}), \, 
		(3, \{63, 77, 79, 80, 81, 83, 85, 86, 88, 89 \}),$\\ 
		
		&& $(4, \{64, 78, 80, 81, 82, 84, 86, 87, 89, 90 \}), \, 
		(5, \{65, 79, 81, 82, 83, 85, 87, 88, 90, 91 \}), \, 
		(6, \{66, 80, 82, 83, 84, 86, 88, 89, 91, 92 \}), $\\
		
		&& $(7, \{67, 81, 83, 84, 85, 87, 89, 90, 92, 93 \}), \, 
		(8, \{63, 68, 82, 84, 85, 86, 88, 90, 91, 93 \}), \, 
		(9, \{63, 64, 69, 83, 85, 86, 87, 89, 91, 92 \}), $\\
		
		&& $(10, \{64, 65, 70, 84, 86, 87, 88, 90, 92, 93 \}), \, 
		(11, \{63, 65, 66, 71, 85, 87, 88, 89, 91, 93 \}), \, 
		(12, \{63, 64, 66, 67, 72, 86, 88, 89, 90, 92 \}), $\\
		
		&& $(13, \{64, 65, 67, 68, 73, 87, 89, 90, 91, 93 \}), \, 
		(14, \{63, 65, 66, 68, 69, 74, 88, 90, 91, 92 \}), \, 
		(15, \{64, 66, 67, 69, 70, 75, 89, 91, 92, 93 \}),$\\ 
		
		&& $(16, \{63, 65, 67, 68, 70, 71, 76, 90, 92, 93 \}), \, 
		(17, \{63, 64, 66, 68, 69, 71, 72, 77, 91, 93 \}), \, 
		(18, \{63, 64, 65, 67, 69, 70, 72, 73, 78, 92 \}),$\\ 
		
		&& $(19, \{64, 65, 66, 68, 70, 71, 73, 74, 79, 93 \}), \, 
		(20, \{63, 65, 66, 67, 69, 71, 72, 74, 75, 80 \}), \, 
		(21, \{64, 66, 67, 68, 70, 72, 73, 75, 76, 81 \}),$\\ 
		
		&& $(22, \{65, 67, 68, 69, 71, 73, 74, 76, 77, 82 \}), \, 
		(23, \{66, 68, 69, 70, 72, 74, 75, 77, 78, 83 \}), \, 
		(24, \{67, 69, 70, 71, 73, 75, 76, 78, 79, 84 \}),$\\ 
		
		&& $(25, \{68, 70, 71, 72, 74, 76, 77, 79, 80, 85 \}), \, 
		(26, \{69, 71, 72, 73, 75, 77, 78, 80, 81, 86 \}), \, 
		(27, \{70, 72, 73, 74, 76, 78, 79, 81, 82, 87 \}),$\\ 
		
		&& $(28, \{71, 73, 74, 75, 77, 79, 80, 82, 83, 88 \}), \, 
		(29, \{72, 74, 75, 76, 78, 80, 81, 83, 84, 89 \}), \, 
		(30, \{73, 75, 76, 77, 79, 81, 82, 84, 85, 90 \}),$\\ 
		
		&& $(31, \{74, 76, 77, 78, 80, 82, 83, 85, 86, 91 \})$\\ 
		
		$310$ & $(2,3)$ & 
		$(32, \{65, 70, 71, 73, 74, 76, 78, 79, 80, 82 \}), \, 
		(33, \{66, 71, 72, 74, 75, 77, 79, 80, 81, 83 \}), \, 
		(34, \{67, 72, 73, 75, 76, 78, 80, 81, 82, 84 \}), $\\
		
		&& $(35, \{68, 73, 74, 76, 77, 79, 81, 82, 83, 85 \}), \, 
		(36, \{69, 74, 75, 77, 78, 80, 82, 83, 84, 86 \}), \, 
		(37, \{70, 75, 76, 78, 79, 81, 83, 84, 85, 87 \}), $\\
		
		&& $(38, \{71, 76, 77, 79, 80, 82, 84, 85, 86, 88 \}), \, 
		(39, \{72, 77, 78, 80, 81, 83, 85, 86, 87, 89 \}), \, 
		(40, \{73, 78, 79, 81, 82, 84, 86, 87, 88, 90 \}), $\\
		
		&& $(41, \{74, 79, 80, 82, 83, 85, 87, 88, 89, 91 \}), \, 
		(42, \{75, 80, 81, 83, 84, 86, 88, 89, 90, 92 \}), \, 
		(43, \{76, 81, 82, 84, 85, 87, 89, 90, 91, 93 \}), $\\
		
		&& $(44, \{63, 77, 82, 83, 85, 86, 88, 90, 91, 92 \}), \, 
		(45, \{64, 78, 83, 84, 86, 87, 89, 91, 92, 93 \}), \, 
		(46, \{63, 65, 79, 84, 85, 87, 88, 90, 92, 93 \}), $\\
		
		&& $(47, \{63, 64, 66, 80, 85, 86, 88, 89, 91, 93 \}), \, 
		(48, \{63, 64, 65, 67, 81, 86, 87, 89, 90, 92 \}), \, 
		(49, \{64, 65, 66, 68, 82, 87, 88, 90, 91, 93 \}), $\\ 
		
		&& $(50, \{63, 65, 66, 67, 69, 83, 88, 89, 91, 92 \}), \, 
		(51, \{64, 66, 67, 68, 70, 84, 89, 90, 92, 93 \}), \, 
		(52, \{63, 65, 67, 68, 69, 71, 85, 90, 91, 93 \}),$\\ 
		
		&& $(53, \{63, 64, 66, 68, 69, 70, 72, 86, 91, 92 \}), \, 
		(54, \{64, 65, 67, 69, 70, 71, 73, 87, 92, 93 \}), \, 
		(55, \{63, 65, 66, 68, 70, 71, 72, 74, 88, 93 \}),$\\ 
		
		&& $(56, \{63, 64, 66, 67, 69, 71, 72, 73, 75, 89 \}), \, 
		(57, \{64, 65, 67, 68, 70, 72, 73, 74, 76, 90 \}), \, 
		(58, \{65, 66, 68, 69, 71, 73, 74, 75, 77, 91 \}),$\\ 
		
		&& $(59, \{66, 67, 69, 70, 72, 74, 75, 76, 78, 92 \}), \, 
		(60, \{67, 68, 70, 71, 73, 75, 76, 77, 79, 93 \}), \, 
		(61, \{63, 68, 69, 71, 72, 74, 76, 77, 78, 80 \}),$\\
		
		&& $(62, \{64, 69, 70, 72, 73, 75, 77, 78, 79, 81 \})$\\  
		\bottomrule 
	\end{tabular}
	}
	\label{list:93} 
\end{table}

\begin{table}[!htbp]
	\caption{The $1\,680$ edges of $G_{96}$ in Proposition\ref{prop96}. Vertices are partitioned into $6$ layers $\mathcal{L}_1, \ldots, \mathcal{L}_{10}$, with $16$ vertices in each layer. Vertices $z,z+1,\ldots,z+15$ are in $\mathcal{L}_{k+1}$ for $0 \leq k \leq 5$ and $z=16k+1$.}
	\centering
	\resizebox{\textwidth}{!}{
	\begin{tabular}{cc l}
		\toprule
		$\#$ & $(r,s)$ & $(i,\{j \in J\})$ \\
		\midrule
		$42$ & $(1,1)$ & 
		$(1, \{4, 8, 12, 13, 16 \}), 
		(2, \{5, 8, 9, 11, 14, 16 \}), 
		(3, \{6, 8, 10, 13, 15 \}), 
		(4, \{7, 10, 11, 16 \}), 
		(5, \{10, 12, 14, 15 \}), 
		(6, \{7, 9, 12, 14 \}),$\\
		& & $((7, \{10, 14 \}),(8, \{11, 14, 15 \}),(9, \{12, 14, 16 \}), 
		(10, \{13, 16 \}),(11, \{16 \}),(12, \{13, 15 \}),(13, \{16 \})$\\
		
		$44$ & $(2,2)$ &
		$(17, \{ 22, 24, 26, 27, 29, 31 \}),  
		(18, \{ 19, 21, 24, 26, 30, 31, 32 \}), 
		(19, \{ 22, 26, 30, 31\}),  
		(20, \{ 23, 26, 27, 29, 32\}),  
		(21, \{ 24, 26, 28, 31\}),$ \\
		
		& & 
		$(22, \{ 25, 28, 29\}),(23, \{ 28, 30, 32 \}),
		(24, \{ 25, 27, 30, 32 \}),(25, \{ 28, 32 \}), 
		(26, \{ 29, 32 \}), (27, \{ 30, 32\}), (28, \{ 31 \}), (30, \{ 31 \})$\\
		
		$42$ & $(3,3)$ &
		$(33, \{36, 38, 40, 43, 45, 48 \}), 
		(34, \{37, 40, 41, 46, 48 \}), 
		(35, \{40, 42, 44, 45, 47 \}), 
		(36, \{37, 39, 42, 44, 48 \}), 
		(37, \{40, 44, 48 \})$,\\
		
		& &
		$(38, \{41, 44, 45, 47 \}),
		(39, \{42, 44, 46 \}), 
		(40, \{43, 46, 47 \}), 
		(41, \{46, 48 \}), 
		(42, \{43, 45, 48 \}), 
		(43, \{46 \}), 
		(44, \{47 \}), 
		(45, \{48 \})$\\
		
		$42$ & $(4,4)$ &
		$(49, \{52, 56, 60, 61, 64 \}), 
		(50, \{53, 56, 57, 59, 62, 64 \}), 
		(51, \{54, 56, 58, 61, 63 \}), 
		(52, \{55, 58, 59, 64 \}), 
		(53, \{58, 60, 62, 63 \})$,\\
		& &
		$(54, \{55, 57, 60, 62 \}), 
		(55, \{58, 62 \}), 
		(56, \{59, 62, 63 \}), 
		(57, \{60, 62, 64 \}), 
		(58, \{61, 64 \}), 
		(59, \{64 \}), 
		(60, \{61, 63 \}), 
		(61, \{64 \})$\\
		
		$44$ & $(5,5)$ &
		$(65, \{70, 72, 74, 75, 77, 79 \}), 
		(66, \{67, 69, 72, 74, 78, 79, 80 \}), 
		(67, \{70, 74, 78, 79 \}), 
		(68, \{71, 74, 75, 77, 80 \}), 
		(69, \{72, 74, 76, 79 \})$,\\
		& &
		$(70, \{73, 76, 77 \}), 
		(71, \{76, 78, 80 \}), 
		(72, \{73, 75, 78, 80 \}), 
		(73, \{76, 80 \}), 
		(74, \{77, 80 \}), 
		(75, \{78, 80 \}), 
		(76, \{79 \}), 
		(78, \{79 \})$\\
		
		$42$ & $(6,6)$ &
		$(81, \{84, 86, 88, 91, 93, 96 \}), 
		(82, \{85, 88, 89, 94, 96 \}), 
		(83, \{88, 90, 92, 93, 95 \}), 
		(84, \{85, 87, 90, 92, 96 \}), 
		(85, \{88, 92, 96 \})$,\\
		& & 
		$(86, \{89, 92, 93, 95 \}), 
		(87, \{90, 92, 94 \}), 
		(88, \{91, 94, 95 \}), 
		(89, \{94, 96 \}), 
		(90, \{91, 93, 96 \}), 
		(91, \{94 \}), 
		(92, \{95 \}), 
		(93, \{96 \})$\\
		
		$100$ & $(1,2)$ &
		$(1, \{17, 20, 21, 25, 26, 30, 32 \}), 
		(2, \{21, 22, 24, 26 \}), 
		(3, \{18, 22, 23, 26, 27, 28, 31 \}), 
		(4, \{18, 20, 23, 24, 28 \})$,\\
		&& 
		$(5, \{17, 19, 24, 28, 29, 30, 31 \}), 
		(6, \{18, 19, 20, 22, 30, 31, 32 \}), 
		(7, \{18, 19, 22, 23, 26, 27, 31, 32 \}), 
		(8, \{17, 20, 22, 27, 28, 30, 32 \})$,\\
		&& 
		$(9, \{19, 21, 24, 28, 29, 32 \}), 
		(10, \{17, 22, 24, 26, 29, 30 \}), 
		(11, \{18, 20, 21, 23, 25, 30 \}), 
		(12, \{18, 20, 24, 25, 26, 28 \})$,\\
		&& 
		$(13, \{20, 24, 25, 28, 29, 32 \}), 
		(14, \{17, 20, 21, 23, 26, 28 \}), 
		(15, \{18, 20, 22, 25, 27, 30 \}), 
		(16, \{19, 22, 23, 28, 30, 32 \})$\\
		
		$104$ & $(1,3)$ &
		$(1, \{33, 35, 37, 43, 44, 46 \}), 
		(2, \{33, 34, 37, 38, 39, 41, 43, 45 \}), 
		(3, \{34, 35, 39, 45, 46, 48 \}), 
		(4, \{35, 36, 37, 39, 40, 41, 45, 47 \})$,\\ 
		&&
		$(5, \{33, 36, 41, 44, 47, 48 \}), 
		(6, \{37, 39, 41, 42, 43, 47 \}), 
		(7, \{36, 38, 39, 41, 43 \}), 
		(8, \{39, 40, 43, 44, 45, 47 \})$,\\ 
		&&
		$(9, \{33, 34, 37, 40, 41, 45 \}), 
		(10, \{34, 41, 42, 43, 45, 46, 47 \}), 
		(11, \{34, 35, 36, 37, 39, 42, 47 \}), 
		(12, \{36, 37, 38, 43, 45, 47, 48 \})$,\\ 
		&&
		$(13, \{33, 37, 38, 42, 44, 45, 47 \}), 
		(14, \{33, 34, 36, 38, 45, 46 \}), 
		(15, \{34, 35, 38, 39, 40, 43, 46, 47 \}), 
		(16, \{35, 36, 40, 47, 48 \})$ \\
		$70$ & $(1,4)$ &
		$(1, \{55, 56, 60, 61, 62, 64 \}), 
		(2, \{55, 59, 62 \}), 
		(3, \{56, 57, 58, 62, 63, 64 \}), 
		(4, \{55, 57, 64 \}), 
		(5, \{58, 59, 60, 62, 64 \}), 
		(6, \{57, 59 \})$,\\
		&&
		$(7, \{49, 50, 52, 61, 62 \}), 
		(8, \{49, 51, 61 \}),
		(9, \{51, 52, 54, 62, 63, 64 \}), 
		(10, \{51, 53, 61, 63 \}), 
		(11, \{50, 53, 54, 64 \})$,\\
		&&
		$(12, \{49, 53, 63 \}),(13, \{49, 55, 56, 58 \}), 
		(14, \{49, 50, 51, 53, 55, 57 \}), 
		(15, \{51, 57, 58, 60 \}), 
		(16, \{49, 51, 52, 53, 57, 59 \})$\\
		
		$100$ & $(1,5)$ &
		$(1, \{66, 69, 71, 72, 73 \}), 
		(2, \{67, 72, 73, 74, 75, 79 \}), 
		(3, \{66, 67, 68, 71, 75, 79, 80 \}), 
		(4, \{68, 69, 75, 76, 77, 80 \}),$\\ 
		&&
		$(5, \{65, 66, 67, 69, 70, 77 \}), 
		(6, \{66, 70, 71, 73, 77, 78, 80 \}), 
		(7, \{66, 67, 68, 70, 72, 75, 77, 78, 79 \}), 
		(8, \{65, 68, 73, 78, 79, 80 \}),$\\ 
		&&
		$(9, \{68, 69, 70, 72, 73, 74, 77 \}), 
		(10, \{70, 74, 75 \}), 
		(11, \{65, 66, 68, 70, 71, 72, 73, 75, 76 \}), 
		(12, \{65, 72, 76, 77, 79 \}),$\\ 
		&&
		$(13, \{67, 68, 72, 73, 74, 76, 78 \}), 
		(14, \{67, 71, 74, 79 \}), 
		(15, \{68, 69, 70, 74, 75, 76, 78, 79, 80 \}), 
		(16, \{67, 69, 76, 80 \})$\\
		
		$102$ & $(1,6)$ &
		$(1, \{83, 84, 85, 87, 94, 96 \}), 
		(2, \{82, 84, 86, 89, 90, 91, 92, 93 \}), 
		(3, \{87, 89, 92, 96 \}), 
		(4, \{81, 84, 85, 86, 88, 93, 94, 95 \}),$\\ 
		&&
		$(5, \{81, 82, 85, 89, 92, 94 \}), 
		(6, \{82, 83, 87, 88, 90, 91, 95, 96 \}), 
		(7, \{89, 90, 91, 93 \}), 
		(8, \{81, 85, 88, 90, 92, 95, 96 \}),$\\ 
		&&
		$(9, \{81, 85, 86, 93, 95 \}), 
		(10, \{81, 82, 83, 86, 87, 90, 91, 92, 94 \}), 
		(11, \{83, 87, 88, 91, 95 \}), 
		(12, \{83, 84, 86, 88, 89, 93, 94, 96 \}),$\\ 
		&&
		$(13, \{81, 83, 84, 85, 95, 96 \}), 
		(14, \{84, 85, 86, 87, 91, 94, 96 \}), 
		(15, \{83, 87, 91, 92 \}), 
		(16, \{81, 87, 88, 89, 92, 93, 96 \})$\\ 
		
		$100$ & $(2,3)$ &
		$(17, \{36, 40, 41, 42, 43, 45, 48 \}), 
		(18, \{34, 42, 43, 44 \}), 
		(19, \{34, 35, 38, 39, 43, 44, 48 \}), 
		(20, \{34, 39, 40, 42, 44 \}),$\\
		&&
		$(21, \{33, 36, 40, 41, 44, 45, 46 \}), 
		(22, \{34, 36, 38, 41, 42, 46 \}), 
		(23, \{33, 35, 37, 42, 46, 47, 48 \}), 
		(24, \{36, 37, 38, 40, 48 \}),$\\
		&&
		$(25, \{36, 37, 40, 41, 44, 45 \}), 
		(26, \{33, 35, 38, 40, 45, 46, 48 \}), 
		(27, \{34, 37, 39, 42, 46, 47 \}), 
		(28, \{34, 35, 40, 42, 44, 47, 48 \}),$\\ 
		&&
		$(29, \{34, 36, 38, 39, 41, 43, 48 \}), 
		(30, \{33, 36, 38, 42, 43, 44, 46 \}), 
		(31, \{34, 38, 42, 43, 46, 47 \}), 
		(32, \{35, 38, 39, 41, 44, 46 \})$\\ 
		
		$100$ & $(2,4)$ &
		$(17, \{53, 56, 59, 60 \}), 
		(18, \{49, 51, 53, 54, 55, 59 \}), 
		(19, \{50, 51, 53, 55, 61, 62, 64 \}), 
		(20, \{51, 52, 55, 56, 57, 59, 61, 63 \}),$\\
		&&
		$(21, \{49, 52, 53, 57, 63, 64 \}), 
		(22, \{53, 54, 55, 57, 58, 59, 63 \}), 
		(23, \{49, 51, 54, 59, 62 \}), 
		(24, \{49, 50, 55, 57, 59, 60, 61 \}),$\\
		&&
		$(25, \{49, 50, 54, 56, 57, 59, 61 \}), 
		(26, \{50, 57, 58, 61, 62, 63 \}), 
		(27, \{50, 51, 52, 55, 58, 59, 63 \}), 
		(28, \{52, 59, 60, 61, 63, 64 \}),$\\ 
		&&
		$(29, \{52, 53, 54, 55, 57, 60 \}), 
		(30, \{54, 55, 56, 61, 63 \}), 
		(31, \{50, 51, 55, 56, 60, 62, 63 \}), 
		(32, \{51, 52, 54, 56, 63, 64 \})$\\
		
		$72$ & $(2,5)$ &
		$(17, \{70, 71, 72, 74, 76, 77, 78, 79 \}), 
		(18, \{69, 71, 78 \}), 
		(19, \{73, 74, 78, 79, 80 \}), 
		(20, \{73, 77, 80 \}), 
		(21, \{66, 74, 75, 76, 80 \}),$\\ 
		&&
		$(22, \{65, 73, 75 \}),(23, \{65, 66, 76, 77, 78, 80 \}), 
		(24, \{65, 75, 77 \}), 
		(25, \{67, 68, 70, 79, 80 \}), 
		(26, \{65, 67, 69, 79 \}),\,
		(27, \{69, 70, 72, 80 \}),$\\ 
		&&
		$(28, \{65, 69, 71, 79 \}), 
		(29, \{65, 68, 71, 72 \}), 
		(30, \{65, 66, 67, 71 \}), 
		(31, \{65, 67, 73, 74, 76 \}), 
		(32, \{67, 68, 69, 71, 73, 75 \})$\\
		
		$100$ & $(2,6)$ &
		$(17, \{81, 82, 89, 93, 94 \}), 
		(18, \{82, 83, 85, 89, 90, 92, 94, 95 \}), 
		(19, \{82, 84, 87, 89, 90, 91 \}), 
		(20, \{85, 90, 91, 92, 93 \}),$\\ 
		&&
		$(21, \{81, 82, 84, 85, 86, 89, 93 \}), 
		(22, \{82, 86, 87, 93, 94, 95 \}), 
		(23, \{82, 83, 84, 85, 87, 88, 95 \}), 
		(24, \{84, 88, 89, 91, 95, 96 \}),$\\ 
		&&
		$(25, \{84, 85, 86, 88, 90, 93, 95, 96 \}), 
		(26, \{83, 86, 91, 96 \}), 
		(27, \{81, 82, 86, 87, 88, 90, 91, 92, 95 \}), 
		(28, \{81, 88, 92, 93 \}),$\\ 
		&&
		$(29, \{82, 83, 84, 86, 88, 89, 90, 91, 93, 94 \}), 
		(30, \{81, 83, 90, 94, 95 \}), 
		(31, \{85, 86, 90, 91, 92, 94, 96 \}), 
		(32, \{85, 89, 92 \})$\\
		
		$102$ & $(3,4)$ &
		$(33, \{52, 53, 56, 57, 58, 61, 64 \}), 
		(34, \{50, 53, 54, 58 \}), 
		(35, \{49, 54, 58, 59, 60, 61, 63 \}), 
		(36, \{49, 50, 52, 60, 61, 62 \}),$\\ 
		&&
		$(37, \{49, 52, 53, 56, 57, 61, 62 \}), 
		(38, \{50, 52, 57, 58, 60, 62 \}), 
		(39, \{49, 51, 54, 58, 59, 62, 63, 64 \}), 
		(40, \{52, 54, 56, 59, 60, 64 \}),$\\ 
		&&
		$(41, \{50, 51, 53, 55, 60, 64 \}), 
		(42, \{50, 54, 55, 56, 58 \}), 
		(43, \{50, 54, 55, 58, 59, 62, 63 \}), 
		(44, \{50, 51, 53, 56, 58, 63, 64 \}),$\\ 
		&&
		$(45, \{50, 52, 55, 57, 60, 64 \}), 
		(46, \{49, 52, 53, 58, 60, 62 \}), 
		(47, \{52, 54, 56, 57, 59, 61 \}), 
		(48, \{49, 51, 54, 56, 60, 61, 62, 64 \})$\\
		
		$100$ & $(3,5)$ &
		$(33, \{65, 69, 75, 76, 78 \}), 
		(34, \{65, 66, 67, 69, 70, 71, 75, 77 \}), 
		(35, \{66, 71, 74, 77, 78 \}), 
		(36, \{67, 69, 71, 72, 73, 77 \}),$\\ 
		&&
		$(37, \{66, 68, 69, 71, 73, 79, 80 \}), 
		(38, \{69, 70, 73, 74, 75, 77, 79 \}), 
		(39, \{67, 70, 71, 75 \}), 
		(40, \{71, 72, 73, 75, 76, 77 \}),$\\
		&&
		$(41, \{65, 66, 67, 69, 72, 77, 80 \}), 
		(42, \{66, 67, 68, 73, 75, 77, 78, 79 \}), 
		(43, \{67, 68, 72, 74, 75, 77, 79 \}), 
		(44, \{66, 68, 75, 76, 79, 80 \}),$\\ 
		&&
		$(45, \{65, 68, 69, 70, 73, 76, 77 \}), 
		(46, \{65, 66, 70, 77, 78, 79 \}), 
		(47, \{66, 70, 71, 72, 73, 75, 78 \}), 
		(48, \{72, 73, 74, 79 \})$\\
		
		$70$ & $(3,6)$ &
		$(33, \{86, 87, 88, 92, 93, 94, 96 \}), 
		(34, \{85, 87, 94 \}), 
		(35, \{88, 89, 90, 92, 94, 95, 96 \}), 
		(36, \{87, 89, 96 \}), 
		(37, \{82, 91, 92, 96 \}),$\\ 
		&&
		$(38, \{81, 91, 95 \}), 
		(39, \{81, 82, 84, 92, 93, 94 \}), 
		(40, \{81, 83, 91, 93 \}), 
		(41, \{83, 84, 94, 95, 96 \}), 
		(42, \{83, 93, 95 \}), 
		(43, \{85, 86, 88 \}),$\\ 
		&&
		$(44, \{81, 83, 85, 87 \}), 
		(45, \{81, 87, 88, 90 \}), 
		(46, \{81, 82, 83, 87, 89 \}), 
		(47, \{83, 86, 89, 90 \}), 
		(48, \{81, 83, 84, 85, 89 \})$\\
		
		$100$ & $(4,5)$ &
		$(49, \{65, 68, 69, 73, 74, 78, 80 \}), 
		(50, \{69, 70, 72, 74 \}), 
		(51, \{66, 70, 71, 74, 75, 76, 79 \}), 
		(52, \{66, 68, 71, 72, 76 \}),$\\ 
		&&
		$(53, \{65, 67, 72, 76, 77, 78, 79 \}), 
		(54, \{66, 67, 68, 70, 78, 79, 80 \}), 
		(55, \{66, 67, 70, 71, 74, 75, 79, 80 \}), 
		(56, \{65, 68, 70, 75, 76, 78, 80 \}),$\\ 
		&&
		$(57, \{67, 69, 72, 76, 77, 80 \}), 
		(58, \{65, 70, 72, 74, 77, 78 \}), 
		(59, \{66, 68, 69, 71, 73, 78 \}), 
		(60, \{66, 68, 72, 73, 74, 76 \}),$\\ 
		&&
		$(61, \{68, 72, 73, 76, 77, 80 \}), 
		(62, \{65, 68, 69, 71, 74, 76 \}), 
		(63, \{66, 68, 70, 73, 75, 78 \}), 
		(64, \{67, 70, 71, 76, 78, 80 \})$\\ 
		
		$104$ & $(4,6)$ &
		$(49, \{81, 83, 85, 91, 92, 94 \}), 
		(50, \{81, 82, 85, 86, 87, 89, 91, 93 \}), 
		(51, \{82, 83, 87, 93, 94, 96 \}), 
		(52, \{83, 84, 85, 87, 88, 89, 93, 95 \}),$\\
		&& 
		$(53, \{81, 84, 89, 92, 95, 96 \}), 
		(54, \{85, 87, 89, 90, 91, 95 \}), 
		(55, \{84, 86, 87, 89, 91 \}), 
		(56, \{87, 88, 91, 92, 93, 95 \}),$\\
		&&
		$(57, \{81, 82, 85, 88, 89, 93 \}), 
		(58, \{82, 89, 90, 91, 93, 94, 95 \}), 
		(59, \{82, 83, 84, 85, 87, 90, 95 \}), 
		(60, \{84, 85, 86, 91, 93, 95, 96 \}),$\\ 
		&&
		$(61, \{81, 85, 86, 90, 92, 93, 95 \}), 
		(62, \{81, 82, 84, 86, 93, 94 \}), 
		(63, \{82, 83, 86, 87, 88, 91, 94, 95 \}), 
		(64, \{83, 84, 88, 95, 96 \})$\\
		
		$100$ & $(5,6)$ &
		$(65, \{84, 88, 89, 90, 91, 93, 96 \}), 
		(66, \{82, 90, 91, 92 \}), 
		(67, \{82, 83, 86, 87, 91, 92, 96 \}), 
		(68, \{82, 87, 88, 90, 92 \}),$\\ 
		&&
		$(69, \{81, 84, 88, 89, 92, 93, 94 \}), 
		(70, \{82, 84, 86, 89, 90, 94 \}), 
		(71, \{81, 83, 85, 90, 94, 95, 96 \}), 
		(72, \{84, 85, 86, 88, 96 \}),$\\ 
		&&
		$(73, \{84, 85, 88, 89, 92, 93 \}), 
		(74, \{81, 83, 86, 88, 93, 94, 96 \}), 
		(75, \{82, 85, 87, 90, 94, 95 \}), 
		(76, \{82, 83, 88, 90, 92, 95, 96 \}),$\\ 
		&&
		$(77, \{82, 84, 86, 87, 89, 91, 96 \}), 
		(78, \{81, 84, 86, 90, 91, 92, 94 \}), 
		(79, \{82, 86, 90, 91, 94, 95 \}), 
		(80, \{83, 86, 87, 89, 92, 94 \})$\\ 
		\bottomrule
	\end{tabular}
	}
	\label{table:List96}
\end{table}

\medskip
Received xxxx 20xx; revised xxxx 20xx.
\medskip


\begin{thebibliography}{99}


\bibitem{Alspach1982}
\newblock B.~Alspach and T.~D. Parsons,
\newblock A construction for vertex-transitive graphs,
\newblock \emph{Canad. J. Math.}, \textbf{34} (1982), 307--318.

\bibitem{Bogdanowicz1996}
\newblock Zbigniew R. Bogdanowicz,
\newblock Pancyclicity of connected circulant graphs,
\newblock \emph{J. Graph Theory}, \textbf{22} (1996), 167--174.

\bibitem{BOSMA1997}
\newblock W.~Bosma, J.~Cannon and C.~Playoust,
\newblock The Magma algebra system I: The user language,
\newblock \emph{J. Symb. Comput.}, \textbf{24} (1997), 235--265.

\bibitem{Calderbank1998}
\newblock A.~R. Calderbank, E.~M. Rains, P.~M. Shor and N.~J.~A. Sloane,
\newblock Quantum error correction via codes over $GF(4)$,
\newblock \emph{IEEE Trans. Inform. Theory}, \textbf{44} (1998), 1369--1387.

\bibitem{Danielsen2006}
\newblock L.~E. Danielsen and M.~G. Parker,
\newblock On the classification of all self-dual
additive codes over ${GF}(4)$ of length up to $12$,
\newblock \emph{J. Combin. Theory, Ser.~A}, \textbf{113} (2006), 1351--1367.

\bibitem{EPR}
\newblock A. Einstein, B. Podolsky, and N. Rosen, 
\newblock Can quantum-mechanical description of physical reality be considered complete?,
\newblock \emph{Phys. Rev.}, \textbf{47} (1935), 777--780.

\bibitem{Ezerman2021}
\newblock Martianus F. Ezerman, 
\newblock Quantum error-control codes, 
\newblock Chapter 27 in 
\newblock W. C. Huffman, J.-L. Kim and P Sol\'{e} (Eds.),
\newblock \emph{Concise Encyclopedia of Coding Theory}, $1^{\rm st}$ edition, 
\newblock Chapman and Hall (CRC Press), Boca Raton, 2021.

\bibitem{Grassl:codetables}
\newblock Markus Grassl,
\newblock Bounds on the minimum distance of linear codes and quantum codes.
\newblock Online available at \url{http://www.codetables.de}, 2007,
\newblock accessed on 2021-04-30.

\bibitem{Grassl20}
\newblock Markus Grassl,
\newblock Algebraic quantum codes: Linking quantum mechanics and discrete mathematics,
\newblock \emph{Int. J. Comput. Math.: Comput. Syst. Theory}, (2020)\\
\newblock DOI: 10.1080/23799927.2020.1850530  

\bibitem{Grassl2017}
\newblock M. Grassl and M. Harada,
\newblock New self-dual additive $\mathbb{F}_4$-codes constructed from circulant graphs,
\newblock \emph{Discrete Math.}, \textbf{340} (2017), 399--403.

\bibitem{GK2004}
\newblock T. A. Gulliver and J-L. Kim,
\newblock Circulant based extremal additive self-dual codes over $GF(4)$, 
\newblock \emph{IEEE Trans. Inform. Theory}, \textbf{50} (2004), 359--366.

\bibitem{Hackl}
\newblock J\"{u}rgen Hackl,
\newblock TikZ-network manual, preprint,
\newblock \arXiv{1709.06005}.
\newblock Source code at \url{https://github.com/hackl/tikz-network}.


\bibitem{LSW2013}
\newblock C.~H. Li, S.~J. Song and D.~J.~Wang, 
\newblock A characterization of metacirculants, 
\newblock \emph{J. Combin. Theory, Ser.~A}, \textbf{120} (2013), 39--48.

\bibitem{Marusic2003}
\newblock Dragan Maru\v{s}i\v{c},
\newblock On $2$-arc-transitivity of Cayley graphs,
\newblock \emph{J. Combin. Theory, Ser.~B}, \textbf{87} (2003), 162--196. 
\newblock Corrigendum in 
\newblock \emph{J. Combin. Theory, Ser.~B}, \textbf{96} (2006), 761--764. 

\bibitem{Monakhova2012}
\newblock \`{E}miliya A. Monakhova,
\newblock A survey on undirected circulant graphs,
\newblock \emph{Discrete Math. Algorithms Appl.}, \textbf{04} (2012), 1250002.

\bibitem{NRS06}
\newblock G. Nebe, E. M. Rains and N. J. A. Sloane,
\newblock \emph{Self-Dual Codes and Invariant Theory}, 
\newblock Springer, Berlin, Heidelberg, 2006.

\bibitem{Saito2019}
\newblock Ken Saito,
\newblock Self-dual additive $\mathbb{F}_4$-codes of lengths up to $40$ represented by circulant graphs,
\newblock \emph{Adv. Math. Commun.}, \textbf{13} (2019), 213--220.

\bibitem{Schlingemann2002}
\newblock Dirk Schlingemann,
\newblock Stabilizer codes can be realized as graph codes,
\newblock \emph{Quantum Info. Comput.}, \textbf{2} (2002), 1533--7146.

\bibitem{Varbanov2008}
\newblock Zlatko Varbanov,
\newblock Additive circulant graph codes over $GF(4)$, 
\newblock \emph{Math. Maced.}, \textbf{6} (2008), 73--79.

\end{thebibliography}
\end{document}